\newtheorem{theorem}{Theorem}
\newtheorem{lemma}[theorem]{Lemma}
\newtheorem{corollary}[theorem]{Corollary}
\newtheorem{proposition}[theorem]{Proposition}
\begin{document}

\title{Resolute refinements of social choice correspondences\footnote{Daniela Bubboloni was supported by GNSAGA of INdAM.
} }
\author{\textbf{Daniela Bubboloni}\\
%EndAName
{\small {Dipartimento di Scienze per l'Economia e  l'Impresa} }\\
\vspace{-6mm}\\
{\small {Universit\`{a} degli Studi di Firenze} }\\
\vspace{-6mm}\\
{\small {via delle Pandette 9, 50127, Firenze, Italy}}\\
\vspace{-6mm}\\
{\small {e-mail: daniela.bubboloni@unifi.it}}\\
\vspace{-6mm}\\
{\small tel: +39 055 2759667} \and \textbf{Michele Gori}
 \\
%EndAName
{\small {Dipartimento di Scienze per l'Economia e  l'Impresa} }\\
\vspace{-6mm}\\
{\small {Universit\`{a} degli Studi di Firenze} }\\
\vspace{-6mm}\\
{\small {via delle Pandette 9, 50127, Firenze, Italy}}\\
\vspace{-6mm}\\
{\small {e-mail: michele.gori@unifi.it}}\\
\vspace{-6mm}\\
{\small tel: +39 055 2759707}}

\maketitle

\begin{abstract}
\noindent Many classical social choice correspondences are resolute only in the case of two alternatives and an odd number of individuals. Thus, in most cases, they admit several resolute refinements, each of them naturally interpreted as a tie-breaking rule, satisfying different properties. In this paper we look for classes of social choice correspondences which admit resolute refinements fulfilling suitable versions of anonymity and neutrality.
In particular, supposing that individuals and alternatives have been exogenously partitioned into subcommittees and subclasses, we find out arithmetical conditions on the sizes of subcommittees and subclasses that are necessary and sufficient for making any social choice correspondence which is efficient, anonymous with respect to subcommittees, neutral with respect to subclasses and possibly immune to the reversal bias admit a resolute refinement sharing the same properties. 
\end{abstract}

\vspace{4mm}

\noindent \textbf{Keywords:} social choice correspondence; resoluteness;  anonymity; neutrality; reversal bias; group theory.

\vspace{2mm}

\noindent \textbf{JEL classification:} D71.

\section{Introduction}

Consider a committee having $h\ge 2$ members who have to select one or more elements within a set of $n\ge 2$ alternatives. Assume further that the procedure used to make that choice only depends on committee members' preferences on alternatives and that such preferences are expressed as linear orders. Calling preference profile any list of $h$ preferences, each of them associated with one of the individuals in the committee, the selection procedure can be represented by a social choice correspondence ({\sc scc}), that is, a function from the set of preference profiles to the set of nonempty subsets of the set of alternatives.

In the literature many {\sc scc}s have been proposed and studied. Most of them satisfy three requirements which are considered strongly desirable by social choice theorists, namely efficiency, anonymity and neutrality. Recall that a {\sc scc} is said efficient if, for every preference profile, it does not select an alternative which is unanimously beaten by another alternative; anonymous if the identities of individuals are irrelevant to determine the social outcome, that is, it selects the same social outcome for any pair of preference profiles such that we get one from the other by permuting individual names; neutral if alternatives are equally treated, that is, for every pair of preference profiles such that we get one from the other by permuting alternative names, the social outcomes associated with them coincide up to the considered permutation.

Since in many cases collective decision processes are required to select a unique alternative, an important role in social choice theory is played by resolute {\sc scc}s, namely those {\sc scc}s associating a singleton with any preference profile.
Unfortunately, resoluteness is rarely satisfied by classical {\sc scc}s. For instance, as described in Section \ref{B-C-K-rb},  the Borda, the Copeland, the Minimax and the
Kemeny {\sc scc}s  are all efficient, anonymous and neutral but they are resolute if and only if the number of alternatives is two and the number of individuals is odd.
As a consequence, if the members of a committee want to use a classical {\sc scc} to make their collective choice and a unique outcome is needed, then they also need a tie-breaking rule to apply to the alternatives selected by the chosen {\sc scc}.

The concept of tie-breaking rule can be naturally formalized in terms of refinement of a {\sc scc}.
Let $C$ and $C'$ be two {\sc scc}s. We say that $C'$ is a refinement of $C$ if, for every preference profile, the set of alternatives selected by $C'$ is a subset of the set of alternatives selected by $C$. Thus, refinements of $C$ can be thought as a way to reduce the ambiguity in the choice made by $C$. In particular, resolute refinements of $C$ eliminate any ambiguity leading to a unique winner, so that they can be identified with tie-breaking rules.
Of course, if $C$  is not resolute, then it admits more than one resolute refinement. Thus, an interesting issue to address is to understand whether it is possible to find resolute refinements of $C$ which satisfy suitable properties making them more appealing.

In this paper we focus on the properties of efficiency, anonymity and neutrality previously described as well as on the immunity to the reversal bias. Recall that a {\sc scc} is said immune to the reversal bias if it never associates the same singleton both with a preference profile and with the one obtained by it assuming a complete change in each committee member's mind about his/her own ranking of alternatives (that is, the best alternative gets the worst, the second best alternative gets the second worst, and so on). The immunity to the reversal bias, first introduced by Saari (1994), has not been widely explored yet and there are actually only a couple of papers completed devoted to it, namely Saari and Barney (2003) and Bubboloni and Gori (2016) to which we refer for a wide discussion on the significance of such a property.

It is immediate to understand that any resolute refinement of an efficient {\sc scc} is efficient. However, resolute refinements of anonymous [neutral; immune to the reversal bias] {\sc scc}s  are not generally anonymous [neutral; immune to the reversal bias].
That happens, for instance, for resolute refinements built using two standard methods to break ties. The first method, proposed by Moulin (1988), is based on a tie-breaking agenda, that is, an exogenously given ranking of the alternatives; the second one, is instead based on the preferences of one of the individuals appointed as tie-breaker.
Of course, the resolute refinements built through a tie-breaking agenda fail to be neutral while the ones built through  a tie-breaker fail to be anonymous. Note also that an interesting result due to Moulin (1983) states that the existence of an efficient, anonymous, neutral and resolute {\sc scc} is equivalent to the strong condition $\gcd(h,n!)=1$ (Theorem \ref{moulin-teo}). Thus, in most cases, given an efficient {\sc scc}, we cannot get any anonymous and neutral resolute refinement of it. As a consequence, in those cases, we can only look for {\sc scc}s satisfying weaker versions of the principles of anonymity and neutrality.

Bubboloni and Gori (2015) propose a possible way to weaken the principle of anonymity by assuming that individuals are divided into subcommittees and requiring that, within each subcommittee, individuals equally influence the final collective decision, while people in different subcommittees may have a different decision power. They also propose a weaker version of the principle of neutrality by assuming that alternatives are divided into subclasses and requiring that within each subclass alternatives are equally treated,  while alternatives in different subclasses may have a different treatment\footnote{It is worth mentioning that some weak versions of the principle of anonymity and neutrality are also introduced by Campbell and Kelly (2011, 2013).}. These versions of anonymity and neutrality are
certainly natural and actually used in many practical collective
decision processes. That happens, for instance, when a committee
has a president or when a committee 
evaluates job candidates discriminating on their gender. In the former example committee members can be thought to be
divided in two subcommittees (the president in the first, all the
others in the second) with anonymous individuals within each
of them; in the latter example alternatives can be thought to be
divided in two subclasses (the women in the first, the men in the
second) such that no alternative has an exogenous advantage with
respect to the other alternatives in the same subclass.

In this paper, we fist find out arithmetical conditions on the sizes of subcommittees and subclasses that are necessary for the existence of a  resolute {\sc scc} which is efficient, anonymous with respect to subcommittees and neutral with respect to subclasses [and immune to the reversal bias]
(Theorem \ref{super}). We then prove that the same conditions assure that any efficient {\sc scc} which is anonymous with respect to subcommittees and neutral with respect to subclasses [and immune to the reversal bias] admits a resolute refinement having the same properties (Theorem \ref{super-super}). Those results, among other things, generalize the previously mentioned theorem by Moulin.

While the proof of the first result is simple and natural, the proof of the second one, along with other interesting results, require a certain amount of work. The arguments are strongly based on the algebraic approach  developed in Bubboloni and Gori (2014, 2015) where, in the framework of social welfare functions, the notion of action of a group on a set is naturally and fruitfully used to study problems concerning anonymity and neutrality and weaker versions of them, along with reversal symmetry.
Here we adapt that algebraic reasoning to the framework of {\sc scc}s by defining a general and wide-ranging notion of consistency of a  {\sc scc} with respect to a group (Section \ref{scc}), which includes anonymity with respect to subcommittees, neutrality with respect to subclasses and immunity to the reversal  bias as particular instances.
That notion of consistency provides a unified framework which allows on the one hand to make
proofs simpler and more direct, and on the other hand to obtain very general results (Theorem \ref{general}).
It is worth noting that the algebraic approach developed in the paper also provides methods to potentially build all
the desired resolute refinements. In Sections \ref{example53} and \ref{example33} we discuss
some examples that explain how the theoretical results can be
explicitly applied.% in some simple cases.

%We finally note that all the results from general group theory and basic facts about permutation groups used in the paper can be found in Jacobson (1974, Chapter 1). Moreover, until the end of Section \ref{gen-teo}, where the main theorems of the paper are stated and commented, the reader only need to know the definition of group, the cancellation law for the group operation and the standard way to write permutations through disjoint cycles.

\section{Preliminary definitions and facts}

Throughout the paper, given $A$, $B$ and $C$ sets and $f:A\to B$ and $g:B\to C$ functions, we denote by $gf$ the right-to-left composition of $f$ and $g$, that is, the function from $A$ to $C$ defined, for every $a\in A$, as $gf(a)=g(f(a))$.

%Let $A$, $B$ and $C$ be sets. Given $f:A\to B$ and $g:B\to C$ we denote by $gf$ the right-to-left composition of $f$ and $g$, that is, the function from $A$ to $C$ defined, for every $a\in A$, as $gf(a)=g(f(a))$.

\subsection{Groups and permutations}\label{plo}

All the results from group theory used in the paper can be found in Jacobson (1974, Chapter 1).
Below, we briefly recall some well known concepts that will be sufficient for a complete comprehension of the paper until the end of Section \ref{main}, where the main theorems of the paper are stated and commented. 

A finite group $G$ is a finite set endowed with a binary operation satisfying associativity, admitting neutral element $1_G$ and such that every element has inverse. Consider $g\in G$. We set $g^0=1_G$ and, for every $s\in \mathbb{N}$, we denote by $g^s$ the product of $g$ by itself $s$ times. We also denote by $|g|$ the order of $g$, that is, the minimum $s\in \mathbb{N}$ such that $g^s=1_G$.
A subset $U$ of $G$ is called a subgroup of $G$ if $U$ is closed under the operation in $G$, that is, if for every $u_1,u_2\in U,$ we have $u_1u_2\in U.$ If $U$ is a subgroup of $G$, we use the notation $U\leq G.$

Let $X$ be a nonempty finite set.  Then $\mathrm{Sym}(X)$ denotes  the group of the bijective functions from $X$ to itself, with product defined, for every $\sigma_1,\sigma_2\in\mathrm{Sym}(X)$, by $\sigma_1\sigma_2\in \mathrm{Sym}(X)$. The neutral element of $\mathrm{Sym}(X)$ is given by the identity function, denoted by $id$.
$\mathrm{Sym}(X)$ is called the symmetric group on $X$ and its elements are called permutations on $X$. 
For every $k\in\mathbb{N}$, the group $\mathrm{Sym}(\{1,\dots,k\})$ is simply denoted by $S_{k}$. The elements in $S_k$ are usually written via the standard representation through disjoint cycles. For instance, $\psi=(134)(26)\in S_6$ is the permutation defined by
\[
\psi(1)=3,\; \psi(3)=4, \;\psi(4)=1,\; \psi(2)=6,\; \psi(6)=2,\; \psi(5)=5.
\]

\subsection{Preference relations}\label{preference-relations}

\noindent From now on, let $n\in \mathbb{N}$ with $n\ge 2$ be fixed, and let $N=\{1,\ldots,n\}$ be the set of names of alternatives. 

A {\it preference relation} on $N$ is a linear order on $N$, that is, a complete, transitive and antisymmetric binary relation. The set of preference relations on $N$ is denoted by $\mathcal{L}(N)$. 
Given $q\in \mathcal{L}(N)$ and $x,y\in N$, we usually write $x\succeq_{q}y$ instead of $(x,y)\in q$, as well as $x\succ_{q}y$ instead of $(x,y)\in q$ and $x\neq y$, and we say that $x$ is  preferred to $y$ according to $q$ if $x\succ_{q} y$.

Let $q\in\mathcal{L}(N)$ be fixed. 
For every $\psi \in S_n$, we define $\psi q$ as the element of $\mathcal{L}(N)$ such that, for every $x,y\in N$, $(x,y)\in \psi q$ if and only if $(\psi^{-1}(x),\psi^{-1}(y))\in q$. Consider the {\it order reversing permutation} in $S_n$, that is, the permutation $\rho_0\in S_n$ defined, for every $r\in \{1,\ldots,n\}$, as $\rho_0(r)=n-r+1$. Obviously, we have $|\rho_0|=2$.  Note that
$\rho_0$ has exactly  one fixed point when $n$ is odd and no fixed point when $n$ is even. For instance, if $n=3$, we have $\rho_0= (13)$ and $2$ is the only fixed point; if $n=4$, we have $\rho_0= (14)(23)$ and no fixed point. Define $\Omega=\{id,\rho_0\}$, where $id\in S_n$. Note that $\Omega\leq S_n$ is a commutative group  which admits as unique subgroups $\{id\}$ and $\Omega$.
We define $q\rho_0\in\mathcal{L}(N)$ as the element in $\mathcal{L}(N)$ such that, for every $x,y\in N$,
$(x,y)\in q\rho_0$ if and only if $(y,x)\in q$;
$q\; id=q$. 
Note that, by definition, for every $x,y\in N$ and $\psi\in S_n$, we have that
$x\succ_{q}y$ if and only if $\psi(x)\succ_{\psi q}\psi(y)$; $x\succ_{q} y$ if and only if $y\succ_{q\rho_0} x$.

The function $\mathrm{rank}_{q}:N\to \{1,\ldots,n\}$ is defined, for every $x\in N$, by
\[
\mathrm{rank}_q(x)=|\{y\in N: y\succeq_q x\}|.
\]
 Such a function is called the rank of $x\in N$ in $q$ and is bijective. Note that, for every $\psi \in S_n$,
$\mathrm{rank}_{\psi q}(x)=\mathrm{rank}_{q}(\psi^{-1}(x))$  and
$\mathrm{rank}_{q\rho_0}(x)=\rho_0(\mathrm{rank}_{q}(x)).$

Consider now the set of vectors with $n$ distinct components in $N$ given by
\[
\mathcal{V}(N)=\left\{(x_r)_{r=1}^n\in N^n: x_{r_1}=x_{r_2}\Rightarrow r_1=r_2\right\},
\]
and think each vector $(x_r)_{r=1}^n\in \mathcal{V}(N)$ as a column vector, that is,
 \[
(x_r)_{r=1}^n=\begin{bmatrix}
x_1\\ \vdots \\ x_n
\end{bmatrix}=[x_1,\ldots,x_n]^T.
\]
The function $f_1:\mathcal{V}(N)\to \mathcal{L}(N)$ associating with  $(x_r)_{r=1}^n\in\mathcal{V}(N)$ the preference relation
\[
\{(x_{r_1},x_{r_2})\in N\times N: r_1,r_2\in \{1,\ldots,n\}, r_1\le r_2\},%\in \mathcal{L}(N),
\]
and the function
$f_2:S_n\to \mathcal{L}(N)$ associating with $\sigma\in S_n$  the preference relation
\[
\{(\sigma(r_1),\sigma(r_2))\in N\times N: r_1,r_2\in \{1,\ldots,n\}, r_1\le r_2\}%\in \mathcal{L}(N),
\]
are bijective, so that, in particular, $|S_n|=|\mathcal{V}(N)|=|\mathcal{L}(N)|=n!$.
Note that
\[
f_1^{-1}(q)=\{(x_r)_{r=1}^n\in\mathcal{V}(N): \forall r\in\{1,\ldots,n\}, \mathrm{rank}_q(x_r)=r \},
\]
\[
 f_2^{-1}(q)=\{\sigma\in S_n: \forall r\in\{1,\ldots,n\}, \mathrm{rank}_q(\sigma(r))=r \}.
\]
Note also that, for every $\psi\in S_n$ and $\rho\in\Omega$,  if $f_1^{-1}(q)=[x_1,\ldots,x_n]^T$, then
\[
f_1^{-1}(\psi q)=[\psi(x_1),\ldots,\psi(x_n)]^T,\quad \mbox{and}\quad f_1^{-1}(q\rho)=[x_{\rho(1)},\ldots,x_{\rho(n)}]^T;
\]
if $f_2^{-1}(q)=\sigma$, then
\[
f_2^{-1}(\psi q)=\psi\sigma,\quad \mbox{and}\quad f_2^{-1}(q\rho)=\sigma \rho.
\]
Thus, by the functions $f_1$ and $f_2$ we are allowed to identify the preference relation $q$ both with the vector $f_1^{-1}(q)$ and with the permutation  $f_2^{-1}(q)$, and to naturally interpret the products $\psi q$ and $q\rho$ in $\mathcal{V}(N)$ and in $S_n$.
For instance, if $n=4$ and
\[
q=\left\{(4,2),(2,1),(1,3),(4,1),(4,3),(2,3),(4,4),(2,2), (1,1),(3,3)\right\}\in\mathcal{L}(\{1,2,3,4\}),
\]
then $q$ is identified with both
$
f_1^{-1}(q)=[4,2,1,3]^T \in \mathcal{V}(\{1,2,3,4\})
$ and $
f_2^{-1}(q)=(143)\in S_4,
$
so that 4 has rank 1, 2 has rank 2, 1 has rank 3, and 3 has rank 4 in $q$. Thus, if
$\psi=(342)\in S_4$, then we can write
\[
\psi q=(342)[4,2,1,3]^T=[2,3,1,4]^T \quad \mbox{and}\quad q\rho_0=[4,2,1,3]^T(14)(23)=[3,1,2,4]^T,
\]
as well as
\[
\psi q=(342)(143)=(123) \quad \mbox{and}\quad q\rho_0=(143)(14)(23)=(132).
\]
On the one hand,  identifying preference relations with vectors makes computations easy and intuitive.
On the other hand, identifying preference relations with permutations allows to transfer the group properties of $S_n$
to the products between preference relations and permutations. In particular, by associativity and cancellation laws, for every $\psi_1,\psi_2\in S_n$ and $\rho_1,\rho_2\in\{id, \rho_0\}$, we have that $\psi_1 q=\psi_2 q$ if and only if $\psi_1=\psi_2$;  $q\rho_1=q\rho_2$ if and only if $\rho_1=\rho_2$; $(\psi_2\psi_1)q=\psi_2(\psi_1 q)$; $q(\rho_1\rho_2)=(q\rho_1)\rho_2$; $(\psi_1 q)\rho_1=\psi_1(q \rho_1)$.
For every $\psi\in S_n$, $\rho\in \Omega$ and $X\subseteq \mathcal{L}(N)$, we define $\psi X \rho=\{\psi q \rho\in \mathcal{L}(N): q\in X\}$. 
Note that, for every $\psi\in S_n$, $\rho\in \Omega$ and $X\subseteq Y\subseteq \mathcal{L}(N)$, $\psi X \rho\subseteq \psi Y \rho$ so that, in particular, $\psi\mathcal{L}(N)\rho=\mathcal{L}(N)$.

Given now $\psi\in S_n$ and $\rho\in\{id,\rho_0\}$, we finally emphasize that the above discussion makes the products  $\psi q$ and $q\rho$ have interesting interpretations. Indeed, if $q$ represents the preferences of a certain individual, then $\psi q$  represents the preferences that the individual would have if, for every $x\in N$, alternative $x$ were called $\psi(x)$; $q\rho$ represents the preferences that the individual would have if, for every $r\in \{1,\ldots,n\}$, the alternative whose rank is $r$ is moved to rank $\rho(r)$.

\subsection{Preference profiles}\label{model}

\noindent From now on, let $h\in \mathbb{N}$ with $h\ge 2$ be fixed, and let $H=\{1,\ldots,h\}$ be the set of names of individuals.
A {\it preference  profile} is an element of $\mathcal{L}(N)^h$. The set $\mathcal{L}(N)^h$ is denoted by $\mathcal{P}$.
If $p\in\mathcal{P}$ and $i\in H$, the $i$-th component of $p$ is denoted by $p_i$ and represents the preferences of individual $i$.
 Any $p\in\mathcal{P}$ can be identified with the $n\times h$ matrix whose $i$-th column is the column vector representing $p_i$ for all $i\in H$.

Let us consider the set $G=S_h\times S_n\times \Omega$. Note that $G$ is a group through component-wise multiplication, that is, defining for every $(\varphi_1,\psi_1,\rho_1)\in G$ and $(\varphi_2,\psi_2,\rho_2)\in G$, $(\varphi_1,\psi_1,\rho_1)(\varphi_2,\psi_2,\rho_2)=
(\varphi_1\varphi_2,\psi_1\psi_2,\rho_1\rho_2).$
For every $(\varphi,\psi,\rho)\in G$ and $p\in \mathcal{P}$, define $p^{(\varphi,\psi,\rho)} \in \mathcal{P}$ as the preference profile such that, for every $i\in H$,
\[
(p^{(\varphi,\psi,\rho)})_i=\psi p_{\varphi^{-1}(i)}\rho.
\]
Thus, the preference profile $p^{(\varphi,\psi,\rho)}$ is obtained by $p$ according to the following rules (to be applied in any order): for every $i\in H$, individual $i$ is renamed $\varphi(i)$; for every $x\in N$, alternative $x$ is renamed $\psi(x)$; for every $r\in\{1,\ldots,n\}$, alternatives whose rank is $r$ are moved to rank $\rho(r)$.

For instance, if $n=3$, $h=7$ and
\[
p=
\left[
\begin{array}{ccccccccc}
3 & 1 & 2 & 3 &2&1&1\\
2 & 2 & 1 & 2 &3&3&3\\
1 & 3 & 3 & 1 &1&2&2
\end{array}
\right],\quad \varphi=(134)(25),\quad \psi=(12),\quad \rho=\rho_0=(13),
\]
then we have
\[
p^{(\varphi,id,id)}=
\left[
\begin{array}{ccccccc}
3 & 2 &3  &2  &1& 1& 1\\
2 & 3 & 2 & 1 & 2& 3& 3\\
1 & 1 & 1 & 3 & 3& 2& 2
\end{array}
\right],\quad p^{(id,\psi,id)}=
\left[
\begin{array}{ccccccccc}
 3& 2 & 1 & 3 &1&2& 2\\
 1& 1 &2  & 1 &3& 3& 3\\
 2& 3 & 3 & 2 &2& 1& 1\\
 \end{array}
\right],
\]
\[
p^{(id,id,\rho_0)}=
\left[
\begin{array}{ccccccc}
1 & 3 & 3 & 1 &1&2& 2\\
2 & 2 & 1 & 2 &3& 3& 3\\
3 & 1 & 2 & 3 & 2& 1& 1 \\
 \end{array}
\right],\quad
p^{(\varphi,\psi,\rho_0)}=
\left[
\begin{array}{ccccccc}
2 & 2 & 2 & 3 &3&1&1\\
1 & 3 & 1 & 2 &1&3&3\\
3 & 1 & 3 & 1 & 2 &2&2\\
 \end{array}
\right].
\]
As it is easy to verify, if $n=2$, then $p^{(id,\rho_0,id)}=p^{(id,id,\rho_0)}$ for all $p\in\mathcal{P}$; if $n\ge 3$, then
there do not exist $\varphi\in S_h$ and $\psi\in S_n$ such that, for every $p\in\mathcal{P}$, $p^{(\varphi,\psi,id)}=p^{(id,id,\rho_0)}$. In other words,
top-down reversing preference profiles cannot be reduced, in general, to a change in individuals and alternatives names.
In what follows, we write the $i$-th component  $p^{(\varphi,\psi,\rho)}$ simply as $p^{(\varphi,\psi,\rho)}_i,$ instead of $(p^{(\varphi,\psi,\rho)})_i$.

\subsection{Social choice correspondences}\label{scc}

A {\it social choice correspondence} ({\sc scc}) is a function from $\mathcal{P}$ to the set of the nonempty subsets of $N$.
The set of {\sc scc}s is denoted by $\mathfrak{C}$.
Given  $C\in\mathfrak{C}$, we say that $C$ is {\it resolute} if, for every $p\in\mathcal{P}$, $|C(p)|=1$.
We say that $C'\in \mathfrak{C}$ is a {\it refinement} of $C$ if, for every $p\in\mathcal{P}$, $C'(p)\subseteq C(p)$. Note that $C$ admits a unique resolute refinement if and only if $C$ is resolute.

Consider now $Par\in\mathfrak{C}$ defined, for every $p\in\mathcal{P}$, as
\begin{equation}\label{pareto-eq}
Par(p)=\{x\in N: \forall \,y\in N\setminus\{x\}, \; \exists \,i\in H \mbox{ such that } x\succ_{p_i} y\}.
\end{equation}
$Par$ is called the Pareto {\sc scc}. Note that $Par(p)$ contains the alternatives ranked first in at least one $p_i.$ In particular, $Par(p)\neq\varnothing$.

We say that $C\in\mathfrak{C}$ is {\it efficient} if $C$ is a refinement of $Par$.
Thus, an efficient {\sc scc} never selects an alternative which is unanimously beaten by another one.
Of course, every refinement of an efficient {\sc scc} is efficient.

Given a partition\footnote{A partition of a nonempty set $X$ is a set of nonempty pairwise disjoint subsets of $X$ whose union is $X$.} $Y=\{Y_j\}_{j=1}^s$ of $H$, where $s\in\mathbb{N}$, we define the set
\[
V(Y)=\left\{\varphi\in S_h : \varphi(Y_j)=Y_j\hbox{ for all } j\in\{1,\dots,s\} \right\},
\]
and we say that $C$ is anonymous with respect to $Y$, briefly {\it $Y$-anonymous}, if, for every $p\in\mathcal{P}$ and  $\varphi\in V(Y)$, we have
\[
C(p^{(\varphi,id,id)})=C(p).
\]
Thus, interpreting the elements of $Y$ as subcommittees, we have that $Y$-anonymous {\sc scc}s attribute the same decision power to individuals in the same subcommittee. 
Note that $V(Y)\leq S_h$.

Given now a partition $Z=\{Z_k\}_{k=1}^t$ of $N$, where $ t\in\mathbb{N}$, we define the set
\[
W(Z)=\left\{\psi\in S_n :  \psi(Z_k)=Z_k \hbox{ for all }  k\in\{1,\dots,t\} \right\},
\]
and we say that $C$ is neutral with respect to $Z$, briefly {\it $Z$-neutral}, if, for every $p\in\mathcal{P}$ and  $\psi\in W(Z)$, we have\footnote{In order to simplify the notation, given $\psi\in S_n$ and $X\subseteq N$, we write $\psi X$ instead of $\psi(X)$.}
\[
C(p^{(id,\psi,id)})=\psi C(p).
\]
Thus, interpreting the elements of $Z$  as subclasses, we have that $Z$-neutral {\sc scc}s cannot distinguish among alternatives in the same subclass. Note that $W(Z)\leq S_n.$

Of course, if $Y=\{H\}$ [$Z=\{N\}$], then $V(H)=S_h$ [$W(Z)=S_n$] so that the classical requirement of anonymity [neutrality] corresponds to the one of $Y$-anonymity [$Z$-neutrality]. Moreover, $C$ is anonymous [neutral] if and only if $C$ is  $Y$-anonymous [$Z$-neutral] for all partition $Y$ of $H$ [$Z$ of $N$].

Following Bubboloni and Gori (2016), we finally say that $C$ is {\it immune to the reversal bias} if, for every $p\in\mathcal{P}$ with $|C(p)|=1$, we have
\[
C(p^{(id,id,\rho_0)})\neq C(p).
\]
In other words, a  {\sc scc} is immune to the reversal bias if it never associates the same unique winner both with a preference profile $p$ and with the preference profile obtained by $p$ reversing every individual preference.

\section{Analysis of some classical {\sc scc}s}
\label{B-C-K-rb}

The Pareto ($Par$), the Borda ($Bor$), the Copeland ($Cop$), the Minimax ($Min$) and the Kemeny ($Kem$) {\sc scc}s are classical {\sc scc}s deeply studied in the literature. Recall that $Par$ is defined in \eqref{pareto-eq} while, for every $p\in\mathcal{P}$, we have that\footnote{With Borda {\sc scc} we mean the well-known Borda count. All the other definitions are taken from Fishburn (1977).}
\[
\begin{array}{l}
Bor(p)=\underset{x\in N}{\mathrm{argmax}} \displaystyle{\sum_{i=1}^h} \left(n-\mathrm{rank}_{p_i}(x)\right),\\
\\
Cop(p)=\underset{x\in N}{\mathrm{argmax}}\, \left(\left|\left\{y\in N: w_p(x,y)\ge \left\lceil \frac{h+1}{2}\right\rceil\right\}\right|-\left|\left\{y\in N: w_p(y,x)\ge \left\lceil \frac{h+1}{2}\right\rceil\right\}\right|\right),\\
\\
Min(p)=\underset{x\in N}{\mathrm{argmax}} \underset{y\in N\setminus\{x\}}{\min} w_p(x,y),\\
\\
Kem(p)=\Big\{x\in N: \exists\, q^*\in \underset{q\in\mathcal{L}(N)}{\mathrm{argmax}}\,
k_p(q) \mbox{ with }\mathrm{rank}_{q^*}(x)=1\Big\}.
\end{array}
\]
where, for every $p\in \mathcal{P}$, $x,y\in N$ and $q\in\mathcal{L}(N)$, we have set
\[
w_p(x,y)=|\{i\in H:  x\succ_{p_i} y \}|,
\quad
k_p(q)=\sum_{x\succ_q y}w_p(x,y).
\]
It is well-known that the following proposition holds true.
\begin{proposition}\label{ean}
$Par$, $Bor$, $Cop$, $Min$ and $Kem$ are efficient, anonymous and neutral.
\end{proposition}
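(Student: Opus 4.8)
The plan is to check, for each $C\in\{Par,Bor,Cop,Min,Kem\}$, the three properties one at a time. For efficiency it suffices to show that $C$ is a refinement of $Par$. For anonymity and neutrality the key observation, which I would isolate once, is that $Bor$, $Cop$, $Min$ and $Kem$ are all built from data that is invariant under renaming individuals and equivariant under renaming alternatives, so both properties will reduce to bookkeeping.

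For efficiency, the case $C=Par$ is immediate, and the cases $C=Bor,Cop$ are direct: if $x\in N$ is unanimously beaten by $y\in N$, then $\mathrm{rank}_{p_i}(y)<\mathrm{rank}_{p_i}(x)$ for every $i\in H$, so the Borda score of $y$ strictly exceeds that of $x$; and, by transitivity, $w_p(y,z)\ge w_p(x,z)$ and $w_p(z,x)\ge w_p(z,y)$ for every $z\in N$, while $w_p(y,x)=h\ge\left\lceil\frac{h+1}{2}\right\rceil$ and $w_p(x,y)=0$, so the Copeland score of $y$ is at least the Copeland score of $x$ plus one. In either case $x$ cannot lie in the relevant $\mathrm{argmax}$. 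For $C=Min$ the direct argument fails, since the dominator may itself be dominated; instead I would note that $Par(p)=\{x\in N:\min_{z\in N\setminus\{x\}}w_p(x,z)\ge 1\}$ and that this set is nonempty, so the maximal Minimax value is at least $1$, whereas any Pareto-dominated $x$ has $w_p(x,y)=0$ for its dominator $y$ and hence Minimax value $0$; thus $x\notin Min(p)$.

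The efficiency of $Kem$ is the step I expect to require real work, and it is the main obstacle. I would prove the stronger claim that no maximizer $q^*$ of $k_p$ ranks a Pareto-dominated alternative above its dominator; since a top alternative of $q^*$ is ranked above all others, this gives $Kem(p)\subseteq Par(p)$. Suppose $q^*\in\mathrm{argmax}_{q\in\mathcal{L}(N)}k_p(q)$ satisfies $x\succ_{q^*}y$ with $y$ unanimously preferred to $x$, so $w_p(x,y)=0$, and let $Z$ be the set of alternatives lying strictly between $x$ and $y$ in $q^*$. Forming $q'$ by moving $y$ to the position immediately above $x$, only the pairs $\{y,w\}$ with $w\in Z\cup\{x\}$ change orientation, and a short count gives $k_p(q')-k_p(q^*)=h+\sum_{z\in Z}\big(h-2w_p(z,y)\big)$; symmetrically, moving $x$ to the position immediately below $y$ gives $q''$ with $k_p(q'')-k_p(q^*)=h+\sum_{z\in Z}\big(h-2w_p(x,z)\big)$. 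Both differences are $\le 0$ by optimality of $q^*$; adding the inequalities and halving yields $h+\sum_{z\in Z}\big(h-w_p(z,y)-w_p(x,z)\big)\le 0$. But no individual can prefer both $x$ to $z$ and $z$ to $y$, since transitivity would then force a preference of $x$ over $y$, contradicting $w_p(x,y)=0$; hence $w_p(x,z)+w_p(z,y)\le h$ for every $z\in Z$, every summand is nonnegative, and we get $h\le 0$, a contradiction.

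Finally, anonymity and neutrality follow from the right invariances. Since $p^{(\varphi,id,id)}_i=p_{\varphi^{-1}(i)}$, the set $\{p_i:i\in H\}$, the sums $\sum_{i\in H}\mathrm{rank}_{p_i}(\cdot)$ and the tallies $w_p(\cdot,\cdot)$ (hence also $k_p$) are unchanged when $p$ is replaced by $p^{(\varphi,id,id)}$, giving $C(p^{(\varphi,id,id)})=C(p)$ for all $\varphi\in S_h$ and all five correspondences. For neutrality I would use the identities recorded in Section \ref{preference-relations}, namely $x\succ_q y\iff\psi(x)\succ_{\psi q}\psi(y)$ and $\mathrm{rank}_{\psi q}(x)=\mathrm{rank}_q(\psi^{-1}(x))$: since $p^{(id,\psi,id)}_i=\psi p_i$, these yield $w_{p^{(id,\psi,id)}}(\psi(x),\psi(y))=w_p(x,y)$, $k_{p^{(id,\psi,id)}}(\psi q)=k_p(q)$, and the analogous relation for Borda scores, so the Borda, Copeland and Minimax score functions and the set of $k_p$-maximizers of $p^{(id,\psi,id)}$ are exactly those of $p$ relabelled through $\psi$. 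Taking $\mathrm{argmax}$, and for Kemeny passing to top alternatives, commutes with this relabelling, and one concludes $C(p^{(id,\psi,id)})=\psi C(p)$ for all $\psi\in S_n$.
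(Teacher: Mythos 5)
Your proof is correct. Note, however, that the paper offers no argument for this proposition at all: it is stated as ``well-known'' and used without proof, so there is nothing in the source to compare your route against. Your treatment of efficiency for $Par$, $Bor$, $Cop$ and $Min$, and of anonymity and neutrality for all five correspondences via the invariance of the multiset $\{p_i\}_{i\in H}$ and the equivariance identities $w_{p^{(id,\psi,id)}}(\psi(x),\psi(y))=w_p(x,y)$ and $\mathrm{rank}_{\psi q}(\psi(x))=\mathrm{rank}_q(x)$, is exactly the standard bookkeeping and is carried out correctly. The one step with genuine content is the efficiency of $Kem$, and your exchange argument is sound: the two displacement computations $k_p(q')-k_p(q^*)=h+\sum_{z\in Z}\bigl(h-2w_p(z,y)\bigr)$ and $k_p(q'')-k_p(q^*)=h+\sum_{z\in Z}\bigl(h-2w_p(x,z)\bigr)$ are right (only the pairs $\{y,w\}$, respectively $\{x,w\}$, with $w$ in the displaced range change orientation), and the disjointness of $\{i: x\succ_{p_i}z\}$ and $\{i: z\succ_{p_i}y\}$ forced by $w_p(x,y)=0$ and transitivity gives $w_p(x,z)+w_p(z,y)\le h$, so that averaging the two optimality inequalities yields the contradiction $h\le 0$. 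This actually proves the stronger statement that every Kemeny-optimal ranking places a Pareto dominator above the dominated alternative, which is more than the proposition requires but costs nothing.
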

In particular, all the considered {\sc scc}s are anonymous with respect to any partition of $H$ and
neutral with respect to any partition of $N$. In the next propositions, we find out conditions on the number of individuals and alternatives which are necessary and sufficient to make those {\sc scc}s immune to the reversal bias and resolute.

\begin{proposition}\label{others}
$Par$, $Bor$, $Cop$ and $Kem$ are immune to the reversal bias.
\end{proposition}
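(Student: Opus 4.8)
The plan is to show that for each of $Par$, $Bor$, $Cop$, $Kem$, whenever a preference profile $p$ has a unique winner $x$, the reversed profile $p^{(id,id,\rho_0)}$ does not select $x$ again. The common thread is that reversing all individual preferences turns "$x$ is (weakly) best'' into "$x$ is (weakly) worst'' for the relevant comparison quantity, and since $n\ge 2$ these two cannot coincide when $x$ is the \emph{unique} winner. The key identity to record at the outset, following from the definitions in Section \ref{model}, is that for every $p\in\mathcal P$ and $x,y\in N$ we have $w_{p^{(id,id,\rho_0)}}(x,y)=w_p(y,x)=h-w_p(x,y)$, and similarly $\mathrm{rank}_{p^{(id,id,\rho_0)}_i}(x)=\rho_0(\mathrm{rank}_{p_i}(x))=n+1-\mathrm{rank}_{p_i}(x)$ for each $i\in H$. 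I would state this as a preliminary remark and then treat the four correspondences in turn.

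\emph{Borda.} If $x$ is the unique Borda winner of $p$, then $\sum_{i=1}^h (n-\mathrm{rank}_{p_i}(x)) > \sum_{i=1}^h (n-\mathrm{rank}_{p_i}(y))$ for all $y\ne x$. Applying the rank identity, the Borda score of $x$ in $p^{(id,id,\rho_0)}$ equals $\sum_{i=1}^h(n-(n+1-\mathrm{rank}_{p_i}(x)))=\sum_{i=1}^h(\mathrm{rank}_{p_i}(x)-1)$, i.e. $h(n-1)$ minus the Borda score of $x$ in $p$; so $x$ having the \emph{strictly largest} score in $p$ means $x$ has the \emph{strictly smallest} score in $p^{(id,id,\rho_0)}$, hence (as $n\ge2$, so there is some $y\ne x$) $x\notin Bor(p^{(id,id,\rho_0)})$.

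\emph{Pareto, Copeland, Minimax-type arguments.} For $Par$: if $Par(p)=\{x\}$ then $x$ is top in some $p_i$ and every $y\ne x$ is beaten unanimously by $x$ or fails the Pareto test; after reversal $x$ is bottom in that same $p_i$, so any alternative that is top in $p_i^{(id,id,\rho_0)}$ (which exists) beats $x$ there, giving $x\notin Par(p^{(id,id,\rho_0)})$ — here I would argue directly that $x$ cannot be Pareto-optimal in the reversed profile because it is dominated. For $Cop$: using $w_{p^{(id,id,\rho_0)}}(x,y)=w_p(y,x)$, the Copeland score of $x$ in the reversed profile is the negative of the Copeland score of $x$ in $p$ (the roles of "wins'' and "losses'' are swapped, and the threshold $\lceil(h+1)/2\rceil$ is symmetric under this swap modulo the case of exact ties, which needs a small check); a unique maximizer in $p$ then has the strictly minimal Copeland score in the reversed profile, so it is not selected. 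For $Kem$: the map $q\mapsto q\rho_0$ is a bijection of $\mathcal L(N)$, and $k_{p^{(id,id,\rho_0)}}(q\rho_0)=\sum_{x\succ_{q\rho_0}y} w_{p^{(id,id,\rho_0)}}(x,y)=\sum_{y\succ_q x}w_p(x,y)$, which together with $k_p(q)+k_p(q\rho_0)=\binom n2 h$ shows that the Kemeny-optimal orders of $p^{(id,id,\rho_0)}$ are exactly the reversals of the Kemeny-optimal orders of $p$; thus the alternatives ranked \emph{first} by some optimal order for the reversed profile are exactly those ranked \emph{last} by some optimal order for $p$, and uniqueness of $Kem(p)=\{x\}$ forces $x$ to be ranked last in \emph{every} optimal order of $p$, so (using $n\ge2$) $x$ cannot be the top of any, hence $x\notin Kem(p^{(id,id,\rho_0)})$.

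The main obstacle I anticipate is the Copeland case, because the Copeland score is not a clean linear functional of the pairwise tallies: the indicator $w_p(x,y)\ge\lceil(h+1)/2\rceil$ and its "reversed'' counterpart $w_p(y,x)\ge\lceil(h+1)/2\rceil$ are not exact logical negations when $h$ is even and the pair $(x,y)$ is tied ($w_p(x,y)=w_p(y,x)=h/2$), so I need to verify that such ties contribute $0$ to the Copeland score both before and after reversal, which restores the exact sign-flip of the overall score. Once that bookkeeping is done the uniqueness argument closes immediately. Everything else is a routine substitution using the two preliminary identities; the only global hypothesis genuinely used is $n\ge2$, which guarantees the existence of an alternative $y\ne x$ to compare against (equivalently, guarantees "uniquely best'' and "uniquely worst'' are incompatible).
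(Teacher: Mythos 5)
Your overall strategy is sound and, for Kemeny, essentially the paper's: you establish that the Kemeny-optimal orders of the reversed profile are exactly the reversals of those of $p$ (the paper's identity $\mathcal{K}(p^{(id,id,\rho_0)})=\mathcal{K}(p)\rho_0$, obtained there directly from $k_{p^{(id,id,\rho_0)}}(q\rho_0)=k_p(q)$, with no need for the complement identity) and then compare the top of an optimal order with the top of its reversal. For Borda and Copeland the paper does not argue at all --- it cites Proposition 3 of Bubboloni and Gori (2016) --- so your direct score-reversal proofs are a genuinely different, self-contained route; note moreover that the Copeland bookkeeping you anticipate is unnecessary, since $w_{p^{(id,id,\rho_0)}}(x,y)=w_p(y,x)$ swaps the two sets appearing in the definition of the Copeland score exactly, ties included, so the score negates with no case analysis on $h$. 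Two slips in the Kemeny paragraph should be corrected: the displayed sum should read $\sum_{y\succ_q x}w_p(y,x)=k_p(q)$ (as written it equals $k_p(q\rho_0)$), and uniqueness of $Kem(p)=\{x\}$ forces $x$ to be ranked \emph{first} (not last) in every optimal order of $p$, hence last in every optimal order of the reversed profile.

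The genuine gap is in the Pareto case. From ``$x$ is bottom in $p_i^{(id,id,\rho_0)}$ for one particular $i$, so the top alternative of $p_i^{(id,id,\rho_0)}$ beats $x$ there'' you cannot conclude $x\notin Par(p^{(id,id,\rho_0)})$: exclusion from the Pareto set requires that some single alternative beat $x$ \emph{unanimously}, and being ranked last by one individual does not provide that (with $n=h=2$ and the two individuals holding opposite orders, both alternatives are Pareto-optimal although each is ranked last by someone). The step is repaired by observing that every individual's top-ranked alternative lies in $Par(p)$, so $Par(p)=\{x\}$ forces $x$ to be ranked first by \emph{every} individual; then in the reversed profile $x$ is ranked last by every individual and is unanimously dominated by any other alternative, which exists since $n\ge 2$. (The paper's own argument is even shorter: assuming both outcomes are singletons, the unique Pareto winner of $p$ is individual 1's top while that of the reversed profile is individual 1's bottom, and these differ because $n\ge 2$.)
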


\begin{proof}
That fact for the Borda and Copeland {\sc scc}s is proved in Bubboloni and Gori (2016, Proposition 3). We are thus left with considering  $C\in \{Par,Kem\}$ and showing that if, for some $p\in \mathcal{P}$ and $x,y\in N,$ we have $C(p)=\{x\}$ and $C(p^{ (id,id,\rho_0)})=\{y\}$, then $x\neq y.$
In what follows, for shortness, for every $p\in \mathcal{P}$, we will write $p^{\rho_0}$ instead of $p^{(id,id,\rho_0)}.$

Assume that  $Par(p)=\{x\}$ and $Par(p^{ \rho_0})=\{y\}$ for some $p\in\mathcal{P}$ and $x,y\in N$. Then, in particular,  we have that $\mathrm{rank}_{p_1}(x)=1$ and $\mathrm{rank}_{p_1^{\rho_0}}(y)=1$. Thus, $\mathrm{rank}_{p_1}(y)=n\neq 1=\mathrm{rank}_{p_1}(x),$ which says $x\neq y$.

Consider now $Kem$. We start defining, for every $p\in\mathcal{P}$, the nonempty subset of $\mathcal{L}(N)$
\[
\mathcal{K}(p)=\underset{q\in\mathcal{L}(N)}{\mathrm{argmax}}\,
k_p(q)
\]
and showing that, for every $p\in\mathcal{P}$,
\begin{equation}\label{K}
\mathcal{K}(p^{\rho_0})=\mathcal{K}(p)\rho_0.
\end{equation}
Indeed, given $p\in\mathcal{P}$, it is immediately checked that, for every $x,y\in N$,  we have
$w_{p^{\rho_0}}(x,y)=w_{p}(y,x)$.
As a consequence, for every $q\in \mathcal{L}(N)$, we have that
\[
k_{p^{\rho_0}}(q\rho_0)=\sum_{x\succ_{q\rho_0} y}w_{p^{\rho_0}}(x,y)=\sum_{y\succ_q x}w_p(y,x)=k_p(q).
\]
Given now $\hat{q}\in\mathcal{K}(p)\rho_0$, note that $\hat{q}=q^*\rho_0$ for a suitable $q^*\in \mathcal{K}(p)$. Then, for every $q\in \mathcal{L}(N)$, we have $k_p(q^*)\geq k_p(q)$ and therefore $k_{p^{\rho_0}}(q^*\rho_0)=k_p(q^*)\geq k_p(q)=k_{p^{\rho_0}}(q\rho_0)$.
Since $\mathcal{L}(N)\rho_0=\mathcal{L}(N)$, that means $\hat{q}=q^*\rho_0\in \mathcal{K}(p^{\rho_0})$. Thus,
$\mathcal{K}(p)\rho_0\subseteq \mathcal{K}(p^{\rho_0})$. The same argument applied to $p^{\rho_0}$ gives $\mathcal{K}(p^{\rho_0})\rho_0\subseteq \mathcal{K}(p).$ It follows that $\mathcal{K}(p^{\rho_0})\subseteq \mathcal{K}(p)\rho_0$, which completes the proof of \eqref{K}.

 Assume now that $Kem(p)=\{x\}$ and $Kem(p^{ \rho_0})=\{y\}$ for some $p\in\mathcal{P}$ and $x,y\in N$. Pick $q^*\in \mathcal{K}(p)$ and note that $\mathrm{rank}_{q^*}(x)=1$. On the other hand, by \eqref{K}, $q^*\rho_0\in \mathcal{K}(p^{\rho_0})$, so that $\mathrm{rank}_{q^*\rho_0}(y)=1$, that is, $\mathrm{rank}_{q^*}(y)=n$. Hence the alternatives $x$ and $y$ are ranked differently by $q^*$, which implies $x\neq y$.
\end{proof}

The following result is proved by Bubboloni and Gori (2016, Theorem A).

\begin{proposition}\label{minimax}
$Min$ is immune to the reversal bias if and only if $h\le 3$ or $n\le 3$ or $(h,n)\in\{(4,4), (5,4),(7,4),(5,5)\}$.
\end{proposition}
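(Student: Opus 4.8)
The plan is to recast immunity to the reversal bias as a purely combinatorial property of the weighted majority tournament of a profile, and then to treat a generic constructive direction and a rigid impossibility direction separately. As in the proof of Proposition~\ref{others}, for every $p\in\mathcal{P}$ and $x,y\in N$ with $x\neq y$ one has $w_{p^{(id,id,\rho_0)}}(x,y)=w_p(y,x)=h-w_p(x,y)$. Setting $s_p(x)=\min_{y\neq x}w_p(x,y)$, the Minimax score of $x$, and $M_p(x)=\max_{y\neq x}w_p(x,y)$, it follows that $Min(p)=\mathrm{argmax}_{x\in N}\,s_p(x)$ while $Min(p^{(id,id,\rho_0)})=\mathrm{argmax}_{x\in N}\,(h-M_p(x))=\mathrm{argmin}_{x\in N}\,M_p(x)$. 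Hence $Min$ fails to be immune to the reversal bias exactly when there exist $p\in\mathcal{P}$ and $x\in N$ with $x$ simultaneously the unique maximizer of $s_p$ and the unique minimizer of $M_p$. In terms of the margins $m_p(x,y)=2w_p(x,y)-h$ --- a skew-symmetric integer array with $|m_p(x,y)|\le h$ and $m_p(x,y)\equiv h\pmod 2$ --- the bias occurs precisely when some $x$ uniquely maximizes $y\mapsto\min_{z\neq y}m_p(y,z)$ and uniquely minimizes $y\mapsto\max_{z\neq y}m_p(y,z)$. Since all margins share the parity of $h$, ``uniquely'' here forces a gap of at least $2$; this rigidity is what governs the small cases.

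For the ``only if'' part, that is, the cases where the bias does occur, I would separate the combinatorial existence of such an array from its realizability by exactly $h$ linear orders. After relabelling the prospective winner as alternative $1$, I would look for arrays whose first row $m(1,\cdot)$ is as flat and central as possible --- all entries $1$ when $h$ is odd, all $0$ when $h$ is even --- and in which every remaining row $z$ contains both an entry $\le\min_y m(1,y)-2$ and an entry $\ge\max_y m(1,y)+2$. With a flat first row these two demands reduce to asking that, on the vertex set $\{2,\dots,n\}$, every vertex have positive in-degree and positive out-degree in the digraph of the sufficiently large margins; a directed cycle, with chords of small margin oriented so as to control triangle sums, does this as soon as $n\ge4$, whereas it is impossible for $n\le 3$. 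To pass from the array to a profile I would invoke Debord's theorem characterizing which skew-symmetric margin arrays arise from a prescribed number $h$ of linear orders, whose key necessary instances are the triangle inequalities $|m(a,b)+m(b,c)+m(c,a)|\le h$, valid because no linear order ranks $a$ over $b$, $b$ over $c$ and $c$ over $a$ at once. A finite analysis --- tuning the length of the cycle and the size of the margins to the parity of $h$ --- then shows that for every $(h,n)$ with $n\ge 4$ and $h\ge 4$ outside $\{(4,4),(5,4),(7,4),(5,5)\}$ the array can be chosen so that Debord's conditions hold, hence so that it is realizable by $h$ voters; for each such $(h,n)$ this yields a profile exhibiting the reversal bias.

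For the ``if'' part, immunity must be proved in the remaining cases. If $n\le 3$ the reformulation is refuted by an elementary sign analysis: the finitely many ways $s_p$ and $M_p$ can rank alternatives $2$ and $3$ against alternative $1$, combined with $m_p(z,1)=-m_p(1,z)$, always lead to a contradiction, exactly as the two-alternative computation ``$2w_p(1,2)>h$ and $2w_p(1,2)<h$'' does. If $h\le 3$ the extremal conditions force either every vertex of $\{2,\dots,n\}$ to unanimously dominate another vertex of $\{2,\dots,n\}$, or every such vertex to be unanimously dominated by another; but the unanimous-domination digraph is contained in every individual preference, hence acyclic, so it has a sink and a source, contradicting the requirement. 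The genuinely hard part, which I expect to be the main obstacle, is the four sporadic pairs $(4,4),(5,4),(7,4),(5,5)$: here the set of candidate arrays is not small and must be eliminated one pair at a time, by combining the triangle inequalities, further consequences of Debord's realizability conditions, and the acyclicity of unanimous domination. For instance, for $(4,4)$ a flat first row forces a triangle of margins $\ge 2$ on $\{2,3,4\}$ of total at least $6>4=h$, violating the triangle inequality, and one then checks that every non-flat first row is even more tightly constrained; the pairs $(5,4)$, $(7,4)$ and $(5,5)$ require analogous but longer case distinctions. Combining the two parts gives exactly the stated characterization.
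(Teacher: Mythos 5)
First, a point of reference: the paper does not prove Proposition \ref{minimax} at all --- it is imported from Bubboloni and Gori (2016, Theorem A) --- so there is no internal proof to compare yours against. Your reformulation is correct and is the natural starting point: since $w_{p^{(id,id,\rho_0)}}(x,y)=h-w_p(x,y)$, one has $Min(p^{(id,id,\rho_0)})=\mathrm{argmin}_{x}\max_{y\neq x}w_p(x,y)$, and the bias occurs exactly when some $x$ is simultaneously the strict maximizer of the row-minimum and the strict minimizer of the row-maximum. Your arguments for $n\le 3$ (finite sign analysis) and for $h\le 3$ (every vertex of $\{2,\dots,n\}$ would need positive out-degree, respectively in-degree, in the unanimous-domination digraph, which is acyclic) are sound and can be completed.

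There are, however, two genuine gaps. The first is the realizability step. Debord's theorem does not characterize which skew-symmetric margin arrays arise from a \emph{prescribed} number $h$ of linear orders; it guarantees realizability for all sufficiently large $h$ of the correct parity, and the triangle inequalities you invoke are necessary but not sufficient for a given $h$. The entire content of the exceptional set $\{(4,4),(5,4),(7,4),(5,5)\}$ lives precisely in that gap: for these small parameters admissible arrays may exist combinatorially yet fail to be realizable by exactly $h$ voters, while for neighbouring parameters they are realizable. Hence ``a finite analysis then shows that the array can be chosen so that Debord's conditions hold'' does not establish the claim for the infinite family $h,n\ge 4$ outside the exceptional set; one needs either explicit profiles together with lemmas propagating counterexamples from $(h,n)$ to $(h+2,n)$ and to $(h,n+1)$ (which is essentially how the cited source proceeds), or a proved sufficiency criterion for realizability at exactly $h$ voters. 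The second gap is the immunity verification for the four sporadic pairs, which you yourself identify as the main obstacle: only $(4,4)$ receives a partial argument, and even there the non-flat first rows are dismissed with ``one then checks,'' while $(5,4)$, $(7,4)$ and $(5,5)$ are not treated at all. As it stands, the proposal is a plausible strategy rather than a proof.
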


\begin{proposition}
$Par$ is not resolute.
\end{proposition}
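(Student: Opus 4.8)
The plan is to exhibit a single preference profile at which $Par$ selects more than one alternative, which immediately contradicts resoluteness. The construction relies only on the standing assumptions $n\ge 2$ and $h\ge 2$, so that $N$ contains the two distinct alternatives $1$ and $2$ and $H$ contains the two distinct individuals $1$ and $2$.

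Concretely, I would choose any $p\in\mathcal{P}$ such that $\mathrm{rank}_{p_1}(1)=1$ and $\mathrm{rank}_{p_2}(2)=1$; the remaining entries of $p_1$ and $p_2$, and the whole preferences $p_3,\dots,p_h$, may be filled in arbitrarily (e.g.\ by any fixed linear order). Such a profile exists since for each $x\in N$ there is certainly a preference relation in $\mathcal{L}(N)$ putting $x$ on top.

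The key step is then to invoke the remark following \eqref{pareto-eq}, namely that $Par(p)$ contains every alternative ranked first in at least one component $p_i$. Since $1$ is ranked first in $p_1$ we get $1\in Par(p)$, and since $2$ is ranked first in $p_2$ we get $2\in Par(p)$; as $1\neq 2$, this yields $|Par(p)|\ge 2$. Hence $Par$ is not resolute. (If one prefers to argue directly from \eqref{pareto-eq}: for every $y\in N\setminus\{1\}$ we have $1\succ_{p_1}y$, so $1\in Par(p)$, and symmetrically $2\in Par(p)$.)

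There is essentially no obstacle here; the only thing to be a little careful about is to make sure the chosen profile is well defined, i.e.\ that one really can pick linear orders on $N$ with prescribed top element, which is immediate. The argument is a one-line counterexample and I expect the author's proof to be of exactly this form.
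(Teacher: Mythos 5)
Your proof is correct and coincides with the paper's own argument: the authors likewise take a profile $p$ with $\mathrm{rank}_{p_1}(1)=1$ and $\mathrm{rank}_{p_2}(2)=1$ and conclude $\{1,2\}\subseteq Par(p)$. Your additional remarks on well-definedness and the direct verification via \eqref{pareto-eq} are fine but not needed beyond what the paper states.
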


\begin{proof}
Consider $p\in\mathcal{P}$ such that $\mathrm{rank}_{p_1}(1)=1$ and $\mathrm{rank}_{p_2}(2)=1$. Then $\{1,2\}\subseteq Par(p)$ so that $Par$ is not resolute.
\end{proof}

\begin{proposition}
Let $C\in\{Bor,Cop,Min,Kem\}$. Then $C$ is resolute if and only if $n=2$ and $h$ is odd.
\end{proposition}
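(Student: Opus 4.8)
The plan is to establish the two implications separately, the converse being the substantial one.

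For the sufficiency, assume $n=2$ and $h$ odd. Fix $p\in\mathcal{P}$ and set $a=w_p(1,2)$ and $b=w_p(2,1)$, so that $a+b=h$; since $h$ is odd we have $a\neq b$ and, as $a+b<h+1=2\lceil(h+1)/2\rceil$, exactly one of $a,b$ reaches the threshold $\lceil(h+1)/2\rceil=(h+1)/2$. A one-line computation then identifies the relevant quantities: the Borda scores of $1$ and $2$ are $a$ and $b$; the Minimax values of $1$ and $2$ are $a$ and $b$; the Copeland scores are $+1$ and $-1$ in the order dictated by which of $a,b$ reaches $(h+1)/2$; and, because $\mathcal{L}(N)$ has exactly the two elements $1\succ 2$ and $2\succ 1$, the Kemeny weights $k_p$ of these two orders are $a$ and $b$, so $\mathcal{K}(p)$ is a singleton and hence so is $Kem(p)$. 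In each case $a\neq b$ forces $|C(p)|=1$.

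For the necessity I argue by contraposition: assuming $h$ is even or $n\geq 3$, I construct, in each case, a single profile $p$ with $|C(p)|\geq 2$ simultaneously for all $C\in\{Bor,Cop,Min,Kem\}$. If $h$ is even, fix any $q\in\mathcal{L}(N)$ and let $p$ have $h/2$ components equal to $q$ and $h/2$ equal to $q\rho_0$; then $w_p(x,y)=h/2$ for all $x\neq y$, while $\lceil(h+1)/2\rceil=h/2+1>h/2$, so no ordered pair $(x,y)$ satisfies $w_p(x,y)\geq\lceil(h+1)/2\rceil$. It follows immediately that all Borda scores equal $(h/2)(n-1)$, all Minimax values equal $h/2$, all Copeland scores equal $0$, and $k_p$ is constant on $\mathcal{L}(N)$; hence $C(p)=N$, and $|N|=n\geq 2$. (Alternatively, the even case follows from Proposition~\ref{ean} and Theorem~\ref{moulin-teo}, since $\gcd(h,n!)>1$.) If instead $h$ is odd, hence $h\geq 3$, and $n\geq 3$, let $p$ consist of $(h-3)/2$ components equal to $1\succ 2\succ\cdots\succ n$, of $(h-3)/2$ components equal to the reversed order $n\succ\cdots\succ 2\succ 1$, and of the three further components $1\succ 2\succ 3\succ 4\succ\cdots\succ n$, $\;2\succ 3\succ 1\succ 4\succ\cdots\succ n$ and $3\succ 1\succ 2\succ 4\succ\cdots\succ n$. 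The first $h-3$ components add $(h-3)/2$ to every $w_p(x,y)$, hence shift all Borda, Minimax and Copeland scores, and all values of $k_p$, by a common amount; so the outcome is governed by the three cyclic components through their pairwise tallies $w^{3}$, for which $w^{3}(1,2)=w^{3}(2,3)=w^{3}(3,1)=2$, the three reversed values equal $1$, and $1,2,3$ each beat every $j\geq 4$. A short computation then shows that for $Bor$, $Min$ and $Cop$ the maximum is attained exactly on $\{1,2,3\}$ — for Copeland one uses that, at this value of $h$, a pair $(x,y)$ satisfies $w_p(x,y)\geq\lceil(h+1)/2\rceil$ precisely when $w^{3}(x,y)\geq 2$ — so $C(p)=\{1,2,3\}$; and for $Kem$ the maximizers of $k_p$ turn out to be exactly the three linear orders placing $1,2,3$ on top in one of the three cyclic patterns and then $4\succ\cdots\succ n$, whose rank-$1$ alternatives are $1,2,3$, so $Kem(p)=\{1,2,3\}$. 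In all four cases $|C(p)|=3\geq 2$.

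The Borda, Minimax and Copeland parts are routine, resting only on additivity of the pairwise tallies $w_p$. The step I expect to be the main obstacle is the Kemeny claim in the last construction: one must rule out any other near-optimal linear order and show that the cyclic pattern $1\succ 2$, $2\succ 3$, $3\succ 1$ among $\{1,2,3\}$ — which no linear order realizes — forces precisely three co-optimal Kemeny rankings, with no compensating gain available on the remaining pairs. This can be handled by the orientation-by-orientation bookkeeping already used in the proof of Proposition~\ref{others} (comparing the contribution of a linear order with that of its reversal pair by pair), applied separately to the pairs inside $\{1,2,3\}$ and to the pairs meeting $\{4,\dots,n\}$.
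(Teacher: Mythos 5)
Your proof is correct and follows the same overall strategy as the paper (a simple-majority argument for sufficiency, explicit tie-producing profiles by contraposition for necessity), but the profiles you construct are genuinely different and, in the odd case, simpler. For $h$ even the paper swaps only alternatives $1$ and $2$ between the two halves of the electorate, producing the two-way tie $\{1,2\}$, whereas your pairing of each order with its full reversal makes every pairwise tally equal to $h/2$ and yields the complete tie $C(p)=N$; both work, and your fallback via Proposition \ref{ean} and Theorem \ref{moulin-teo} is also valid for that case (though it cannot cover the odd-$h$, $n\ge 3$ case, where $\gcd(h,n!)=1$ is possible). For $h$ odd and $n\ge 3$ the paper writes $h=3+2r+2t+6k$ and distributes voters over all six orders of $\{1,2,3\}$ with carefully chosen multiplicities, then verifies the three-way tie by direct computation of the scores and of $k_p$; your construction instead cancels $h-3$ voters in order--reversal pairs, which shifts every pairwise tally by the constant $(h-3)/2$, and reduces everything to the three-voter Condorcet cycle on $\{1,2,3\}$. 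This makes the Borda, Copeland and Minimax verifications immediate and turns the Kemeny claim into a clean per-pair bookkeeping argument: any Kemeny-optimal order must agree with the (unanimous) majority on every pair meeting $\{4,\dots,n\}$, since a loss of $3$ there cannot be recouped inside $\{1,2,3\}$ where scores range only over $\{4,5\}$, and the cyclic majority inside $\{1,2,3\}$ forces exactly the three co-optimal orders with tops $1$, $2$, $3$. The trade-off is that the paper's construction, for all its arithmetic, is verified by a single uniform computation of $\max_q k_p(q)$, while yours requires the (short but genuine) argument that no trade-off between pair classes is possible; you have supplied that argument, so the proof is complete.
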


\begin{proof}
If $n=2$ and $h$ is odd, then $Bor$, $Cop$, $Min$ and $Kem$ agree with the simple majority so that they are resolute.
We are then left with proving that if $h$ is even or $n\ge 3$, then none among $Bor$, $Cop$ , $Min$, $Kem$  is resolute.

Assume at first $h$ even. Define
\[
q_1=[1,2,(3),\ldots,(n)]^T,\quad q_2=[2,1,(3),\ldots,(n)]^T
\]
and consider any preference profile $p\in\mathcal{P}$ such that
\[
\begin{array}{l}
|\{i\in H: p_i=q_1\}|=\frac{h}{2},\quad
 |\{i\in H: p_i=q_2\}|=\frac{h}{2}.
\end{array}
\]
It is immediate to verify that $Bor(p)=Cop(p)=Min(p)=\{1,2\}$. Moreover, it can be checked that
\[
k_p(q_1)=k_p(q_2)=\max_{q\in\mathcal{L}(N)}k_p(q)= (n^2-n-1)\frac{h}{2},
\]
so that, since $Kem$ is efficient, $Kem(p)=\{1,2\}$.

Assume now $n\ge 3$ and $h$ odd. Then there exist $r,t\in\{0,1\}$ with $t\le r$ and $k\in \mathbb{N}_0$ such that $h=3+2r+2t+6k$. Define
\[
\begin{array}{l}
q_1=[1,2,3,(4),\ldots,(n)]^T,\quad q_2=[1,3,2,(4),\ldots,(n)]^T,\quad
q_3=[2,1,3,(4),\ldots,(n)]^T,\\
\vspace{-3mm}\\
q_4=[2,3,1,(4),\ldots,(n)]^T,\quad
q_5=[3,1,2,(4),\ldots,(n)]^T,\quad q_6=[3,2,1,(4),\ldots,(n)]^T,
\end{array}
\]
and consider any preference profile $p\in\mathcal{P}$ such that
\[
\begin{array}{l}
|\{i\in H: p_i=q_1\}|=1+k+r,\quad
 |\{i\in H: p_i=q_2\}|=k+t,\quad
|\{i\in H: p_i=q_3\}|=k,\\
\vspace{-3mm}\\
 |\{i\in H: p_i=q_4\}|=1+k+t,\quad
|\{i\in H: p_i=q_5\}|=1+k,\quad
|\{i\in H: p_i=q_6\}|=k+r.
\end{array}
\]
It is immediate to verify that  $Bor(p)=Cop(p)=Min(p)=\{1,2,3\}$. Moreover, it can be checked that
\[
k_p(q_1)=k_p(q_4)=k_p(q_5)=\max_{q\in\mathcal{L}(N)}k_p(q)=5+3r+3t+9k+(n^2-n-6)\frac{h}{2},
\]
so that, since $Kem$ is efficient, $Kem(p)=\{1,2,3\}$.
\end{proof}

\section{Main problem and results}\label{main}

As shown in the previous section resoluteness is not generally satisfied by classical social choice correspondences. If a {\sc scc} $C$ is not resolute, then it admits different resolute refinements, each of them naturally interpreted as a tie-breaking rule for $C$. As a consequence, one may wonder whether it is possible to find resolute refinements having special and desirable properties. An interesting result about the Pareto {\sc scc} and the properties of anonymity and neutrality is proved by Moulin (1983, p.23).
\begin{theorem}\label{moulin-teo}
$Par$ admits an anonymous and neutral resolute refinement if and only if
\begin{equation}\label{moulin}
gcd(h,n!)=1
\end{equation}
\end{theorem}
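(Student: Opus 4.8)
The plan is to recast the whole statement in terms of the action $(\varphi,\psi)\cdot p:=p^{(\varphi,\psi,id)}$ of the group $S_h\times S_n$ on the finite set $\mathcal{P}$ (the reversal coordinate stays frozen at $id$, so it plays no role here). The starting remark, obtained from the direct identity $p^{(\varphi,\psi,id)}=(p^{(\varphi,id,id)})^{(id,\psi,id)}$, is that a {\sc scc} $C$ is anonymous and neutral if and only if $C(p^{(\varphi,\psi,id)})=\psi\,C(p)$ for all $\varphi\in S_h$, $\psi\in S_n$ and $p\in\mathcal{P}$; applying this to $Par$, which is anonymous and neutral by Proposition~\ref{ean}, also gives $Par(p^{(\varphi,\psi,id)})=\psi\,Par(p)$. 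Both implications will be read off from this reformulation.

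For necessity I would argue by contradiction: suppose $d:=\gcd(h,n!)>1$ and let $C$ be an anonymous, neutral, resolute refinement of $Par$. Fix a prime $q\mid d$; then $q\le n$, so $\psi_0:=(1\,2\cdots q)\in S_n$ has order $q$, while $q\mid h$ lets me use the $h$-cycle $\varphi_0:=(1\,2\cdots h)$ to manufacture a profile $p$ with $p^{(\varphi_0,\psi_0,id)}=p$: take $p_1=[1,2,\dots,n]^{T}$ and $p_j=\psi_0^{\,j-1}p_1$ for $2\le j\le h$, the invariance reducing to $\psi_0^{\,h}p_1=p_1$, which holds since $q\mid h$. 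The point of this choice of $p_1$ is that in every $p_j$ the first $q$ ranks are filled by $\{1,\dots,q\}$ and the remaining ranks by the fixed points $\{q+1,\dots,n\}$ of $\psi_0$, so every such fixed point is Pareto-dominated (e.g.\ by alternative $1$) and $Par(p)\subseteq\{1,\dots,q\}$. Then $C(p)=\{x\}$ with $x\in Par(p)\subseteq\{1,\dots,q\}$, whereas $\{x\}=C(p)=C(p^{(\varphi_0,\psi_0,id)})=\psi_0\,C(p)=\{\psi_0(x)\}$ forces $\psi_0(x)=x$, contradicting that $\psi_0$ moves every element of $\{1,\dots,q\}$.

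For sufficiency I would assume $\gcd(h,n!)=1$ and build the refinement orbit by orbit. The crucial lemma is that the $\psi$-part of every point stabilizer is trivial: if $(\varphi,\psi)\cdot p_0=p_0$, then chasing $(p_0)_i=\psi\,(p_0)_{\varphi^{-1}(i)}$ around a cycle of $\varphi$ of length $\ell$ gives $\psi^{\ell}(p_0)_{i_1}=(p_0)_{i_1}$, hence $\psi^{\ell}=id$ by the cancellation law; thus $|\psi|$ divides every cycle length of $\varphi$, hence divides their sum $h$, and it divides $n!$ by Lagrange, so $|\psi|\mid\gcd(h,n!)=1$, i.e.\ $\psi=id$. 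Since $\mathcal{P}$ is finite, I pick one representative $p_{\mathcal{O}}$ in each $(S_h\times S_n)$-orbit $\mathcal{O}$ and an alternative $x_{\mathcal{O}}\in Par(p_{\mathcal{O}})\ne\varnothing$, and define $C(p):=\{\psi(x_{\mathcal{O}})\}$ whenever $p=p_{\mathcal{O}}^{(\varphi,\psi,id)}$. This is well defined by the lemma (two such presentations of $p$ differ by an element of the stabilizer of $p_{\mathcal{O}}$, hence have the same $\psi$); it is obviously resolute; using $(p_{\mathcal{O}}^{(\varphi,\psi,id)})^{(\alpha,\beta,id)}=p_{\mathcal{O}}^{(\alpha\varphi,\beta\psi,id)}$ one checks $C(p^{(\alpha,\beta,id)})=\beta\,C(p)$, so $C$ is anonymous and neutral; and $C(p)=\{\psi(x_{\mathcal{O}})\}\subseteq\psi\,Par(p_{\mathcal{O}})=Par(p)$, so $C$ refines $Par$.

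The hard part is the sufficiency: the genuinely non-obvious ingredients are the idea of fixing the tie-break once per $(S_h\times S_n)$-orbit by prescribing the winner on a representative and then transporting it by $\psi$, and the observation that $\gcd(h,n!)=1$ is exactly what kills the $\psi$-component of the stabilizers and thereby makes this assignment consistent and fully neutral. A secondary subtlety, in the necessity part, is that one must engineer the $(\varphi_0,\psi_0,id)$-invariant profile $p$ so that $Par(p)$ contains no fixed point of $\psi_0$; a generic invariant profile need not have this feature, which is why the block-structured $p_1$ above is used.
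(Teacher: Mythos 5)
Your proof is correct and follows essentially the same route as the paper: the necessity argument is the specialization of the proof of Theorem \ref{super}(i) to $Y=\{H\}$, $Z=\{N\}$ (a prime $q\mid\gcd(h,n!)$, an $h$-cycle on individuals and a $q$-cycle on alternatives producing an invariant profile whose Pareto set avoids the fixed points of $\psi_0$), and the sufficiency is the orbit-representative construction of Proposition \ref{f-fu}/Theorem \ref{fu-min-2} for $U=S_h\times S_n\times\{id\}$. The only difference is that where the paper invokes the regularity criterion of Theorem \ref{regular} (cited from Bubboloni and Gori, 2015) to kill the $\psi$-component of stabilizers, you prove that special case directly by chasing $\psi$ around the cycles of $\varphi$, which makes the argument self-contained but does not change its substance.
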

An important consequence of Theorem \ref{moulin-teo} is that any efficient {\sc scc} has anonymous and neutral resolute refinements only if \eqref{moulin}. Unfortunately, since  \eqref{moulin} is a very strong arithmetical condition on the number of individuals and the number of alternatives, in  most cases, no anonymous and neutral refinement is available\footnote{Mainly under the assumption  \eqref{moulin}, an analysis of anonymous, neutral and monotonic resolute refinements of {\sc scc}s has been recently carried on by Do\u gan and Giritligil (2015). It is also worth mentioning the contribution of Campbell and Kelly (2015) who show that if $n>h$ (so that \eqref{moulin} fails) and anonymous, neutral and resolute {\sc scc}s exist, then those {\sc scc}s exhibit even more undesirable behaviours than inefficiency.}. When a {\sc scc} does not have anonymous and neutral resolute refinements, one may however focus on those resolute refinements satisfying suitable weaker versions of anonymity and neutrality. Indeed, the present paper is devoted to that type of inquiry. More precisely, having in mind the properties discussed in Section \ref{scc}, we address the following problem.

\vspace{2mm}
\noindent {\bf Main problem.} {\it
Given a  {\sc scc} $C$, a partition $Y$ of individuals  and a partition $Z$ of alternatives, find conditions on $C$, $Y$ and $Z$ assuring that $C$ admits a resolute refinement which is $Y$-anonymous and $Z$-neutral [and immune to the reversal bias]. }
\vspace{2mm}

The first result we propose provides arithmetical conditions on the structure of the partitions that are necessary for the existence of resolute refinements of the Pareto {\sc scc} which are anonymous and neutral with respect to those partitions [and immune to the reversal bias].

\begin{theorem}\label{super}Let $Y=\{Y_j\}_{j=1}^s$ be a partition of $H$, and $Z=\{Z_k\}_{k=1}^t$ be a partition of $N$ with $|Z_{k^*}|=\max\{ |Z_k| \}_{k=1}^t$.
\begin{itemize}
\item[(i)] If $Par$ admits a resolute refinement which is $Y$-anonymous and $Z$-neutral, then
	\begin{equation}\label{f1}
	\gcd\left(\gcd(|Y_j|)_{j=1}^s, \,|Z_{k^*}|!\right)=1.
	\end{equation}
\item[(ii)]  If $Par$ admits a resolute refinement which is $Y$-anonymous, $Z$-neutral and immune to the reversal bias, then
	\begin{equation}\label{f2}
	\gcd\left(\gcd(|Y_j|)_{j=1}^s, \,\mathrm{lcm}(|Z_{k^*}|!,2)\right)=1.
	\end{equation}
	\end{itemize}
\end{theorem}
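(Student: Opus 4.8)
The plan is to derive a contradiction from the existence of such a resolute refinement by exhibiting a preference profile with a large symmetry group and tracking the orbit of the selected alternative under that group.

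\textbf{Setup.} Let $C$ be a resolute refinement of $Par$ which is $Y$-anonymous and $Z$-neutral (and, for part (ii), immune to the reversal bias). Set $g = \gcd(|Y_j|)_{j=1}^s$ and $m = |Z_{k^*}|$. Suppose, for contradiction, that $\gcd(g, m!) \neq 1$ in case (i), so there is a prime $\ell$ dividing both $g$ and $m!$; in particular $\ell \le m$ and $\ell$ divides $|Y_j|$ for every $j$. First I would construct a profile $p$ in which every individual ranks the alternatives of $Z_{k^*}$ in the top $m$ positions (in some order) and ranks the remaining $n-m$ alternatives below, in a fixed common order. Because all top alternatives appear first in someone's preference and the bottom alternatives never do, $Par(p) \subseteq Z_{k^*}$, so $C(p) = \{x\}$ with $x \in Z_{k^*}$. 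The key is to choose $p$ so that it is fixed by a subgroup of $G$ that acts transitively (or with orbits of size divisible by $\ell$) on $Z_{k^*}$ while leaving the bottom alternatives and the partition $Y$ invariant.

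\textbf{The symmetry argument.} The cleanest route is to let an $\ell$-cycle $\psi \in W(Z)$ permuting $\ell$ of the alternatives of $Z_{k^*}$ cyclically (and fixing everything else) act on the positions: since within each subcommittee $Y_j$ the individuals are anonymous and $\ell \mid |Y_j|$, I can partition each $Y_j$ into blocks of size $\ell$ and build $p$ so that, modulo renaming individuals inside subcommittees (i.e. some $\varphi \in V(Y)$), applying $\psi$ to the alternative names reproduces $p$: that is, $p^{(\varphi,\psi,id)} = p$. Concretely, within one block $\{i_1,\dots,i_\ell\}$ of a subcommittee, let $p_{i_1}$ be some linear order of this type and set $p_{i_{r+1}} = \psi p_{i_r}$; then $\varphi$ cycling the block realizes $p^{(\varphi,\psi,id)} = p$. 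Now $Y$-anonymity gives $C(p^{(\varphi,\psi,id)}) = C(p^{(id,\psi,id)})$, and $Z$-neutrality gives $C(p^{(id,\psi,id)}) = \psi C(p)$, so $\psi\{x\} = \{x\}$, i.e. $\psi$ fixes $x$. But choosing the $\ell$ alternatives moved by $\psi$ to include $x$ (possible since $|Z_{k^*}| = m \ge \ell$ and we are free to pick which $\ell$-subset of $Z_{k^*}$ the cycle acts on, then re-run the construction) yields $\psi(x) \neq x$, a contradiction. This proves (i).

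\textbf{Part (ii).} For the reversal-bias version, suppose $\ell \mid \gcd(g, \operatorname{lcm}(m!,2))$. If $\ell$ is odd then $\ell \mid m!$ and part (i)'s argument already applies, so the only new case is $\ell = 2$ with $2 \mid g$ but $2 \nmid m!$, i.e. $m = 1$, meaning $|Z_k| = 1$ for all $k$, so $Z$-neutrality forces $C$ to be fully neutral. Here I would build a profile $p$, fixed under a suitable $\varphi \in V(Y)$ pairing up individuals within subcommittees so that $p^{(\varphi,id,\rho_0)} = p$ — achievable because each $|Y_j|$ is even, by pairing $i$ with $i'$ and setting $p_{i'} = p_i\rho_0$. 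Then $Y$-anonymity and immunity to the reversal bias combine: $C(p) = C(p^{(\varphi,id,\rho_0)}) = C(p^{(id,id,\rho_0)}) \neq C(p)$, since $|C(p)| = 1$. That contradiction establishes \eqref{f2}.

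\textbf{Main obstacle.} The delicate point is the explicit construction of the self-conjugate profile $p$: one must verify simultaneously that (a) $p$ lies in $\mathcal{P}$, (b) $\mathrm{rank}$ conditions force $C(p) \subseteq Z_{k^*}$ (needed only to know the winner lands in the big subclass so $\psi$ can be chosen to move it), and (c) the chosen $\varphi \in V(Y)$ and $\psi \in W(Z)$ (resp. $\rho_0$) genuinely satisfy $p^{(\varphi,\psi,\rho)} = p$, which rests on the block decomposition of the $Y_j$'s and the identities $f_2^{-1}(\psi q) = \psi\sigma$, $f_2^{-1}(q\rho) = \sigma\rho$ recalled in Section~\ref{preference-relations}. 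Once the profile is pinned down, the remaining steps are immediate applications of the definitions of $Y$-anonymity, $Z$-neutrality, and immunity to the reversal bias.
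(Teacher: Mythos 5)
Your overall strategy coincides with the paper's: pick a prime $\ell$ dividing both $\gcd(|Y_j|)_{j=1}^s$ and $|Z_{k^*}|!$, build a profile invariant under a pair $(\varphi,\psi,id)$ with $\varphi\in V(Y)$ and $\psi\in W(Z)$ an $\ell$-cycle supported inside $Z_{k^*}$, and conclude that the unique winner must be a fixed point of $\psi$. Your treatment of part (ii) for the prime $2$, via a profile satisfying $p=p^{(\varphi,id,\rho_0)}$ obtained by pairing individuals inside each subcommittee, is exactly the paper's argument and is correct as written (one small slip: when every subclass is a singleton, $W(Z)=\{id\}$, so $Z$-neutrality is \emph{vacuous}, not ``full neutrality''; this does not affect the argument, which never uses neutrality there).

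The gap is in the closing step of part (i). Having derived $\psi(x)=x$ for the winner $x$ with $C(p)=\{x\}$, you propose to ``choose the $\ell$ alternatives moved by $\psi$ to include $x$, \dots then re-run the construction.'' This is circular: the profile $p$ is built around the chosen $\ell$-subset (its columns are $\psi$-translates of a base order), so changing the subset changes $p$ and hence changes the winner $x=C(p)$; you cannot guarantee after the fact that the new winner lies in the support of the new $\psi$. The problem is created by your construction itself, which places all $m=|Z_{k^*}|$ alternatives of $Z_{k^*}$ in the top positions: then $Par(p)$ can contain the $m-\ell$ alternatives of $Z_{k^*}$ fixed by $\psi$, and the winner may land there, in which case $\psi(x)=x$ is no contradiction. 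The paper closes this by arranging that $Par(p)$ is \emph{exactly} the support of the $\ell$-cycle: the cycled alternatives $x_1,\dots,x_\ell$ occupy the top $\ell$ positions (cyclically shifted from one individual to the next), and all remaining $n-\ell$ alternatives --- including the rest of $Z_{k^*}$ --- sit below in a single fixed common order, so they are unanimously beaten. Efficiency then forces $x\in\{x_1,\dots,x_\ell\}$, where the $\ell$-cycle has no fixed point. With that modification (and either your block decomposition of each $Y_j$ into $\ell$-cycles or the paper's single $|Y_j|$-cycle on each subcommittee, both of which work because $\ell$ divides $|Y_j|$), your argument goes through.
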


\begin{proof} (i) Let $C$ be resolute, efficient, $Y$-anonymous and $Z$-neutral. Assume by contradiction that \eqref{f1} does not hold true. Then there exists a prime number $\pi$ which divides $\gcd\left(\gcd(|Y_j|)_{j=1}^s, \,|Z_{k^*}|!\right)$.
Then, for every $j\in\{1,\ldots,s\}$, $\pi \mid |Y_j|$ and $\pi \le |Z_{k^*}|$. Consider $\pi$ distinct alternatives  $x_1,\ldots,x_\pi\in Z_{k^*}$ and denote by $y_1,\ldots, y_{n-\pi}$ the remaining alternatives. For every $j\in\{1,\ldots,s\}$, let  $h_j=|Y_j|$ and $i^j_1, \ldots, i^j_{h_j}$ be the list of all the elements in $Y_j$. Define
\begin{equation}\label{phi-f}
\varphi=(i^1_1 \ldots i^1_{h_1})(i^2_1 \ldots i^2_{h_2})\ldots(i^s_1 \ldots i^s_{h_s})\in S_h
\end{equation}
and $\psi=(x_1\dots x_\pi)\in S_n$. Note that  $\varphi\in V(Y)$ and  $\psi\in W(Z)$.
Consider then the preference relation $p_0\in \mathcal{L}(N)$ defined by
\[
x_1\succ_{p_0}\ldots \succ_{p_0} x_{\pi}\succ_{p_0} y_1\succ_{p_0}\ldots\succ_{p_0} y_{n-\pi},
\]
and the preference profile $p$ defined, for every $j\in\{1,\ldots,s\}$ and $r\in\{1,\ldots,h_j\}$, by $p_{i^j_r}=\psi^{r-1}p_0$.
A simple check shows that $Par(p)= \{x_1,\ldots,x_\pi\}$  and that
\[
p=(p^{(\varphi,id,id)})^{(id,\psi,id)}.
\]
Let now $x^*\in N$ be such that $C(p)=\{x^*\}$. Then
\[
\{x^*\}=C(p)=C\left((p^{(\varphi,id,id)})^{(id,\psi,id)}\right)=\psi C(p^{(\varphi,id,id)})=\psi C(p)=\{\psi(x^*)\}\subseteq Par(p).
\]
Thus $x^*$ is a fixed point of $\psi$ belonging to $\{x_1,\ldots,x_\pi\}$, a contradiction.

 (ii) Let $C$ be resolute, efficient, $Y$-anonymous, $Z$-neutral and immune to the reversal bias.  Assume, by contradiction, that \eqref{f2} does not hold true.
Then there exists a prime number $\pi$ which divides $\gcd\left(\gcd(|Y_j|)_{j=1}^s, \,\mathrm{lcm}(|Z_{k^*}|!,2)\right)$.
If $\pi \ge 3$, then we have that, for every $j\in\{1,\ldots,s\}$, $\pi \mid |Y_j|$ and $\pi \le |Z_{k^*}|$. Thus, we proceed exactly as in the proof of (i) and find the contradiction.
If instead $\pi=2$, then, for every $j\in\{1,\ldots,s\}$,  $h_j=|Y_j|$ is even. Let $i^j_1, \ldots, i^j_{h_j}$ be the list of all the elements in $Y_j$. Let $\varphi$ defined as in \eqref{phi-f} and $\psi=id\in S_n$. Note that  $\varphi\in V(Y)$ and $\psi\in W(Z)$.
Consider
the preference profile $p$ defined, for every $j\in\{1,\ldots,s\}$ and $r\in\{1,\ldots,h_j\}$, by $p_{i^j_r}=\rho_0^{r-1}$.
A simple check shows
that
\[
p=(p^{(id,id,\rho_0)})^{(\varphi,id,id)}.
\]
Then
\[
C(p)=C\left((p^{(id,id,\rho_0)})^{(\varphi,id,id)}\right)=C(p^{(id,id,\rho_0)})\neq C(p),
\]
a contradiction.
\end{proof}

Note that \eqref{f2} obviously implies \eqref{f1}, and that  \eqref{f1} and \eqref{f2} are equivalent if one among the elements of $Z$ has size greater than $1$. Moreover, considering $Y=\{H\}$ and $Z=\{N\}$, both \eqref{f1} and \eqref{f2} are equivalent to \eqref{moulin}.
As a consequence, Theorem \ref{super}(i) implies the ``only if'' part of Theorem \ref{moulin-teo}.

Let us present now the main result of the paper. It shows that, considering partitions of individuals and alternatives satisfying the arithmetical conditions described in Theorem \ref{super}, there exists a resolute refinement which is anonymous and neutral with respect to the given partitions [and immune to the reversal bias] for any {\sc scc} having the same properties. Differently from Theorem \ref{super}, the proof of this result is quite technical and will be  presented in Section \ref{appl} as a consequence of the theory developed in Section \ref{gen-teo}.

\begin{theorem}\label{super-super}
Let $Y=\{Y_j\}_{j=1}^s$ be a partition of $H$, and $Z=\{Z_k\}_{k=1}^t$ be a partition of $N$ with $|Z_{k^*}|=\max\{ |Z_k| \}_{k=1}^t$.
\begin{itemize}
	\item[(i)] If $C\in\mathfrak{C}$ is $Y$-anonymous and $Z$-neutral and \eqref{f1} holds true, then $C$ admits a resolute refinement which is  $Y$-anonymous, $Z$-neutral.
	\item[(ii)]If $C\in\mathfrak{C}$ is  $Y$-anonymous, $Z$-neutral and immune to the reversal bias and \eqref{f2} holds true, then $C$ admits a resolute refinement which is  $Y$-anonymous, $Z$-neutral and immune to the reversal bias.
\end{itemize}
\end{theorem}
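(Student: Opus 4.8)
The plan is to translate both statements into the language of group actions and derive them from a single equivariant-selection statement about consistent {\sc scc}s (this is the role of Theorem \ref{general}). Recall the action of $G=S_h\times S_n\times\Omega$ on $\mathcal{P}$ from Section \ref{model}, and for $g=(\varphi,\psi,\rho)\in G$ write $\psi_g:=\psi$ for its $S_n$-component. For a subgroup $\Gamma\le G$, call $C\in\mathfrak{C}$ \emph{$\Gamma$-consistent} if $C(p^g)=\psi_g\,C(p)$ for all $p\in\mathcal{P}$ and $g\in\Gamma$. Checking the definition on generators, $C$ is $Y$-anonymous and $Z$-neutral if and only if it is $\Gamma$-consistent for $\Gamma:=V(Y)\times W(Z)\times\{id\}$, which frames part (i): one must produce a $\Gamma$-consistent resolute refinement. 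For part (ii) one enlarges $\Gamma$ to $\widetilde\Gamma:=V(Y)\times W(Z)\times\Omega$, an index-$2$ overgroup of $\Gamma$ containing the central element $g_0:=(id,id,\rho_0)$, and one seeks a $\Gamma$-consistent resolute $f$ that additionally satisfies $f(p^{g_0})\neq f(p)$ for all $p$, i.e. is immune to the reversal bias.

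The technical core is a lemma on stabilizers $\mathrm{Stab}_\Gamma(p)=\{g\in\Gamma:p^g=p\}$: under \eqref{f1}, the $S_n$-component of $\mathrm{Stab}_\Gamma(p)$ is trivial for every $p$. Indeed, if $(\varphi,\psi,id)$ fixes $p$ and $i$ lies in a $\varphi$-cycle of length $c$, then iterating $\psi p_{\varphi^{-1}(i)}=p_i$ gives $\psi^{c}p_i=p_i$, hence $\psi^{c}=id$ because a linear order has trivial stabilizer in $S_n$; since the $\varphi$-cycles inside a block $Y_j$ have lengths summing to $|Y_j|$, it follows that $|\psi|$ divides each $|Y_j|$, hence divides $\gcd(|Y_j|)_{j=1}^s$. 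Were the $S_n$-component nontrivial, Cauchy's theorem would yield $\psi\in W(Z)$ of prime order $\pi$ dividing $\gcd(|Y_j|)_j$; but such a $\psi$ carries a $\pi$-cycle inside some $Z_k$, so $\pi\le|Z_{k^*}|$, hence $\pi\mid|Z_{k^*}|!$, contradicting \eqref{f1}. Granting the lemma, part (i) is the orbit construction: fix a representative $p_{\mathcal{O}}$ in each $\Gamma$-orbit $\mathcal{O}$ and an arbitrary $x_{\mathcal{O}}\in C(p_{\mathcal{O}})$, and set $C'(q):=\psi_g(x_{\mathcal{O}})$ whenever $q=p_{\mathcal{O}}^{\,g}$ with $g\in\Gamma$. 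Two admissible $g$ differ by an element of $\mathrm{Stab}_\Gamma(p_{\mathcal{O}})$, whose $S_n$-component is trivial, so $\psi_g(x_{\mathcal{O}})$ is well defined; $\Gamma$-consistency of $C$ gives $C'(q)\in\psi_gC(p_{\mathcal{O}})=C(q)$, so $C'$ is a resolute refinement; and $\Gamma$-consistency of $C'$ (hence $Y$-anonymity and $Z$-neutrality) is immediate.

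For part (ii) one runs the same construction over the $\widetilde\Gamma$-orbits, each of which is either a disjoint union $\mathcal{O}\sqcup\mathcal{O}^{g_0}$ of two $\Gamma$-orbits (the \emph{split} case) or a single $\Gamma$-orbit with $p^{g_0}$ in it (the \emph{non-split} case). In the split case, pick a representative $p$ of $\mathcal{O}$ and elements $x\in C(p)$, $x'\in C(p^{g_0})$ with $x\neq x'$; this is possible since immunity of $C$ forbids $C(p)$ and $C(p^{g_0})$ from being one common singleton. Transporting $x$ over $\mathcal{O}$ and $x'$ over $\mathcal{O}^{g_0}$, and using centrality of $g_0$ (so that $q^{g_0}$ is obtained from $p^{g_0}$ by the same $\Gamma$-element sending $p$ to $q$), one gets $f(q^{g_0})=\psi_g(x')\neq\psi_g(x)=f(q)$ for all $q\in\mathcal{O}$. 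In the non-split case, all elements of $\mathrm{Stab}_{\widetilde\Gamma}(p)\setminus\Gamma$ share one $S_n$-component $\gamma$ (they form a coset of $\mathrm{Stab}_\Gamma(p)$, whose $S_n$-component is trivial by the lemma), and $\Gamma$-consistency gives $C(p^{g_0})=\gamma C(p)$; one must find $x\in C(p)$ with $\gamma(x)\neq x$. Here \eqref{f2} enters: squaring a stabilizing $(\beta,\gamma,\rho_0)$ gives $(\beta^2,\gamma^2,id)\in\mathrm{Stab}_\Gamma(p)$, so $\gamma^2=id$; and \eqref{f2} forces $\gcd(|Y_j|)_j$ to be odd, hence some $|Y_j|$ is odd and $\beta$ has an odd cycle. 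Tracing an individual of that cycle through the identity $p^{(\beta,\gamma,\rho_0)}=p$ yields $p_i=\gamma p_i\rho_0$ (using $\gamma^2=id$ and odd cycle length), whence $\gamma=\sigma\rho_0\sigma^{-1}$ with $\sigma=f_2^{-1}(p_i)$, so $\gamma$ is conjugate to the order-reversing permutation and fixes at most one alternative. Thus $C(p)\subseteq\mathrm{Fix}(\gamma)$ would force $C(p)$ to be the singleton $\mathrm{Fix}(\gamma)$ and then $C(p^{g_0})=C(p)$, contradicting immunity of $C$; so a suitable $x$ exists, transporting it gives $f(p^{g_0})=\gamma(f(p))\neq f(p)$ and, by centrality of $g_0$, $f(q^{g_0})\neq f(q)$ for every $q\in\mathcal{O}$. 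Collecting the cases, $f$ is a $\Gamma$-consistent resolute refinement that is immune to the reversal bias.

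The step I expect to be the real obstacle is the non-split case of part (ii): one must show that a $\widetilde\Gamma$-stabilizer not contained in $\Gamma$ has $S_n$-component conjugate to $\rho_0$, which forces the simultaneous use of the trivial-stabilizer lemma (to obtain $\gamma^2=id$) and of the parity content of \eqref{f2} (to produce the odd $\beta$-cycle), together with immunity of $C$ to rule out the residual obstruction $C(p)=\mathrm{Fix}(\gamma)$. Everything else — reducing the stabilizer statements to divisibility facts about cycle lengths, and the bookkeeping of orbit representatives and the well-definedness of the transport — is routine once this point is settled.
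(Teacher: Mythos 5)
Your proof is correct and follows essentially the same route as the paper: the translation into $U$-consistency, the orbit-by-orbit transport of chosen winners, the dichotomy between your split and non-split orbits (the paper's $\mathcal{P}_1^U$ and $\mathcal{P}_2^U$), and the use of immunity to rule out the obstruction $C(p)=\mathrm{Fix}(\gamma)$ all match Propositions \ref{f-fu} and \ref{fu-min-ex} and Theorems \ref{fu-min-2} and \ref{fu-min-count2}. The only difference is that you prove the two regularity facts directly (triviality of the $S_n$-component of $\Gamma$-stabilizers under \eqref{f1}, and conjugacy of $\gamma$ to $\rho_0$ via the odd-cycle argument under \eqref{f2}), whereas the paper imports them from Bubboloni and Gori (2015) as Theorem \ref{regular}; your version is therefore self-contained but not a different argument.
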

Note that if \eqref{moulin} holds true, then \eqref{f1} and \eqref{f2} are both satisfied for $Y=\{H\}$ and $Z=\{N\}$. As a consequence, from Theorem \ref{super-super} we get that \eqref{moulin} implies that every {\sc scc} which is anonymous and neutral [and immune to the reversal bias] admits a resolute refinement which is anonymous and neutral [and immune to the reversal bias]. That shows, in particular, that Theorem \ref{super-super}(i) implies the ``if'' part of Theorem \ref{moulin-teo}.

In order to better understand how Theorem \ref{super-super} can be applied in concrete situations, let us consider a committee whose purpose is to select a unique winner among a certain set of candidates. Assume that the members of a committee have already found an agreement on the use of a certain {\sc scc} $C$ with $C$ anonymous and neutral [and immune to the reversal bias]. If $C$ is not resolute, then the committee members need to determine a tie-breaking rule, that is, a resolute refinement of $C$. By Theorem \ref{super-super}, if the characteristics of committee members and candidates naturally suggest to divide them in groups satisfying \eqref{f1} [\eqref{f2}], a resolute refinement of $C$ which is anonymous and neutral with respect to the considered partitions [and immune to the reversal bias] can be designed.
As a remarkable case, assume that the committee has an individual, say individual $i$, having a special role in the committee. That happens, for instance, when the committee has a president. In that case, it is reasonable to assume that the decision power of the president may be potentially  different from the one of any other member in the committee. That leads to consider the partition of $H$ given by $Y=\{\{i\},H\setminus\{i\}\}$ which determines the groups of people equally influencing the outcome of the decision process. Considering now the partition $Z=\{N\}$ of $N$, that is giving no exogenous advantage to any alternative, we have that $Y$ and $Z$ satisfy \eqref{f2} and so also \eqref{f1}. Then Theorem \ref{super-super} applies and we know that there exists a resolute refinement of $C$ which is $Y$-anonymous and neutral [and immune to the reversal bias].

As a final remark, we stress that, in general, resolute refinements identified by Theorem \ref{super-super} may be a lot, so that it can be difficult to discriminate among them. However, the theory developed in Section \ref{gen-teo} provides a method to potentially build and count all those refinements. That makes the comparison among them much easier. In order to describe how such a method works, in Section \ref{example53}, we build and count all resolute refinements which are anonymous, neutral and immune to the reversal bias of the Pareto, the Borda, the Copeland, the Minimax and the Kemeny {\sc scc}s when individuals are five and alternatives are three.
In Section \ref{example33} we consider instead the case with three individuals and three alternatives and we analyse, for each of the previously mentioned {\sc scc}s, all the resolute refinement which are  $\{\{1,2\}, \{3\}\}$-anonymous, neutral and immune to the reversal bias. In that example, individual 3 is distinguished from individuals 1 and 2 who instead are indistinguishable.

\section{General theory}\label{gen-teo}

In this large section we develop the general theory behind our results. The concept of action of a group on a set is the main tool used (see
 Jacobson, 1974, Section 1.12).

%We will be particularly interested on the subgroups of $S_n$ and $S_h$ and some directed products of them. Note that, given $Y$ partition of $H$ and $Z$ partition of $N$, we have that $V(Y)\leq S_h$ and $W(Z)\leq S_n.$

\subsection{$U$-consistent resolute {\sc scc}s }

 Let $\mathfrak{F}$ denote the set of resolute {\sc scc}s.  Clearly $\mathfrak{F}\subseteq \mathfrak{C}$ and each resolute {\sc scc}s can be naturally identified with a social choice function, that is, a function $f$ from $\mathcal{P}$ to $N$.
We will adopt that identification throughout the rest of the paper.

Let $C\in \mathfrak{C}$. Denote by $\mathfrak{C}_C$ the set of refinements of $C,$  and by $\mathfrak{F}_C$ the set $\mathfrak{F}\cap \mathfrak{C}_C$  of the resolute refinements of $C$. Each resolute refinement of $C$ is identified with a social choice function $f:\mathcal{P}\rightarrow N$ such that, for every $p\in \mathcal{P}$,
\begin{equation}\label{functionC}
\begin{array}{l}
f(p)\in C(p).
\end{array}
\end{equation}

Let $U$ be a subgroup  of $G$.
We say that $C$ is $U$-{\it consistent} if, for every $p\in \mathcal{P}$ and $(\varphi,\psi,\rho)\in U$,
\begin{eqnarray}
&&C(p^{(\varphi,\psi,\rho)})=\psi  C(p)\quad\mbox{ if } \rho=id, \label{sccU1}\\
&&C(p^{(\varphi,\psi,\rho)})\neq\psi  C(p) \quad\mbox{ if } \rho=\rho_0 \mbox{ and } |C(p)|=1.\label{sccU2}
\end{eqnarray}
Note that condition \eqref{sccU2} is equivalent to
\begin{eqnarray}
&&C(p^{(\varphi,\psi,\rho)})\neq\psi  C(p) \quad\mbox{ if } \rho=\rho_0 \mbox{ and } |C(p)|=|C(p^{(\varphi,\psi,\rho_0)})|=1.\label{sccU2bis}
\end{eqnarray}
%From now on, in order to simplify the notation, given $p\in\mathcal{P}$ and $\psi\in S_n$, we will write $\psi C(p)$ instead of $\psi(C(p))$.
The set of $U$-consistent {\sc scc}s is denoted by $\mathfrak{C}^U$; the set $\mathfrak{F}\cap \mathfrak{C}^U$ of $U$-consistent resolute {\sc scc}s is denoted by $\mathfrak{F}^U$. Each  {\sc scc}  in $ \mathfrak{F}^U$  is identified with a social choice function $f:\mathcal{P}\rightarrow N$ such that, for every $p\in \mathcal{P}$ and $(\varphi,\psi,\rho)\in U$,

\begin{equation}\label{functionU}
\begin{array}{l}
f(p^{(\varphi,\psi,id)})=\psi  f(p) \quad\mbox{ if } \rho=id,\\
\vspace{-3mm}\\
 f(p^{(\varphi,\psi,\rho_0)})\neq\psi f(p) \quad\mbox{ if } \rho=\rho_0.
\end{array}
\end{equation}

We also set $\mathfrak{C}_C^U=\mathfrak{C}_C\cap \mathfrak{C}^U$ and
$\mathfrak{F}_C^U=\mathfrak{F}_C\cap \mathfrak{F}^U$.
Each  {\sc scc}  in $ \mathfrak{F}_C^U$ will be called a $U$-consistent resolute refinement of $C$ and
identified with a social choice function $f:\mathcal{P}\rightarrow N$ such that, for every $p\in \mathcal{P}$, both \eqref{functionC} and \eqref{functionU} hold true.

The concept of consistency of a {\sc scc} with respect to a subgroup $U$ of $G$  is able to catch
interesting requirements for {\sc scc}s through a suitable choice of the subgroup $U$.
In particular, for every partition $Y$ of $H$, $C$ is $Y$-anonymous if and only if $C\in\mathfrak{C}^{V(Y)\times \{id\}\times \{id\}}$; for every partition $Z$ of $N$, $C$ is $Z$-neutral if and only if $C\in\mathfrak{C}^{\{id\}\times W(Z)\times \{id\}}$; $C$ is immune to the reversal bias if and only if $C\in\mathfrak{C}^{\{id\}\times \{id\}\times \Omega}$.

\subsection{The action of $G$ on the set of preference profiles}\label{def-not}

The following proposition, proved in Bubboloni and Gori (2015, Proposition 2), shows that any subgroup $U$ of $G$ naturally acts on the set of preference profiles $\mathcal{P}$. Recall that this means that there exists a homomorphism from $U$ to $\mathrm{Sym}$.

\begin{proposition}\label{action-l}
Let $U\le G$. Then:
\begin{itemize}
	\item[(i)] for every $p\in\mathcal{P}$ and $(\varphi_1,\psi_1,\rho_1),(\varphi_2,\psi_2,\rho_2)\in U$, we have
\begin{equation}\label{action-e}
p^{\,(\varphi_1\varphi_2,\psi_1\psi_2,\rho_1\rho_2)}= (p^{\,(\varphi_2,\psi_2,\rho_2)})^{(\varphi_1,\psi_1,\rho_1)};
\end{equation}
	\item[(ii)] the function $\alpha:U\to \mathrm{Sym}(\mathcal{P})$ defined, for every $(\varphi,\psi,\rho)\in U$, by
\[
\alpha(\varphi,\psi,\rho):\mathcal{P}\to\mathcal{P},\quad p\mapsto p^{(\varphi,\psi,\rho)},
\]
is well defined and it is an action of the group $U$ on the set $\mathcal{P}$.
\end{itemize}
\end{proposition}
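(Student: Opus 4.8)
The plan is to establish (i) by a direct component-wise computation, and then to deduce (ii) from (i) by a purely formal argument. For (i), fix $p\in\mathcal{P}$, fix $(\varphi_1,\psi_1,\rho_1),(\varphi_2,\psi_2,\rho_2)\in U$, and compare the $i$-th components of the two preference profiles in \eqref{action-e} for an arbitrary $i\in H$. Applying the defining rule $(p^{(\varphi,\psi,\rho)})_i=\psi\, p_{\varphi^{-1}(i)}\,\rho$ twice, the $i$-th component of $(p^{(\varphi_2,\psi_2,\rho_2)})^{(\varphi_1,\psi_1,\rho_1)}$ equals $\psi_1\bigl(\psi_2\, p_{\varphi_2^{-1}(\varphi_1^{-1}(i))}\,\rho_2\bigr)\rho_1$. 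I would then invoke the associativity relations recorded in Section~\ref{preference-relations}, namely $(\psi_1\psi_2)q=\psi_1(\psi_2 q)$, $(q\rho_1)\rho_2=q(\rho_1\rho_2)$ and $(\psi q)\rho=\psi(q\rho)$, together with the identity $\varphi_2^{-1}\varphi_1^{-1}=(\varphi_1\varphi_2)^{-1}$ in $S_h$ and the commutativity of $\Omega$, which yields $\rho_2\rho_1=\rho_1\rho_2$. These turn the previous expression into $(\psi_1\psi_2)\, p_{(\varphi_1\varphi_2)^{-1}(i)}\,(\rho_1\rho_2)$, which is exactly the $i$-th component of $p^{(\varphi_1\varphi_2,\psi_1\psi_2,\rho_1\rho_2)}$. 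Since $i\in H$ was arbitrary, \eqref{action-e} follows.

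For (ii), I would first observe that $\alpha(1_G)=\alpha(id,id,id)$ is the identity map of $\mathcal{P}$, since $\psi\, q\,\rho=q$ whenever $\psi=id$ and $\rho=id$, while $\varphi^{-1}(i)=i$ when $\varphi=id$. Part (i) says precisely that, for all $g_1,g_2\in U$, the map $p\mapsto p^{g_1g_2}$ is the right-to-left composition of $p\mapsto p^{g_2}$ and then $p\mapsto p^{g_1}$; in the notation of the excerpt, $\alpha(g_1g_2)=\alpha(g_1)\alpha(g_2)$ as functions $\mathcal{P}\to\mathcal{P}$. Taking $g_2=g_1^{-1}$ and then $g_1=g_2^{-1}$, and using $\alpha(1_G)=id$, we see that $\alpha(g)$ has $\alpha(g^{-1})$ as a two-sided inverse, so $\alpha(g)\in\mathrm{Sym}(\mathcal{P})$ (here we use that $\mathcal{P}=\mathcal{L}(N)^h$ is finite). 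Hence $\alpha$ is a well-defined function $U\to\mathrm{Sym}(\mathcal{P})$, and the relation $\alpha(g_1g_2)=\alpha(g_1)\alpha(g_2)$ exhibits it as a group homomorphism, i.e.\ an action of $U$ on $\mathcal{P}$.

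The whole argument is routine; no real obstacle is expected, and indeed the statement is recorded as Proposition~2 in Bubboloni and Gori (2015). The only places demanding a little care are the bookkeeping of the order of composition — recall that $gf$ denotes $g\circ f$, so $\alpha(g_1)\alpha(g_2)$ first applies $g_2$ — and the single use of the commutativity of $\Omega$ needed to match $\rho_2\rho_1$ with $\rho_1\rho_2$.
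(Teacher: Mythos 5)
Your proof is correct: the component-wise computation for (i) is exactly right, including the one genuinely needed use of the commutativity of $\Omega$ to convert $\rho_2\rho_1$ into $\rho_1\rho_2$, and the deduction of (ii) via $\alpha(1_G)=id$ and $\alpha(g_1g_2)=\alpha(g_1)\alpha(g_2)$ is the standard formal argument (the finiteness of $\mathcal{P}$ is not actually needed, since a two-sided inverse already gives bijectivity). The paper itself does not prove this proposition but defers to Bubboloni and Gori (2015, Proposition 2), so there is nothing to compare against beyond noting that your verification is the expected routine one.
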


Proposition \ref{U,V-corr} below is a first interesting consequence of Proposition \ref{action-l}. In particular, it says that, given a
{\sc scc} $C$, a partition $Y$ of $H$ and a partition $Z$ of $N$, we have that
 $C$ is $Y$-anonymous and $Z$-neutral if and only if $C\in\mathfrak{C}^{V(Y)\times W(Z)\times \{id\}}$; $C$ is $Y$-anonymous and immune to the reversal bias if and only if $C\in\mathfrak{C}^{V(Y)\times \{id\}\times \Omega}$; $C$ is $Z$-neutral and immune to the reversal bias if and only if $C\in\mathfrak{C}^{\{id\}\times W(Z)\times \Omega}$; $C$ is $Y$-anonymous, $Z$-neutral and immune to the reversal bias if and only if $C\in\mathfrak{C}^{G}$.

Before stating Proposition \ref{U,V-corr}, recall that if $X$ is a subset of a group $G$, the subgroup of $G$ generated by $X$ is defined as the intersection of all the subgroups of $G$ containing $X$ and it is denoted by $\langle X\rangle.$ It is well known that $\langle X\rangle$ consists of all the finite products of elements in $X$. If $X_1,X_2$ are subsets of $G$, we write $\langle X_1,X_2\rangle$ instead of $\langle X_1\cup X_2\rangle$.
 For further details see Jacobson (1974, Section 1.5).

\begin{proposition}\label{U,V-corr}
For every $i\in\{1,2\}$, let $Z_i\le S_h\times S_n$, $R_i\le \Omega$ and $U_i=Z_i\times R_i$. Then $\mathfrak{C}^{U_1}\cap \mathfrak{C}^{U_2}=\mathfrak{C}^{\langle U_1,U_2\rangle}$. In particular, $\mathfrak{F}^{U_1}\cap\, \mathfrak{F}^{U_2}=\mathfrak{F}^{\langle U_1,U_2\rangle}$.
\end{proposition}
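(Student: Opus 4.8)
The plan is to establish the two inclusions separately, the first being immediate. If $U\le U'\le G$, then every {\sc scc} satisfying \eqref{sccU1}--\eqref{sccU2} for all elements of $U'$ satisfies them in particular for all elements of $U$, so $\mathfrak{C}^{U'}\subseteq\mathfrak{C}^{U}$; applying this to $U_1,U_2\le\langle U_1,U_2\rangle$ yields $\mathfrak{C}^{\langle U_1,U_2\rangle}\subseteq\mathfrak{C}^{U_1}\cap\mathfrak{C}^{U_2}$. Once the reverse inclusion is proved, the ``in particular'' assertion will follow by intersecting the equality $\mathfrak{C}^{U_1}\cap\mathfrak{C}^{U_2}=\mathfrak{C}^{\langle U_1,U_2\rangle}$ with $\mathfrak{F}$, since $\mathfrak{F}^{U}=\mathfrak{F}\cap\mathfrak{C}^{U}$ for every $U$. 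I would also first record the group-theoretic identity $\langle U_1,U_2\rangle=\langle Z_1,Z_2\rangle\times\langle R_1,R_2\rangle$: ``$\subseteq$'' holds because the right-hand side is a subgroup of $G$ containing $U_1$ and $U_2$, while ``$\supseteq$'' follows by noting that $(z,id)\in\langle U_1,U_2\rangle$ for every $z$ in the generating set $Z_1\cup Z_2$ of $\langle Z_1,Z_2\rangle$, and that $(id,id,\rho_0)\in\langle U_1,U_2\rangle$ whenever $\rho_0\in\langle R_1,R_2\rangle$ (which forces $R_1=\Omega$ or $R_2=\Omega$, hence $(id,id,\rho_0)\in U_1\cup U_2$). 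Two consequences will be used below: every $(\varphi,\psi,\rho)\in\langle U_1,U_2\rangle$ has $(\varphi,\psi)\in\langle Z_1,Z_2\rangle$, and if in addition $\rho=\rho_0$ then $(id,id,\rho_0)\in U_1\cup U_2$.

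Next I would fix $C\in\mathfrak{C}^{U_1}\cap\mathfrak{C}^{U_2}$ and an arbitrary $(\varphi,\psi,\rho)\in\langle U_1,U_2\rangle$, and check \eqref{sccU1}--\eqref{sccU2} for this triple. Consider first the case $\rho=id$. Since $(\varphi,\psi)\in\langle Z_1,Z_2\rangle$, write it as a finite product of elements of $Z_1\cup Z_2$ and lift each factor to the triple with third entry $id$; this produces a factorization $(\varphi,\psi,id)=g_1g_2\cdots g_m$ in which each $g_\ell$ belongs to $U_1\cup U_2$ and has $\rho$-component $id$. Using the action identity \eqref{action-e} of Proposition \ref{action-l} to peel the factors off one at a time and invoking \eqref{sccU1} for each $g_\ell$ (legitimate because $C$ is consistent with whichever of $U_1,U_2$ contains $g_\ell$), a straightforward induction on $m$ gives $C(p^{(\varphi,\psi,id)})=\psi C(p)$ for every $p\in\mathcal{P}$, which is \eqref{sccU1} for $(\varphi,\psi,id)$.

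The remaining, and main, case is $\rho=\rho_0$, and the hard part will be precisely this one: consistency with $U_1,U_2$ supplies only the inequality \eqref{sccU2}, and only when $|C(p)|=1$, so the clean inductive propagation of the previous paragraph is unavailable and reversals cannot simply be pushed through a word. The remedy I propose is to isolate the reversal into a single factor that actually lies in $U_1\cup U_2$: writing $(\varphi,\psi,\rho_0)=(\varphi,\psi,id)(id,id,\rho_0)$ and applying \eqref{action-e} gives $p^{(\varphi,\psi,\rho_0)}=\bigl(p^{(id,id,\rho_0)}\bigr)^{(\varphi,\psi,id)}$. Assuming $|C(p)|=1$ and putting $p'=p^{(id,id,\rho_0)}$, the fact that $(id,id,\rho_0)\in U_1\cup U_2$ together with \eqref{sccU2} yields $C(p')\neq C(p)$; moreover $(\varphi,\psi,id)=(\varphi,\psi,\rho_0)(id,id,\rho_0)^{-1}\in\langle U_1,U_2\rangle$ has $\rho$-component $id$, so the case already treated gives $C\bigl((p')^{(\varphi,\psi,id)}\bigr)=\psi C(p')$. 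Finally, $\psi$ induces a bijection on the subsets of $N$, so $C(p')\neq C(p)$ forces $\psi C(p')\neq\psi C(p)$, that is, $C(p^{(\varphi,\psi,\rho_0)})\neq\psi C(p)$, which is \eqref{sccU2} for $(\varphi,\psi,\rho_0)$. Hence $C\in\mathfrak{C}^{\langle U_1,U_2\rangle}$, completing the nontrivial inclusion and the proof.
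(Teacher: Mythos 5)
Your proof is correct and follows essentially the same route as the paper's: both reduce the case $\rho=\rho_0$ to the case $\rho=id$ by splitting off a single factor $(id,id,\rho_0)$, which the product structure $U_i=Z_i\times R_i$ guarantees lies in $U_1\cup U_2$, and both handle the case $\rho=id$ by writing the element as a word in elements of $U_1\cup U_2$ with trivial third component and inducting on its length. The only cosmetic difference is that you package the key structural fact as the identity $\langle U_1,U_2\rangle=\langle Z_1,Z_2\rangle\times\langle R_1,R_2\rangle$, whereas the paper proves the equivalent statement that replacing the third component of an element of $\langle U_1,U_2\rangle_k$ by $id$ keeps it in $\langle U_1,U_2\rangle_k$.
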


\begin{proof}  Since $\langle U_1,U_2\rangle\leq G$ contains both $U_1$ and $U_2$, we  immediately get $\mathfrak{C}^{\langle U_1,U_2\rangle}\subseteq \mathfrak{C}^{U_1}\cap \mathfrak{C}^{U_2}.$ Let us now fix $C\in\mathfrak{C}^{U_1}\cap \mathfrak{C}^{U_2}$ and prove
 $C\in \mathfrak{C}^{\langle U_1,U_2\rangle}$.
Define, for every  $k\in \mathbb{N}$, the set $\langle U_1,U_2\rangle_k$ of the elements in $\langle U_1,U_2\rangle$ that can be written as product of $k$ elements of $U_1\cup U_2$. Then %, by definition of generated subgroup,
 we have $\langle U_1,U_2\rangle=\bigcup_{k\in\mathbb{N}}\langle U_1,U_2\rangle_k$ and to get $C\in \mathfrak{C}^{\langle U_1,U_2\rangle}$ it is enough to show the two following facts:
\begin{itemize}
\item[(a)] for every $k\in\mathbb{N},$
\begin{equation}\label{A1}
\mbox{for every } p\in\mathcal{P} \mbox{ and } g=(\varphi, \psi, id)\in \langle U_1,U_2\rangle_k,\mbox{ \eqref{sccU1} holds true;}
\end{equation}
\item[(b)]  for every $k\in\mathbb{N},$ $ p\in\mathcal{P}$ and $g=(\varphi, \psi, \rho_0)\in \langle U_1,U_2\rangle_k$, \eqref{sccU2} holds true.
\end{itemize}
First of all, for every $g=(\varphi,\psi,\rho)\in G$, define $\overline{g}=(\varphi,\psi,id)\in G$. We start
showing that, for every $k\in \mathbb{N}$,
\begin{equation}\label{B}
 \  g\in \langle U_1,U_2\rangle_k\ \mbox{ implies} \ \overline{g}\in \langle U_1,U_2\rangle_k.
 \end{equation}
 If $\rho=id$, there is nothing to prove. So assume $\rho=\rho_0$.
Since, for every  $i\in\{1,2\}$, we have that $U_i=Z_i\times  R_i$ with $Z_i\le S_h\times S_h$ and $R_i\le  \Omega$, then \eqref{B} surely holds for $k=1$. If $k\geq 2$, pick $g=g_1\cdots g_k=(\varphi,\psi,\rho_0)\in \langle U_1,U_2\rangle_k$, where $g_1,\ldots,g_k\in U_1\cup U_2$. Since $\rho_0$ has order two, the number of $j\in\{1,\dots,k\}$ such that the third component of $g_j$ is $\rho_0$ is odd. Pick $j\in\{1,\dots,k\}$ such that $g_j=(\varphi_j,\psi_j,\rho_0)$.
By the case $k=1$, we have that  $\overline{g}_j=(\varphi_j,\psi_j,id)\in U_1\cup U_2$, so that
$\overline{g}=g_1\ldots g_{j-1}\overline{g}_jg_{j+1}\dots g_k\in \langle U_1,U_2\rangle_k$ and its first and second components are equal to those of $g.$ Moreover, the number of factors  in $\overline{g}$ having as third component $\rho_0$ is even, which gives $\overline{g}=(\varphi,\psi,id).$

We now show (a), by induction on $k.$ If $k=1$, we have $g\in\langle U_1,U_2\rangle_1= U_1\cup U_2$ and so \eqref{A1} is guaranteed by $C\in\mathfrak{C}^{U_1}\cap \mathfrak{C}^{U_2}$.
Assume \eqref{A1}  up to some $k\in\mathbb{N}$ and show that it holds also for $k+1$. Let $p\in\mathcal{P}$ and $g=(\varphi,\psi,id)\in \langle U_1,U_2\rangle_{k+1}$. Then there exist $g_*=(\varphi_*,\psi_*,\rho_*)\in \langle U_1,U_2\rangle_{k}$ and $g_1=(\varphi_1,\psi_1,\rho_1)\in U_1\cup U_2$
such that $g=g_1g_*=(\varphi_1\varphi_*,\psi_1\psi_*,\rho_1\rho_*)$.
We want to show that $C(p^{g})=\psi_1\psi_*C(p)$.
Note that $g=\overline{g}_1\overline{g}_*$ and that, by \eqref{B},  $\overline{g}_*\in \langle U_1,U_2\rangle_{k}$
and $\overline{g}_1\in U_1\cup U_2.$
Then, using \eqref{action-e} and applying the inductive hypothesis for \eqref{A1} both to $\overline{g}_1$ and to $\overline{g}_*$,  we get
$C(p^{g})=C(p^{\overline{g}_1\overline{g}_*})=C((p^{\overline{g}_*})^{\overline{g}_1})=\psi_1C(p^{\overline{g}_*})=\psi_1\psi_*C(p)$.

We next show (b).
Let $k\in\mathbb{N}$,
$p\in\mathcal{P}$, $g=(\varphi, \psi, \rho_0)\in \langle U_1,U_2\rangle_{k}$ and $|C(p)|=1$. We need to show that $C(p^g)\neq \psi C(p).$
First of all note that, since $\langle U_1,U_2\rangle$ contains an element with third component $\rho_0$, then we necessarily have $R_1=\Omega$ or  $R_2=\Omega$, so that $(id,id,\rho_0)\in U_1\cup U_2$.
Moreover, we can express $g$ as $g=\overline{g}\,(id,id,\rho_0)$ and,  by \eqref{B}, $\overline{g} \in \langle U_1,U_2\rangle_{k}$.
Thus, by \eqref{action-e}  and (a), we have $C(p^g)=C((p^{(id,id,\rho_0)})^{\overline{g}})=\psi C(p^{(id,id,\rho_0)}).$
On the other hand, since $(id,id,\rho_0)\in U_1\cup U_2$ and $C\in\mathfrak{C}^{U_1}\cap \mathfrak{C}^{U_2}$, we get
 $C(p^{(id,id,\rho_0)})\neq C(p)$ and so $C(p^g)=\psi C(p^{(id,id,\rho_0)})\neq \psi C(p)$ as required.

Finally, as a consequence of $\mathfrak{C}^{U_1}\cap\, \mathfrak{C}^{U_2}=\mathfrak{C}^{\langle U_1,U_2\rangle}$, we also have that  $\mathfrak{F}^{U_1}\cap \mathfrak{F}^{U_2}=\mathfrak{C}^{U_1}\cap \mathfrak{C}^{U_2}\cap \mathfrak{F}=\mathfrak{C}^{\langle U_1,U_2\rangle}\cap \mathfrak{F}=  \mathfrak{F}^{\langle U_1,U_2\rangle}$.
\end{proof}

As an immediate consequence of Propositions \ref{ean}, \ref{others}, \ref{minimax} and \ref{U,V-corr} we get the following result.
\begin{proposition}\label{moulin2}	
%The following statements hold true.
\begin{itemize}
\item[(i)] For every $C\in\{Par,Bor,Cop,Min,Kem\}$, $C\in\mathfrak{C}_{Par}^{S_h\times S_n\times \{id\}}$.
	\item[(ii)] For every $C\in\{Par,Bor,Cop,Kem\}$, $C\in\mathfrak{C}_{Par}^{G}$.
	\item[(iii)] $Min\in \mathfrak{C}_{Par}^G$ if and only if one of the following conditions holds true:
	\begin{itemize}
		\item[(a)] $h\le 3$;
		\item[(b)] $n\le 3$;
		\item[(c)] $(h,n)\in\{(4,4),(5,4),(7,4),(5,5)\}$.
	\end{itemize}
\end{itemize}
\end{proposition}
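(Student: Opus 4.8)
The plan is to observe that the statement is essentially a translation of Propositions~\ref{ean}, \ref{others} and \ref{minimax} into the language of $U$-consistency, with Proposition~\ref{U,V-corr} used to assemble the relevant subgroups out of smaller commuting pieces. Recall from Section~\ref{scc} that, taking the trivial partitions $Y=\{H\}$ and $Z=\{N\}$ (so that $V(Y)=S_h$ and $W(Z)=S_n$), a {\sc scc} $C$ is anonymous if and only if $C\in\mathfrak{C}^{S_h\times\{id\}\times\{id\}}$, neutral if and only if $C\in\mathfrak{C}^{\{id\}\times S_n\times\{id\}}$, and immune to the reversal bias if and only if $C\in\mathfrak{C}^{\{id\}\times\{id\}\times\Omega}$. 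Each of the three subgroups occurring here has the product shape $Z_i\times R_i$ with $Z_i\le S_h\times S_n$ and $R_i\le\Omega$ demanded by the hypothesis of Proposition~\ref{U,V-corr}.

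First I would combine anonymity and neutrality. Since $\langle S_h\times\{id\}\times\{id\},\ \{id\}\times S_n\times\{id\}\rangle=S_h\times S_n\times\{id\}$, Proposition~\ref{U,V-corr} yields that $C$ is anonymous and neutral if and only if $C\in\mathfrak{C}^{S_h\times S_n\times\{id\}}$. Intersecting with the class $\mathfrak{C}_{Par}$ of efficient {\sc scc}s, i.e. the refinements of $Par$, this shows that $\mathfrak{C}_{Par}^{S_h\times S_n\times\{id\}}$ is exactly the set of efficient, anonymous and neutral {\sc scc}s. Part~(i) is then immediate from Proposition~\ref{ean}, which asserts that $Par$, $Bor$, $Cop$, $Min$ and $Kem$ all belong to that set.

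Next I would bring in the reversal bias by applying Proposition~\ref{U,V-corr} a second time, now with $\langle S_h\times S_n\times\{id\},\ \{id\}\times\{id\}\times\Omega\rangle=S_h\times S_n\times\Omega=G$, to conclude that $C$ is anonymous, neutral and immune to the reversal bias if and only if $C\in\mathfrak{C}^G$; intersecting again with $\mathfrak{C}_{Par}$ shows that $\mathfrak{C}_{Par}^G$ is precisely the set of efficient, anonymous, neutral {\sc scc}s that are immune to the reversal bias. Part~(ii) then follows at once from Propositions~\ref{ean} and \ref{others}. For part~(iii), Proposition~\ref{ean} gives that $Min$ is always efficient, anonymous and neutral, so $Min\in\mathfrak{C}_{Par}^G$ holds if and only if $Min$ is immune to the reversal bias, which by Proposition~\ref{minimax} happens exactly when one of (a), (b), (c) is satisfied.

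There is no genuine obstacle here; the only points worth a line of verification are that the subgroups involved do have the product form required by Proposition~\ref{U,V-corr} and the two elementary group identities $\langle S_h\times\{id\}\times\{id\},\ \{id\}\times S_n\times\{id\}\rangle=S_h\times S_n\times\{id\}$ and $\langle S_h\times S_n\times\{id\},\ \{id\}\times\{id\}\times\Omega\rangle=G$, both of which are clear since in each case the listed subgroups commute and their pairwise product already exhausts the target group.
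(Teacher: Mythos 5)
Your proposal is correct and follows exactly the paper's route: the paper states Proposition \ref{moulin2} as an immediate consequence of Propositions \ref{ean}, \ref{others}, \ref{minimax} and \ref{U,V-corr}, which is precisely the assembly you carry out (translating anonymity, neutrality and immunity to the reversal bias into consistency with respect to the three factor subgroups and then combining them via Proposition \ref{U,V-corr}). The two group-generation identities you verify are the only details the paper leaves implicit, and you handle them correctly.
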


 Proposition \ref{action-l} also allows to use notation and results concerning the action of a group on a set. We recall the basic facts that we are going to use.

Fix $U\le G$.
For every $p\in\mathcal{P}$, the set $p^U=\{p^g\in \mathcal{P}: g\in U\}$ is called the $U$-orbit of $p$ and the subgroup of $U$ defined by
$\mathrm{Stab}_U(p)=\{g\in U : p^g=p \}$ is called the stabilizer of $p$ in $U$. It is well known that
the set  $\mathcal{P}^U=\{p^U:p\in\mathcal{P}\}$ of the $U$-orbits is a partition of $\mathcal{P}$.
We use $\mathcal{P}^U$ as set of indexes and denote its elements with $j$. A vector $(p^j)_{j\in\mathcal{P}^U}\in\times_{j\in \mathcal{P}^U}\mathcal{P}$
is called a system of representatives of the $U$-orbits if, for every $j\in\mathcal{P}^U$, $p^j\in j$.
The set of the systems of representatives of the $U$-orbits is denoted by $\mathfrak{S}(U)$.
If $(p^j)_{j\in \mathcal{P}^U}\in \mathfrak{S}(U)$, then, for every $p\in \mathcal{P}$, there exist $ j\in\mathcal{P}^U$ and $(\varphi,\psi,\rho)\in U$ such that $p=p^{j\,(\varphi,\psi,\rho)}$. Note that if $p^{j_1\,(\varphi_1,\psi_1,\rho_1)}=p^{j_2\,(\varphi_2,\psi_2,\rho_2)}$ for some $j_1,j_2\in \mathcal{P}^U$ and some $(\varphi_1,\psi_1,\rho_1), (\varphi_2,\psi_2,\rho_2)\in U$, then $j_1= j_2$ and, by \eqref{action-e},
$(\varphi_2^{-1}\varphi_1,\psi_2^{-1}\psi_1,\rho_2^{-1}\rho_1)\in \mathrm{Stab}_U(p^{j_1})$.

The stabilizer of $p$ in $U$ evolves in a natural way through the action. Namely
 for every  $p\in\mathcal{P}$ and $g\in U$, we have
\[
\mathrm{Stab}_U(p^{g})=g\,\mathrm{Stab}_U(p)g^{-1}.
\]
This implies that if $V$ is a normal subgroup of $U$ and $p\in\mathcal{P}$, then  $\mathrm{Stab}_U(p)\leq V$ if and only if $\mathrm{Stab}_U(p^{g})\leq V$  for all  $g\in U$. Now, being $S_h\times S_n\times \{id\}$ normal in $G$, by an elementary group theory result, we have that $U\cap (S_h\times S_n\times \{id\})$ is normal in $U$. Thus, the above argument guarantees that, for every $j\in \mathcal{P}^U$, exactly one of the two following conditions holds true:
\begin{itemize}
	\item[-] for every $p\in j$, $\mathrm{Stab}_U(p)\le  S_h\times S_n\times \{id\}$;
	\item[-] for every $p\in j$, $\mathrm{Stab}_U(p)\not\le S_h\times S_n\times \{id\}$.
\end{itemize}
We then define
\begin{eqnarray}
&&\mathcal{P}_1^U=\left\{j\in\mathcal{P}^U: \forall p\in j,\, \mathrm{Stab}_U(p)\le S_h\times S_n\times \{id\}\right\}\label{P1},\\
&&\mathcal{P}_2^U=\left\{j\in\mathcal{P}^U: \forall p\in j, \,\mathrm{Stab}_U(p)\not \le S_h\times S_n\times \{id\}\right\}\label{P2}.
\end{eqnarray}
Of course, $\mathcal{P}_1^U\cup\mathcal{P}_2^U=\mathcal{P}^U$ and $\mathcal{P}_1^U\cap\mathcal{P}_2^U=\varnothing$. In particular, $\mathcal{P}_1^U$ and $ \mathcal{P}_2^U$ cannot be both empty. Obviously, if $U\leq S_h\times S_n\times \{id\}$, then $\mathcal{P}_2^U=\varnothing$ and $\mathcal{P}^U=\mathcal{P}_1^U\neq \varnothing.$

The sets $\mathcal{P}_1^U$ and $\mathcal{P}_2^U$ play an important role to check whether two given $U$-consistent {\sc scc}s are equal, as shown by the following results.

\begin{proposition}\label{rappresentanti} Let $U\leq S_h\times S_n \times \{id\}$ and $C,C'\in \mathfrak{C}^{U}$. Assume that there exists $(p^j)_{j\in\mathcal{P}^U }\in \mathfrak{S}(U)$
 such that $C(p^j)=C'(p^j)$ for all $j\in \mathcal{P}^U$. Then $C=C'$.
\end{proposition}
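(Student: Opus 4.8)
The plan is to reduce the equality $C=C'$ to the equality on orbit representatives by exploiting the two features of the setting that make $U$-consistency powerful: the set of $U$-orbits $\mathcal{P}^U$ partitions $\mathcal{P}$, and, since $U\leq S_h\times S_n\times\{id\}$, every element of $U$ has trivial reversal component, so that the simple identity \eqref{sccU1} (rather than the inequality \eqref{sccU2}) governs the behaviour of a $U$-consistent {\sc scc} along an entire orbit. So I fix an arbitrary $p\in\mathcal{P}$ and aim to prove $C(p)=C'(p)$.

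First I would use that $\mathcal{P}^U$ is a partition of $\mathcal{P}$ to pick the unique $j\in\mathcal{P}^U$ with $p\in j$; since $(p^j)_{j\in\mathcal{P}^U}\in\mathfrak{S}(U)$ is a system of representatives, $p^j\in j=p^U$, hence there is $g=(\varphi,\psi,\rho)\in U$ with $p=(p^j)^{g}$. Because $U\leq S_h\times S_n\times\{id\}$ we must have $\rho=id$, i.e.\ $g=(\varphi,\psi,id)$. Then I would apply \eqref{sccU1} to both $C$ and $C'$ at the profile $p^j$ with this $g$: since $\rho=id$, this yields $C\big((p^j)^{(\varphi,\psi,id)}\big)=\psi\,C(p^j)$ and $C'\big((p^j)^{(\varphi,\psi,id)}\big)=\psi\,C'(p^j)$. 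Invoking the hypothesis $C(p^j)=C'(p^j)$, the two right-hand sides agree, so $C(p)=C\big((p^j)^{g}\big)=\psi\,C(p^j)=\psi\,C'(p^j)=C'\big((p^j)^{g}\big)=C'(p)$. As $p$ was arbitrary, $C=C'$.

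I do not expect a genuine obstacle here: the argument is routine. The only subtlety worth flagging is that one does \emph{not} need uniqueness of the representation $p=(p^j)^{g}$ — any single $g\in U$ with $(p^j)^{g}=p$ does the job, since $C(p)$ and $C'(p)$ are already well defined and \eqref{sccU1} merely evaluates them. (If one wished, Proposition~\ref{action-l} together with the remarks on stabilizers shows that two such group elements differ by an element of $\mathrm{Stab}_U(p^j)$, which is consistent with the computation, but this is not needed.) It is precisely the restriction $U\leq S_h\times S_n\times\{id\}$ that keeps us out of the regime of \eqref{sccU2}; the analogous statement for general $U\leq G$ will require the more delicate bookkeeping carried out later using $\mathcal{P}_1^U$ and $\mathcal{P}_2^U$.
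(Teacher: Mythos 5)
Your argument is correct and is essentially identical to the paper's own proof: both write $p=p^{j\,(\varphi,\psi,id)}$ for the unique orbit $j$ containing $p$ and then push the equality $C(p^j)=C'(p^j)$ through \eqref{sccU1} to get $C(p)=\psi C(p^j)=\psi C'(p^j)=C'(p)$. Your remark that uniqueness of the representing group element is not needed is also accurate.
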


\begin{proof} Let $p\in \mathcal{P}$ and show that $C(p)=C'(p)$. We know there exist $j\in\mathcal{P}^U$ and $(\varphi,\psi,id)\in U$ such that $p=p^{j\,(\varphi,\psi,id)}$.
Then,
\begin{equation}\label{chain}
C(p)=C(p^{j\,(\varphi,\psi,id)})=\psi C(p^{j})=\psi C'(p^{j})=C'(p^{j\,(\varphi,\psi,id)})=C'(p).
\end{equation}
\end{proof}

\begin{proposition}\label{rappresentanti2} Let $U\leq G$ such that $U\not\le S_h\times S_n\times \{id\}$ and $C,C'\in \mathfrak{C}^{U}$. Assume that there exist $(p^j)_{j\in\mathcal{P}^U }\in \mathfrak{S}(U)$ and $(\varphi_*,\psi_*,\rho_0)\in U$
 such that $C(p^j)=C'(p^j)$ for all $j\in \mathcal{P}^U$ and $ C(p^{j\,(\varphi_*,\psi_*,\rho_0)})=C'(p^{j\,(\varphi_*,\psi_*,\rho_0)})$ for all $j\in \mathcal{P}_1^U$. Then $C=C'.$
\end{proposition}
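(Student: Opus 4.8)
The plan is to follow the blueprint of Proposition~\ref{rappresentanti}: fix an arbitrary $p\in\mathcal{P}$ and prove $C(p)=C'(p)$. Since $(p^j)_{j\in\mathcal{P}^U}\in\mathfrak{S}(U)$, we may write $p=p^{j\,(\varphi,\psi,\rho)}$ for suitable $j\in\mathcal{P}^U$ and $(\varphi,\psi,\rho)\in U$, and I would split the discussion according to $\rho$. If $\rho=id$, the computation \eqref{chain} goes through unchanged: by \eqref{sccU1} for $C$ and for $C'$ together with the assumption $C(p^j)=C'(p^j)$, we get $C(p)=\psi C(p^j)=\psi C'(p^j)=C'(p)$. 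So the whole problem is the case $\rho=\rho_0$, where I would further distinguish whether $j\in\mathcal{P}_2^U$ or $j\in\mathcal{P}_1^U$.

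Suppose first $j\in\mathcal{P}_2^U$. By the very definition \eqref{P2}, $\mathrm{Stab}_U(p^j)$ contains an element $(\bar\varphi,\bar\psi,\rho_0)$. Since $U$ is a group and $\rho_0^{-1}=\rho_0$, the element $(\varphi,\psi,\rho_0)(\bar\varphi,\bar\psi,\rho_0)^{-1}=(\varphi\bar\varphi^{-1},\psi\bar\psi^{-1},id)$ lies in $U$, and \eqref{action-e} together with $p^{j\,(\bar\varphi,\bar\psi,\rho_0)}=p^j$ gives $p=p^{j\,(\varphi,\psi,\rho_0)}=p^{j\,(\varphi\bar\varphi^{-1},\psi\bar\psi^{-1},id)}$. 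Thus $p$ is reached from the representative $p^j$ by an element of $U$ with trivial third component, so the $\rho=id$ case just treated applies and yields $C(p)=C'(p)$. Note that the hypothesis involving $(\varphi_*,\psi_*,\rho_0)$ plays no role here.

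Suppose now $j\in\mathcal{P}_1^U$. Here I would use the element $(\varphi_*,\psi_*,\rho_0)\in U$ provided by the hypothesis. Since $(\varphi,\psi,\rho_0)(\varphi_*,\psi_*,\rho_0)^{-1}=(\varphi\varphi_*^{-1},\psi\psi_*^{-1},id)\in U$, \eqref{action-e} gives $p=p^{j\,(\varphi,\psi,\rho_0)}=\bigl(p^{j\,(\varphi_*,\psi_*,\rho_0)}\bigr)^{(\varphi\varphi_*^{-1},\psi\psi_*^{-1},id)}$. Because $j\in\mathcal{P}_1^U$, the hypothesis guarantees $C(p^{j\,(\varphi_*,\psi_*,\rho_0)})=C'(p^{j\,(\varphi_*,\psi_*,\rho_0)})$, so applying \eqref{sccU1} to $C$ and to $C'$ with the element $(\varphi\varphi_*^{-1},\psi\psi_*^{-1},id)$ produces $C(p)=\psi\psi_*^{-1}C(p^{j\,(\varphi_*,\psi_*,\rho_0)})=\psi\psi_*^{-1}C'(p^{j\,(\varphi_*,\psi_*,\rho_0)})=C'(p)$. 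This exhausts all cases.

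I do not expect a genuine obstacle: once Proposition~\ref{action-l} and the dichotomy $\mathcal{P}^U=\mathcal{P}_1^U\cup\mathcal{P}_2^U$ are in hand, the proof is essentially bookkeeping. The one conceptual point worth flagging is the matching between the two kinds of orbits and the two kinds of data: for $j\in\mathcal{P}_1^U$ one genuinely needs the extra values of $C$ and $C'$ at $p^{j\,(\varphi_*,\psi_*,\rho_0)}$, whereas for $j\in\mathcal{P}_2^U$ the presence of a single $\rho_0$-element in $\mathrm{Stab}_U(p^j)$ lets one convert back to the $\rho=id$ situation and dispense with that hypothesis. The routine care required is to keep verifying that the manufactured group elements lie in $U$ (they are products of elements of $U$) and to use $\rho_0^2=id$ when simplifying third components.
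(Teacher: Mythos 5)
Your proof is correct and follows essentially the same route as the paper's: reduce to orbit representatives via Proposition~\ref{action-l}, handle $\rho=id$ as in \eqref{chain}, convert the $\rho=\rho_0$ case for $j\in\mathcal{P}_2^U$ back to the $id$ case using a stabilizer element with third component $\rho_0$, and use the extra data at $p^{j\,(\varphi_*,\psi_*,\rho_0)}$ for $j\in\mathcal{P}_1^U$. The only cosmetic difference is that the paper organizes the case split by asking whether $p$ is reachable from $p^j$ by an element with trivial third component (deducing $j\in\mathcal{P}_1^U$ when it is not), whereas you split directly on the dichotomy $\mathcal{P}_1^U$ versus $\mathcal{P}_2^U$; the underlying computations are identical.
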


\begin{proof} Let $p\in \mathcal{P}$ and show that $C(p)=C'(p)$. Let $j \in \mathcal{P}^U$ be the unique orbit such that $p\in j .$
If there exists $(\varphi,\psi,id)\in U$ such that $p=p^{j\,(\varphi,\psi,id)}$, then we get $C(p)=C'(p)$ operating as in \eqref{chain}.
So, assume that,
\begin{equation}\label{cond}
\mbox{for every } (\varphi,\psi,\rho)\in U  \mbox{ such that } p=p^{j\,(\varphi,\psi,\rho)},  \mbox{ we have } \rho=\rho_0.
\end{equation}
We show that \eqref{cond} implies $\mathrm{Stab}_U(p^j)\le  S_h\times S_n\times \{id\}$.  Indeed, suppose by contradiction that there exists $(\varphi_1,\psi_1,\rho_0)\in \mathrm{Stab}_U(p^j)$. Pick  $(\varphi,\psi,\rho_0)\in U$ such that $p=p^{j\,(\varphi,\psi,\rho_0)}$ and note that, by \eqref{action-e},
\[p=p^{j\,(\varphi,\psi,\rho_0)}=(p^{j\,(\varphi_1,\psi_1,\rho_0)})^{(\varphi,\psi,\rho_0)}=p^{j\,(\varphi\varphi_1,\psi\psi_1,id)}
\]
which contradicts \eqref{cond}. As a consequence, $j\in \mathcal{P}_1^U$ and thus $C(p^{j\,(\varphi_*,\psi_*,\rho_0)})=C'(p^{j\,(\varphi_*,\psi_*,\rho_0)})$. Pick  again $(\varphi,\psi,\rho_0)\in U$ such that $p=p^{j\,(\varphi,\psi,\rho_0)}$ and note that, by \eqref{action-e},
\[p=p^{j\,(\varphi,\psi,\rho_0)}=(p^{j\,(\varphi_*,\psi_*,\rho_0)})^{(\varphi\varphi_*^{-1},\psi\psi_*^{-1},id)}
\]
so that, since $C$ and $C'$ are $U$-consistent, we finally obtain
\[
C(p)=C\left((p^{j\,(\varphi_*,\psi_*,\rho_0)})^{(\varphi\varphi_*^{-1},\psi\psi_*^{-1},id)}\right)=\psi\psi_*^{-1}C(p^{j\,(\varphi_*,\psi_*,\rho_0)})
\]
\[
=\psi\psi_*^{-1}C'(p^{j\,(\varphi_*,\psi_*,\rho_0)})=C'\left((p^{j\,(\varphi_*,\psi_*,\rho_0)})^{(\varphi\varphi_*^{-1},\psi\psi_*^{-1},id)}\right)=C'(p).
\]
\end{proof}

Propositions \ref{rappresentanti} and \ref{rappresentanti2} indicate that the consistency level of a {\sc scc} is a tool for identifying it. Indeed, let $C,C'\in \mathfrak{C}$ and suppose that we are interested to know whether $C=C'$. Once $C$ and $C'$ are proved to be both in $\mathfrak{C}^{U}$ for a suitable $U\leq G$, it suffices to check whether the equality $C(p)=C'(p)$ holds true on a small subset of $\mathcal{P}$, which essentially agrees with a system of $U$-orbits representatives.
Since the largest is $U$, the smallest the number of $U$-orbits is, dealing with the largest possible $U$ reduces the number of checks to be done.

\subsection{Regular subgroups}\label{section-regular}

Bubboloni and Gori (2015) introduce the concept of regular subgroup to deal with symmetric social welfare functions. 
A subgroup $U$  of $G$ is said to be {\it regular} if, for every $p\in\mathcal{P}$,
\begin{equation}\label{property}
\begin{array}{c}
\mbox{there exists }\psi_*\in S_n\mbox{ conjugate  to } \rho_0\mbox{ such that}\\
\vspace{-3mm}\\
\mathrm{Stab}_U(p)\subseteq \left(S_h\times \{id\}\times \{id\}\right)\cup \left( S_h\times \{\psi_*\}\times \{\rho_0\}\right).\\
\end{array}
\end{equation}
Note that, within our notation, two permutations $\sigma_1,\sigma_2\in S_n$ are conjugate if there exists $u\in S_n$ such that $\sigma_1=u \sigma_2 u^{-1}.$
The following result, which is proved in Bubboloni and Gori (2015, Theorem 14 and Lemma 17), identifies an interesting and quite large class of regular subgroups of $G$. 

\begin{theorem}\label{regular}  Let $Y=\{Y_j\}_{j=1}^s$ be a partition of $H$, $Z=\{Z_k\}_{k=1}^t$ be a partition of $N$ with $|Z_{k^*}|=\max\{ |Z_k| \}_{k=1}^t$ and $R\le \Omega$. Then $V(Y)\times W(Z)\times R$ is regular  if and only if
\begin{equation}\label{reg-eq}
\gcd\left(\gcd(|Y_j|)_{j=1}^s, \,\mathrm{lcm}(|Z_{k^*}|!, |R|)\right)=1.
\end{equation}
In particular, $G$ is regular if and only if $\gcd(h,n!)=1$.
\end{theorem}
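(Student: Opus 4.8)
The plan is to make the definition of regularity explicit by describing $\mathrm{Stab}_U(p)$ for an arbitrary $p\in\mathcal{P}$, where $U=V(Y)\times W(Z)\times R$. Identifying each component $p_i$ with the permutation $\sigma_i=f_2^{-1}(p_i)\in S_n$, an element $(\varphi,\psi,\rho)\in U$ belongs to $\mathrm{Stab}_U(p)$ exactly when $\psi\,\sigma_{\varphi^{-1}(i)}\,\rho=\sigma_i$ for every $i\in H$. Iterating this identity along a $\varphi$-orbit through $i$ of length $\ell$ yields $\sigma_i=\psi^{\ell}\sigma_i\rho^{\ell}$; this single computation drives the whole argument. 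The statement then has a construction (``only if'') part and an impossibility (``if'') part, and I would do the construction part first since it is more direct.

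For the ``only if'' part, suppose \eqref{reg-eq} fails and pick a prime $\pi$ dividing both $\gcd(|Y_j|)_{j=1}^s$ and $\mathrm{lcm}(|Z_{k^*}|!,|R|)$. Since $|R|\le 2$, either $\pi\le|Z_{k^*}|$, or else $\pi=2$ and $R=\Omega$. As $\pi\mid|Y_j|$ for all $j$, choose $\varphi\in V(Y)$ whose restriction to each $Y_j$ is a product of $\pi$-cycles, so $\varphi$ is fixed-point-free with all orbits of length $\pi$. In the first case, take a $\pi$-cycle $\psi\in W(Z)$ supported on $Z_{k^*}$, fix $q_0\in\mathcal{L}(N)$, and on each $\varphi$-orbit $\{i,\varphi(i),\dots,\varphi^{\pi-1}(i)\}$ set $p_{\varphi^{m}(i)}=\psi^{m}q_0$; one checks $(\varphi,\psi,id)\in\mathrm{Stab}_U(p)$ with $\psi\ne id$, so no $\psi_*$ can satisfy \eqref{property} and $U$ is not regular. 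In the second case ($\pi=2$), take $\psi=id$, and on each (length-$2$) orbit set $p_i=q_0$, $p_{\varphi(i)}=q_0\rho_0$; then $(\varphi,id,\rho_0)\in\mathrm{Stab}_U(p)$, and since $id$ is not conjugate to $\rho_0$, again $U$ is not regular.

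For the ``if'' part, assume \eqref{reg-eq}, fix $p\in\mathcal{P}$ and $(\varphi,\psi,\rho)\in\mathrm{Stab}_U(p)$. First suppose $\rho=id$ and show $\psi=id$: if $\psi\ne id$, pick a prime $\pi\mid|\psi|$; the identity $\sigma_i=\psi^{\ell}\sigma_i$ forces $\pi\mid\ell$ for every $\varphi$-orbit length $\ell$, hence $\pi\mid|Y_j|$ for all $j$ (the $\varphi$-orbits partition each $Y_j$), while $\psi^{|\psi|/\pi}\in W(Z)$ has order $\pi$ and therefore acts on some $Z_k$ with a $\pi$-cycle, giving $\pi\le|Z_{k^*}|$; together these contradict \eqref{reg-eq}. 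Consequently, for any $(\varphi,\psi,\rho_0)\in\mathrm{Stab}_U(p)$ we have $(\varphi,\psi,\rho_0)^2=(\varphi^2,\psi^2,id)\in\mathrm{Stab}_U(p)$, so $\psi^2=id$; and any two stabilizer elements with third component $\rho_0$ differ by an element with third component $id$, hence share a common second component $\psi_*$. It remains only to prove $\psi_*$ is conjugate to $\rho_0$.

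To close, observe that a stabilizer element with third component $\rho_0$ exists only if $R=\Omega$, in which case \eqref{reg-eq} forces some $|Y_j|$ to be odd (otherwise $2$ would divide both arguments of the outer $\gcd$). Then $\varphi$, as a permutation of the odd-cardinality set $Y_j$, has an orbit of odd length $\ell$; choosing $i$ in it, $\sigma_i=\psi_*^{\ell}\sigma_i\rho_0^{\ell}=\psi_*^{\ell}\sigma_i\rho_0$ gives $\psi_*^{\ell}=\sigma_i\rho_0\sigma_i^{-1}$, and since $\psi_*^{2}=id$ with $\ell$ odd we get $\psi_*=\sigma_i\rho_0\sigma_i^{-1}$, a conjugate of $\rho_0$. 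This yields $\mathrm{Stab}_U(p)\subseteq(S_h\times\{id\}\times\{id\})\cup(S_h\times\{\psi_*\}\times\{\rho_0\})$, i.e. regularity. The statement about $G$ then follows by setting $Y=\{H\}$, $Z=\{N\}$, $R=\Omega$ and noting that $n\ge2$ makes $n!$ even, so $\mathrm{lcm}(n!,2)=n!$. I expect the main obstacle to be this last step: ensuring $\psi_*$ is a \emph{full} conjugate of $\rho_0$ rather than merely an involution, which is exactly where the parity of the block sizes $|Y_j|$ must be tracked.
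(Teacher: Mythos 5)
The paper does not actually prove Theorem \ref{regular}; it imports it wholesale from Bubboloni and Gori (2015, Theorem 14 and Lemma 17), so there is no internal proof to compare against. Your argument is correct and self-contained, which is a genuine addition. The orbit identity $\sigma_i=\psi^{\ell}\sigma_i\rho^{\ell}$ is the right engine: in the ``only if'' direction your two constructed profiles are stabilized by $(\varphi,\psi,id)$ with $\psi\neq id$ and by $(\varphi,id,\rho_0)$ respectively, and each visibly violates \eqref{property} (the first because an element with third component $id$ must have second component $id$, the second because $id$ is not conjugate to $\rho_0$); in the ``if'' direction the reduction of the $\rho_0$-case to the $id$-case by squaring and by taking quotients of two stabilizer elements is sound, and the extraction of an odd $\varphi$-orbit length from an odd $|Y_j|$ is exactly what upgrades the involution $\psi_*$ from $\psi_*^2=id$ to a full conjugate $\sigma_i\rho_0\sigma_i^{-1}$ of $\rho_0$ --- the step you correctly flag as the crux. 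One cosmetic omission: when $\mathrm{Stab}_U(p)$ contains no element with third component $\rho_0$, the existential in \eqref{property} still asks you to name a $\psi_*$ conjugate to $\rho_0$; any choice (say $\rho_0$ itself) works since the inclusion then reduces to $\mathrm{Stab}_U(p)\subseteq S_h\times\{id\}\times\{id\}$, which your Step 1 already gives. With that one sentence added, the proof is complete.
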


Theorem \ref{general} below, which is a corollary of Theorems \ref{fu-min-2} and \ref{fu-min-count2} proved in the next sections, clearly shows the importance of concept of regular subgroup in the context of {\sc scc}s.
\begin{theorem}\label{general}
Let $U\le G$ be regular. Then each $U$-consistent {\sc scc} admits a resolute $U$-consistent refinement.
\end{theorem}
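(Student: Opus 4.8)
The plan is to construct the desired resolute refinement $f$ by prescribing its value on a carefully chosen set of orbit representatives and then propagating it through the action of $U$ on $\mathcal{P}$ from Proposition~\ref{action-l}. Fix $C\in\mathfrak{C}^U$, a system of representatives $(p^j)_{j\in\mathcal{P}^U}\in\mathfrak{S}(U)$, and set $U_0=U\cap(S_h\times S_n\times\{id\})$, a (normal) subgroup of index at most $2$ in $U$; if $U\not\le S_h\times S_n\times\{id\}$, also fix $g_0=(\varphi_0,\psi_0,\rho_0)\in U$, so that $U=U_0\cup U_0g_0$. The first thing I would extract from regularity (combining \eqref{property} with \eqref{P1} and \eqref{P2}) is the dichotomy: for $j\in\mathcal{P}_1^U$ one has $\mathrm{Stab}_U(p^j)\subseteq S_h\times\{id\}\times\{id\}$, whereas for $j\in\mathcal{P}_2^U$ one has $\mathrm{Stab}_U(p^j)\cap U_0\subseteq S_h\times\{id\}\times\{id\}$ while $\mathrm{Stab}_U(p^j)$ contains an element $g_j=(\varphi_j,\psi_j,\rho_0)$ with $\psi_j$ conjugate to $\rho_0$ (hence $|\psi_j|=2$ and $\psi_j$ has at most one fixed point on $N$). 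In either case the image in $S_n$ of $\mathrm{Stab}_{U_0}(p^j)$ is trivial, which is exactly what will make the propagation below well defined; moreover for $j\in\mathcal{P}_2^U$ one checks, using $g_j$, that the $U$-orbit of $p^j$ coincides with its $U_0$-orbit.

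I would then seed $f$ as follows. For $j\in\mathcal{P}_1^U$: if $U\le S_h\times S_n\times\{id\}$, let $f(p^j)$ be any element of $C(p^j)$; otherwise set $\tilde p^j=(p^j)^{g_0}$, note that $\tilde p^j$ lies in the $U$-orbit of $p^j$ but, since $j\in\mathcal{P}_1^U$, in a different $U_0$-orbit, and choose $f(p^j)\in C(p^j)$ and $f(\tilde p^j)\in C(\tilde p^j)$ with $f(\tilde p^j)\ne\psi_0f(p^j)$. Such a choice exists: if either of $C(p^j),C(\tilde p^j)$ has at least two elements it is immediate, and if both are singletons it is forced by \eqref{sccU2} applied to $g_0$. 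For $j\in\mathcal{P}_2^U$: choose $f(p^j)\in C(p^j)$ with $\psi_jf(p^j)\ne f(p^j)$; this is possible because either $|C(p^j)|\ge2$ and $\psi_j$ has at most one fixed point, or $|C(p^j)|=1$, in which case \eqref{sccU2} applied to $g_j\in\mathrm{Stab}_U(p^j)$ (together with $p^{jg_j}=p^j$) says exactly that the unique element of $C(p^j)$ is not fixed by $\psi_j$. Finally, extend $f$ to $\mathcal{P}$ by $U_0$-equivariance: for $u=(\varphi,\psi,id)\in U_0$ put $f((p^j)^u)=\psi f(p^j)$, and likewise from $\tilde p^j$ in the case at hand. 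Triviality of the $S_n$-image of $\mathrm{Stab}_{U_0}(p^j)$ makes this unambiguous, $U_0$-consistency of $C$ (a special case of its $U$-consistency) yields $f(p)\in C(p)$ for every $p$, and $f$ is $U_0$-consistent by construction; since $\mathcal{P}^U=\mathcal{P}_1^U\cup\mathcal{P}_2^U$ this defines $f$ everywhere.

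It then remains to verify the two clauses of \eqref{functionU} on all of $U$. The clause for triples with third component $id$ is immediate, those triples lying in $U_0$. For a triple $(\varphi,\psi,\rho_0)\in U$, write it as $(\varphi\varphi_0^{-1},\psi\psi_0^{-1},id)\,g_0$; by $U_0$-consistency of $f$ the inequality $f(p^{(\varphi,\psi,\rho_0)})\ne\psi f(p)$ is equivalent to $f(p^{g_0})\ne\psi_0f(p)$, so it only has to be checked with $p$ ranging over the $U$-orbit of each representative. Expanding $p^{g_0}$ with \eqref{action-e}: when $j\in\mathcal{P}_2^U$, right-multiplying the resulting triple by $g_j$ lands it in $U_0$ and the inequality collapses to $\psi_jf(p^j)\ne f(p^j)$; when $j\in\mathcal{P}_1^U$ a similar manipulation using $g_0$ collapses it, on both $U_0$-orbits, to $f(\tilde p^j)\ne\psi_0f(p^j)$. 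These are precisely the inequalities secured at the seeding stage, so $f\in\mathfrak{F}_C^U$, which is the assertion; as noted, this also follows from Theorems~\ref{fu-min-2} and~\ref{fu-min-count2}. The main obstacle — and the only place where regularity is genuinely needed — is the seeding on $\mathcal{P}_2^U$-orbits: one must pick a winner in $C(p^j)$ that dodges the (at most one) fixed point of $\psi_j$, and this is possible exactly because $\psi_j$ is conjugate to $\rho_0$ and because \eqref{sccU2} excludes the degenerate one-element situation.
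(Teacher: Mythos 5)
Your construction is correct and follows essentially the same route as the paper's: you split $\mathcal{P}^U$ into $\mathcal{P}_1^U$ and $\mathcal{P}_2^U$, seed values on representatives subject to exactly the conditions defining the sets $A^1_C(p^j)$ and $A^2_C(p^j)$ of Section \ref{II} (with nonemptiness secured by regularity and \eqref{sccU2}, as in the proof of Theorem \ref{fu-min-count2}), and extend by equivariance as in Propositions \ref{f-fu} and \ref{fu-min-ex}. The only cosmetic difference is that on $\mathcal{P}_2^U$-orbits you extend via $U_0$ alone after observing that the $U$-orbit equals the $U_0$-orbit, whereas the paper's formula \eqref{fp} handles the $\rho=\rho_0$ case directly through the conjugating element $\sigma_j$; the two are interchangeable.
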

Let us now collect some facts about regular subgroups that we are going to use in the sequel. Recall that the subsets $\mathcal{P}_1^U$ and $\mathcal{P}_2^U$ of $\mathcal{P}^U$ are defined in \eqref{P1} and \eqref{P2} respectively.

\begin{lemma}\label{collect} Let $U\leq G$ be regular. %Then the following statements hold true:
\begin{itemize}
\item[(i)] \[
\mathcal{P}_1^U=\left\{j\in\mathcal{P}^U: \forall p\in j,\, \mathrm{Stab}_U(p)\le S_h\times\{id\}\times \{id\}\right\},
\]
\[
\mathcal{P}_2^U=\left\{j\in\mathcal{P}^U: \forall p\in j, \,\mathrm{Stab}_U(p)\not \le S_h\times \{id\}\times \{id\}\right\}.
\]
\item[(ii)] If $p\in \mathcal{P}$ is such that $\mathrm{Stab}_U(p)\not \le S_h\times \{id\}\times \{id\},$ then the permutation  $\psi_*\in S_n$ in \eqref{property} is unique.
\item[(iii)]If  $W\leq U$, then $W$ is regular too. In particular, $G$ is regular if and only if each subgroup of $G$ is regular.
\end{itemize}

\end{lemma}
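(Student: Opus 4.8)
The plan is to derive all three items directly from the definition \eqref{property} of regularity, the one structural fact being that the two ``pieces'' $S_h\times\{id\}\times\{id\}$ and $S_h\times\{\psi_*\}\times\{\rho_0\}$ occurring there are disjoint, since they are distinguished by their third component ($id\neq\rho_0$). I would start with (iii), which is the easiest: for $W\leq U$ and $p\in\mathcal{P}$ one has $\mathrm{Stab}_W(p)=W\cap\mathrm{Stab}_U(p)\subseteq\mathrm{Stab}_U(p)$, so the inclusion \eqref{property} holding for $U$ with some $\psi_*$ conjugate to $\rho_0$ holds a fortiori for $W$ with the same $\psi_*$; hence $W$ is regular. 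The ``in particular'' clause is then immediate: one implication is this applied with $U=G$, the other is trivial, taking $W=G$.

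For (i), the key step I would isolate and prove first is the equivalence, valid for regular $U$ and every $p\in\mathcal{P}$,
\[
\mathrm{Stab}_U(p)\leq S_h\times S_n\times\{id\}\quad\Longleftrightarrow\quad \mathrm{Stab}_U(p)\leq S_h\times\{id\}\times\{id\}.
\]
The direction $\Leftarrow$ is trivial. For $\Rightarrow$, pick $(\varphi,\psi,\rho)\in\mathrm{Stab}_U(p)$: the hypothesis forces $\rho=id$, while \eqref{property} places the element in $\bigl(S_h\times\{id\}\times\{id\}\bigr)\cup\bigl(S_h\times\{\psi_*\}\times\{\rho_0\}\bigr)$; having third component $id\neq\rho_0$, it cannot lie in the second set, so it lies in $S_h\times\{id\}\times\{id\}$, i.e.\ $\psi=id$. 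Granting this equivalence, both displayed formulas of (i) follow at once, since the dichotomy recorded just before \eqref{P1}--\eqref{P2} (every orbit $j\in\mathcal{P}^U$ satisfies exactly one of the two stabilizer conditions) is insensitive to replacing $S_h\times S_n\times\{id\}$ by $S_h\times\{id\}\times\{id\}$.

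Finally, (ii) follows from the same analysis. If $\mathrm{Stab}_U(p)\not\leq S_h\times\{id\}\times\{id\}$, then by the equivalence above it is not contained in $S_h\times S_n\times\{id\}$ either, hence it contains an element $g=(\varphi_1,\psi_1,\rho_0)$; by \eqref{property} applied with any admissible $\psi_*$, the element $g$ must lie in the $S_h\times\{\psi_*\}\times\{\rho_0\}$ piece and not in the other (again because of the third component), forcing $\psi_1=\psi_*$. Thus every admissible $\psi_*$ equals the second component of this one fixed $g$, so it is unique. I do not foresee a genuine obstacle; the only care needed is the bookkeeping with the disjoint pieces of \eqref{property} and the observation that the ``exactly one of two conditions'' dichotomy transports verbatim, which is exactly what the equivalence proved in (i) delivers.
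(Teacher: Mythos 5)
Your proposal is correct and follows essentially the same route as the paper: (iii) via $\mathrm{Stab}_W(p)=W\cap\mathrm{Stab}_U(p)$, (ii) by noting that any element of the stabilizer with third component $\rho_0$ pins down $\psi_*$, and (i) as a direct consequence of the definitions plus regularity. You merely spell out in full the pointwise equivalence $\mathrm{Stab}_U(p)\le S_h\times S_n\times\{id\}\Leftrightarrow\mathrm{Stab}_U(p)\le S_h\times\{id\}\times\{id\}$ that the paper dismisses as ``immediate,'' which is a faithful and accurate elaboration.
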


\begin{proof}$(i)$ It is an immediate consequence of the definitions of $\mathcal{P}_1^U$ and $ \mathcal{P}_2^U$  and of regular subgroup.

$(ii)$ Assume that
$\mathrm{Stab}_U(p)\subseteq \left(S_h\times \{id\}\times \{id\}\right)\cup \left( S_h\times \{\psi_*\}\times \{\rho_0\}\right)$ as well as $\mathrm{Stab}_U(p)\subseteq \left(S_h\times \{id\}\times \{id\}\right)\cup \left( S_h\times \{\psi_{**}\}\times \{\rho_0\}\right)$, for suitable $\psi_*,\psi_{**} \in S_n$ and pick $(\varphi,\psi, \rho_0)\in \mathrm{Stab}_U(p).$ Then, we have $\psi=\psi_*$ as well as $\psi=\psi_{**},$ so that $\psi_*=\psi_{**}.$

$(iii)$ Simply observe that, for every $p\in \mathcal{P}$,  $\mathrm{Stab}_W(p)=W\cap\, \mathrm{Stab}_U(p)$.
\end{proof}

In the Appendix, under the assumption that $U$ is a regular subgroup of $G$, we will discuss when $\mathcal{P}_1^U\neq \varnothing$ or $\mathcal{P}_2^U\neq \varnothing$.

\subsection{Existence of $U$-consistent resolute refinements for $U\leq S_h\times S_n\times \{id\}$}

In this section we focus on the set $\mathfrak{F}^U_C$, where $U$ is a regular subgroup of $G$ included in $S_h\times S_n\times \{id\}$, and $C$ is a $U$-consistent {\sc scc}.

\begin{proposition}\label{f-fu}
Let $U\le S_h\times S_n\times \{id\}$ be regular, $(p^j)_{j\in\mathcal{P}^U}\in\mathfrak{S}(U)$ and $C\in\mathfrak{C}^U$. For every $j\in\mathcal{P}^U$, let $x_j\in C(p^j)$.
Then there exists a unique $f\in\mathfrak{F}^U_C$ such that, for every $j\in\mathcal{P}^U$, $f(p^j)=x_j$.
\end{proposition}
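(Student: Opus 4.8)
The plan is to define $f$ explicitly on each orbit and check it is well defined, then verify consistency and uniqueness. First I would fix a system of representatives $(p^j)_{j\in\mathcal{P}^U}\in\mathfrak{S}(U)$ and the chosen winners $x_j\in C(p^j)$. For an arbitrary $p\in\mathcal{P}$, let $j\in\mathcal{P}^U$ be the orbit containing $p$; since $U\le S_h\times S_n\times\{id\}$, there is $(\varphi,\psi,id)\in U$ with $p=p^{j\,(\varphi,\psi,id)}$, and I would set $f(p)=\psi(x_j)$. The crucial point — and the main obstacle — is to show this does not depend on the choice of $(\varphi,\psi,id)$. Suppose $p=p^{j\,(\varphi_1,\psi_1,id)}=p^{j\,(\varphi_2,\psi_2,id)}$. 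By the discussion after \eqref{action-e}, $(\varphi_2^{-1}\varphi_1,\psi_2^{-1}\psi_1,id)\in\mathrm{Stab}_U(p^j)$. Here is where regularity enters: by Lemma \ref{collect}(i), since $j\in\mathcal{P}^U=\mathcal{P}_1^U$ (because $\mathcal{P}_2^U=\varnothing$ as $U\le S_h\times S_n\times\{id\}$), we have $\mathrm{Stab}_U(p^j)\le S_h\times\{id\}\times\{id\}$, hence $\psi_2^{-1}\psi_1=id$, i.e. $\psi_1=\psi_2$, giving $\psi_1(x_j)=\psi_2(x_j)$. So $f$ is well defined as a function $\mathcal{P}\to N$.

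Next I would check $f\in\mathfrak{F}_C^U$. To see $f(p)\in C(p)$: with notation as above, $f(p)=\psi(x_j)\in\psi C(p^j)=C(p^{j\,(\varphi,\psi,id)})=C(p)$, using $x_j\in C(p^j)$ and $C\in\mathfrak{C}^U$ (equation \eqref{sccU1}), so $f$ is a resolute refinement of $C$, i.e. $f\in\mathfrak{F}_C$. To see $U$-consistency, take $p\in\mathcal{P}$ and $(\varphi',\psi',id)\in U$. Writing $p=p^{j\,(\varphi,\psi,id)}$, we get $p^{(\varphi',\psi',id)}=p^{j\,(\varphi'\varphi,\,\psi'\psi,\,id)}$ by \eqref{action-e}, so by the definition of $f$, $f(p^{(\varphi',\psi',id)})=(\psi'\psi)(x_j)=\psi'(\psi(x_j))=\psi'f(p)$. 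Since $U\le S_h\times S_n\times\{id\}$ contains no element with third component $\rho_0$, the condition \eqref{functionU} reduces to exactly this equality, so $f\in\mathfrak{F}^U$, and therefore $f\in\mathfrak{F}_C^U$. Also, taking $p=p^j$ with the identity choice $(\varphi,\psi,id)=(id,id,id)$ yields $f(p^j)=x_j$ for every $j$, as required.

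Finally, for uniqueness, suppose $g\in\mathfrak{F}_C^U$ also satisfies $g(p^j)=x_j$ for all $j\in\mathcal{P}^U$. Both $f$ and $g$ lie in $\mathfrak{C}^U$ (identifying resolute {\sc scc}s with social choice functions) and agree on the system of representatives $(p^j)_{j\in\mathcal{P}^U}$, so Proposition \ref{rappresentanti} gives $f=g$. This completes the argument. I expect the only genuinely substantive step to be the well-definedness check, which is precisely where the regularity hypothesis (via the triviality of the $\psi$-component of the stabilizers) is used; everything else is a routine unwinding of the action identity \eqref{action-e} and the definitions.
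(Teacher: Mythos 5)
Your proposal is correct and follows essentially the same route as the paper's proof: the explicit orbit-wise definition $f(p)=\psi(x_j)$, the well-definedness check via the triviality of the $\psi$-component of stabilizers under regularity, the verification of \eqref{functionC} and \eqref{functionU} through \eqref{action-e} and \eqref{sccU1}, and uniqueness via Proposition \ref{rappresentanti}. The only cosmetic difference is that you justify the stabilizer containment through Lemma \ref{collect}(i) and the emptiness of $\mathcal{P}_2^U$, whereas the paper invokes the regularity condition \eqref{property} directly; both are immediate.
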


\begin{proof}
Let us consider  $f\in\mathfrak{F}$ defined, for every $p\in\mathcal{P}$, as follows. Given $p\in \mathcal{P}$, consider the unique $j\in\mathcal{P}^U$ such that $p\in j$ and the nonempty set $U_p=\{(\varphi,\psi,id)\in U: p=p^{j\,(\varphi,\psi,id)}\}$. Pick $(\varphi,\psi,id)\in U_p$ and let
$f(p)=\psi (x_j)$. We need to prove that the value of $f(p)$ does not depend on the particular element chosen in $U_p$.
Indeed, let $(\varphi_1,\psi_1,id),(\varphi_2,\psi_2,id)\in U_p$ and recall that $(\varphi_2^{-1}\varphi_1,\psi_2^{-1}\psi_1,id)\in \mathrm{Stab}_U(p^j)$.
Since $U$ is regular, that gives  $\psi_1=\psi_2$ and, in particular, $\psi_1 (x_j)=\psi_2(x_j)$.

We show that $f$ satisfies all the desired properties.
First of all, since $U\le G$, we have $(id,id,id)\in U$ and thus the definition of $f$ immediately implies
 $f(p^j)=x_j$.

Let us now prove that $f\in\mathfrak{F}^U$.
Consider then $p\in\mathcal{P}$ and $(\varphi,\psi,id)\in U$ and show that $f(p^{(\varphi,\psi,id)})=\psi f(p)$. Let $p=p^{j\,(\varphi_1,\psi_1,id)}$ for suitable $j\in\mathcal{P}^U$ and $(\varphi_1,\psi_1,id)\in U$. Thus,
$
f(p)=\psi_1(x_j)
$
and, by \eqref{action-e},
$
f(p^{(\varphi,\psi,id)})=f(p^{j\,(\varphi\varphi_1,\psi\psi_1,id)})=\psi\psi_1  (x_j)=\psi f(p).
$

Let us next prove that $f\in\mathfrak{F}_{C}$. Consider then $p\in\mathcal{P}$ and show that $f(p)\in C(p)$. Let $p=p^{j\,(\varphi_1,\psi_1,id)}$ for suitable $j\in\mathcal{P}^U$ and $(\varphi_1,\psi_1,id)\in U$. Thus,
$
f(p)=\psi_1(x_j)
$ and, since
$C$ is $U$-consistent,
$
\psi_1(x_j)\in \psi_1C(p^j)=C(p^{j\,(\varphi_1,\psi_1,id)})=C(p).
$

Finally, in order to prove uniqueness, let $f'\in\mathfrak{F}^U_C\subseteq \mathfrak{C}^U$ such that
 $f'(p^j)=x_j$ for all $j\in\mathcal{P}^U$. Then $f'$ and $f$ coincides on $(p^j)_{j\in\mathcal{P}^U}\in\mathfrak{S}(U)$ and Proposition \ref{rappresentanti} applies giving $f'=f.$
\end{proof}

Let $U\le S_h\times S_n\times \{id\}$ be regular, $(p^j)_{j\in \mathcal{P}^U}\in\mathfrak{S}(U)$ and $C\in\mathfrak{C}^U$. Let
$\Phi: \times_{j\in\mathcal{P}^U} C(p^j)	\to \mathfrak{F}^{U}_C$ be the function which associates with every  $(x_j)_{j\in \mathcal{P}^U}\in \times_{j\in\mathcal{P}^U} C(p^j)$ the unique $f\in\mathfrak{F}^U_C$ defined in Proposition \ref{f-fu}.
Of course, $\Phi$ depends on $U$, $(p^j)_{j\in \mathcal{P}^U}$ and $C$ but we do not emphasize that dependence in the notation. Note also that $\Phi$ is injective.

\begin{theorem}\label{fu-min-2}Let $U\le S_h\times S_n\times \{id\}$ be regular, $(p^j)_{j\in\mathcal{P}^U}\in\mathfrak{S}(U)$  and  $C\in\mathfrak{C}^U$. Then
\[
\mathfrak{F}_{C}^{U}=\Phi\left(\times_{j\in\mathcal{P}^U}C(p^j)	\right).
\]
Moreover, we have that
\[
|\mathfrak{F}_{C}^{U}|= \prod_{j\in\mathcal{P}^U}\left|C(p^j)\right|
\]
and, in particular,
$\mathfrak{F}^U_{C}\neq\varnothing$.
\end{theorem}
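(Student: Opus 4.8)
The plan is to deduce everything from Proposition~\ref{f-fu} together with the injectivity of $\Phi$ already recorded above. Since $\Phi$ takes values in $\mathfrak{F}^U_C$ by construction, the inclusion $\Phi\big(\times_{j\in\mathcal{P}^U}C(p^j)\big)\subseteq\mathfrak{F}^U_C$ is immediate; hence the only content in the first assertion is the reverse inclusion, i.e.\ the surjectivity of $\Phi$ onto $\mathfrak{F}^U_C$.

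To prove it, I would fix an arbitrary $f\in\mathfrak{F}^U_C$ and set $x_j=f(p^j)$ for every $j\in\mathcal{P}^U$. Because $f$ is a resolute refinement of $C$, condition~\eqref{functionC} gives $x_j=f(p^j)\in C(p^j)$, so $(x_j)_{j\in\mathcal{P}^U}\in\times_{j\in\mathcal{P}^U}C(p^j)$ and $g:=\Phi\big((x_j)_{j\in\mathcal{P}^U}\big)$ is well defined and lies in $\mathfrak{F}^U_C$. By the \emph{uniqueness} clause of Proposition~\ref{f-fu}, $g$ is the only element of $\mathfrak{F}^U_C$ whose value at $p^j$ equals $x_j$ for all $j\in\mathcal{P}^U$; since $f\in\mathfrak{F}^U_C$ satisfies exactly these equalities by the choice of the $x_j$, we conclude $f=g=\Phi\big((x_j)_{j\in\mathcal{P}^U}\big)$. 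Thus $f\in\Phi\big(\times_{j\in\mathcal{P}^U}C(p^j)\big)$, which establishes $\mathfrak{F}^U_C=\Phi\big(\times_{j\in\mathcal{P}^U}C(p^j)\big)$.

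The remaining statements follow at once. Since $\Phi$ is injective and, by what we just proved, surjective onto $\mathfrak{F}^U_C$, it is a bijection from $\times_{j\in\mathcal{P}^U}C(p^j)$ onto $\mathfrak{F}^U_C$; as $\mathcal{P}=\mathcal{L}(N)^h$ is finite and hence $\mathcal{P}^U$ is a finite index set, this yields $|\mathfrak{F}^U_C|=\big|\times_{j\in\mathcal{P}^U}C(p^j)\big|=\prod_{j\in\mathcal{P}^U}|C(p^j)|$. Finally each $C(p^j)$ is nonempty, so the product set is nonempty, and therefore $\mathfrak{F}^U_C\neq\varnothing$.

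I do not expect any genuine obstacle here: the whole construction, the regularity argument, and the verification that $\Phi$ is well defined and injective have all been carried out in Proposition~\ref{f-fu} and in the paragraph preceding the theorem. The one point deserving care is to invoke the uniqueness part of Proposition~\ref{f-fu} (not merely its existence part) when showing that an arbitrary $f\in\mathfrak{F}^U_C$ coincides with $\Phi$ applied to its own values on the chosen system of representatives.
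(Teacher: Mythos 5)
Your proof is correct and follows essentially the same route as the paper's: take an arbitrary $f\in\mathfrak{F}^U_C$, observe $f(p^j)\in C(p^j)$, and use the uniqueness clause of Proposition \ref{f-fu} to conclude $\Phi\bigl((f(p^j))_{j\in\mathcal{P}^U}\bigr)=f$, so that $\Phi$ is a bijection and the cardinality and nonemptiness claims follow from the nonemptiness of each $C(p^j)$. No issues.
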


\begin{proof}Consider $f\in \mathfrak{F}^{U}_{C}$ and note that,
 for every $j\in\mathcal{P}^U$, $f(p^j)\in C(p^j)$ and $\Phi\left((f(p^j))_{j\in\mathcal{P}^U}\right)=f$. Then $\Phi$ is bijective, so that $|\mathfrak{F}_{C}^{U}|=\left|\times_{j\in\mathcal{P}^U}C(p^j)	\right|= \prod_{j\in\mathcal{P}^U}\left|C(p^j)\right|.$
 Since, for every $j\in\mathcal{P}^U$,  $C(p^j)\neq \varnothing$, it finally follows that $\mathfrak{F}^U_{C}\neq\varnothing$.
\end{proof}

\subsection{Existence of $U$-consistent resolute refinements for $U\not\leq S_h\times S_n\times \{id\}$}\label{II}
In this section we focus on the set $\mathfrak{F}^U_C$, where $U$ is a regular subgroup of $G$ not included in $S_h\times S_n\times \{id\}$ and $C$ is a $U$-consistent {\sc scc}.
We start with some crucial definitions.

Let $U\le G$ be regular such that $U\not\le S_h\times S_n\times \{id\}$, $C\in\mathfrak{C}^U$,
$(p^j)_{j\in\mathcal{P}^U}\in\mathfrak{S}(U)$ and $(\varphi_*,\psi_*,\rho_0)\in U$.
Define, for every $j\in \mathcal{P}^U_1$, the set
\begin{equation}\label{A1C}
A^1_C(p^j)=\{(y,z)\in C(p^j) \times  C(p^{j\,(\varphi_*,\psi_*,\rho_0)}): z\neq \psi_*(y)\},
\end{equation}
and, for every $j\in \mathcal{P}^U_2$, the set
\begin{equation}\label{A2C}
A^2_C(p^j)=\left\{x\in C(p^j): \psi_j(x)\neq x\right\},
\end{equation}
where $\psi_j$ is the unique element in $S_n$ such that
\begin{equation}\label{psi-j}
\mathrm{Stab}_U(p^j)\subseteq (S_h\times \{id\}\times \{id\})\cup (S_h\times \{\psi_j\}\times \{\rho_0\}).
\end{equation}
Note that that uniqueness of $\psi_j$ is guaranteed by Lemma \ref{collect}(ii).

Next if $\mathcal{P}_1^U\neq \varnothing$, then define
\[
A^1_C=\times_{j\in \mathcal{P}_1^U}A^1_C(p^j),
\]
and if $\mathcal{P}_2^U\neq \varnothing$, then define
\[
A^2_C=\times_{j\in\mathcal{P}_2^U}A^2_C(p^j).
\]

Of course, all the sets above defined depend on $U$, $(p^j)_{j\in\mathcal{P}^U}$, $(\varphi_*,\psi_*,\rho_0)$ and $C$ but we do not emphasize that dependence in the notation.
\begin{proposition}\label{fu-min-ex}
Let $U\le G$ be regular such that $U\not\le S_h\times S_n\times \{id\}$, $(p^j)_{j\in\mathcal{P}^U}\in\mathfrak{S}(U)$, $(\varphi_*,\psi_*,\rho_0)\in U$ and $C\in\mathfrak{C}^U$.
For every $j\in \mathcal{P}_1^U$, let
$(y_j,z_j)\in A^1_C(p^j)$
and, for every $j\in\mathcal{P}_2^U$, let
$x_j\in A^2_C(p^j)$.
Then there exists
a unique $f\in\mathfrak{F}^{U}_C$ such that
$f(p^j)=y_j$ and $f(p^{j\,(\varphi_*,\psi_*,\rho_0)})=z_j$ for all $j\in\mathcal{P}_1^U$, and
$f(p^j)=x_j$ for all  $j\in\mathcal{P}_2^U$.
\end{proposition}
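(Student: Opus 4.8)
The plan is to construct the required $f$ by hand, one $U$-orbit at a time, in the spirit of the proof of Proposition~\ref{f-fu}; the new feature is that $U$ now contains reversing elements, so the orbits must be split into two kinds according to their stabilizers. Given $p\in\mathcal{P}$, let $j\in\mathcal{P}^U$ be the orbit containing it. If $j\in\mathcal{P}_2^U$, I would use regularity of $U$ and Lemma~\ref{collect}(ii) to fix an element $(\varphi_j',\psi_j,\rho_0)\in\mathrm{Stab}_U(p^j)$ with $\psi_j$ as in \eqref{psi-j}; then \eqref{action-e} shows that every $p\in j$ can be written in the form $p=p^{j\,(\varphi,\psi,id)}$, and I set $f(p)=\psi(x_j)$. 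If $j\in\mathcal{P}_1^U$, then $\mathrm{Stab}_U(p^j)\le S_h\times\{id\}\times\{id\}$, so the orbit $j$ is the disjoint union of the profiles of the form $p^{j\,(\varphi,\psi,id)}$ and those of the form $p^{j\,(\varphi,\psi,\rho_0)}$; on the first I set $f(p)=\psi(y_j)$, and on the second, after rewriting $p^{j\,(\varphi,\psi,\rho_0)}=(p^{j\,(\varphi_*,\psi_*,\rho_0)})^{(\varphi\varphi_*^{-1},\psi\psi_*^{-1},id)}$ via \eqref{action-e}, I set $f(p)=\psi\psi_*^{-1}(z_j)$. I would first check that $f$ is well defined: whenever $p$ has two representations of the same kind, the two parameter triples differ by an element $(\varphi,\psi,id)\in\mathrm{Stab}_U(p^j)$, and regularity forces $\psi=id$ in all three cases, so the assigned value is unchanged. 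Since $(id,id,id)\in U$, the construction also yields at once the prescribed values $f(p^j)=y_j$ or $f(p^j)=x_j$, and $f(p^{j\,(\varphi_*,\psi_*,\rho_0)})=z_j$ for $j\in\mathcal{P}_1^U$.

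Next I would verify $f\in\mathfrak{F}_C^U$. That $f$ refines $C$ is immediate from $U$-consistency of $C$: for $p=p^{j\,(\varphi,\psi,id)}$ one gets $f(p)=\psi(x_j)\in\psi C(p^j)=C(p)$ (and likewise with $y_j$), while for $p$ in the $\rho_0$-part of an orbit in $\mathcal{P}_1^U$ one uses $C(p)=\psi\psi_*^{-1}C(p^{j\,(\varphi_*,\psi_*,\rho_0)})\ni\psi\psi_*^{-1}(z_j)$. The identity $f(p^{(\varphi,\psi,id)})=\psi f(p)$ for $(\varphi,\psi,id)\in U$ falls out of \eqref{action-e}, case by case.

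The delicate point — and, I expect, the main obstacle — is the reversal condition $f(p^{(\varphi,\psi,\rho_0)})\ne\psi f(p)$ for $(\varphi,\psi,\rho_0)\in U$, such elements existing precisely because $U\not\le S_h\times S_n\times\{id\}$; this is exactly where the defining inequalities of $A_C^1(p^j)$ and $A_C^2(p^j)$ are used. For $j\in\mathcal{P}_2^U$ and $p=p^{j\,(\varphi_1,\psi_1,id)}$, absorbing the $\rho_0$ through $(\varphi_j',\psi_j,\rho_0)\in\mathrm{Stab}_U(p^j)$ gives $f(p^{(\varphi,\psi,\rho_0)})=\psi\psi_1\psi_j(x_j)$, so the inequality becomes $\psi_j(x_j)\ne x_j$, which is the condition $x_j\in A_C^2(p^j)$. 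For $j\in\mathcal{P}_1^U$, the map $(\varphi,\psi,\rho_0)$ interchanges the two parts of the orbit, and a short computation reduces the inequality to $z_j\ne\psi_*(y_j)$, which is the condition $(y_j,z_j)\in A_C^1(p^j)$; conversely, this computation is precisely what motivates the definitions \eqref{A1C} and \eqref{A2C}. The hard part is bookkeeping: keeping the order of composition of permutations straight throughout so that the $\psi$'s cancel correctly, and making sure that $f(p^{(\varphi,\psi,\rho_0)})$ is always read off from the opposite part of the orbit (or, in the $\mathcal{P}_2^U$ case, only after the $\rho_0$ has been absorbed via the stabilizer), so that the required inequality is literally the one built into $A_C^1$ or $A_C^2$.

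Finally, uniqueness will follow from Proposition~\ref{rappresentanti2}: any $f'\in\mathfrak{F}_C^U$ realising the same prescribed values agrees with $f$ on the system of representatives $(p^j)_{j\in\mathcal{P}^U}$ and on $(p^{j\,(\varphi_*,\psi_*,\rho_0)})_{j\in\mathcal{P}_1^U}$, hence $f'=f$.
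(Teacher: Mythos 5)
Your proposal is correct and follows essentially the same route as the paper's proof: the same orbit-by-orbit construction split according to $\mathcal{P}_1^U$ and $\mathcal{P}_2^U$, well-definedness checked via regularity of $U$, the defining inequalities of $A^1_C(p^j)$ and $A^2_C(p^j)$ invoked exactly where you predict (to get the reversal inequality), and uniqueness via Proposition~\ref{rappresentanti2}. The only, immaterial, difference is on orbits in $\mathcal{P}_2^U$: the paper keeps representatives of both kinds and writes $f(p)=\psi\sigma_j\rho\sigma_j^{-1}(x_j)$ with $\sigma_j$ chosen so that $\psi_j=\sigma_j\rho_0\sigma_j^{-1}$, whereas you normalize every representative to third component $id$ by absorbing $\rho_0$ into the stabilizer element $(\varphi_j',\psi_j,\rho_0)$; since $\psi_j$ is an involution the two formulas define the same function.
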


\begin{proof}
Given $j\in\mathcal{P}_2^U$, consider the set $K^U(p^j)=\left\{\sigma\in S_n: \psi_j=\sigma \rho_0 \sigma^{-1}  \right\}$, where $\psi_j$ is defined in \eqref{psi-j}.
Since $U$ is regular, $K^U(p^j)$ is nonempty so that we can choose an element $\sigma_j$ in $K^U(p^j)$. Note that, for every $j\in\mathcal{P}_2^U$ and  $(\varphi,\psi,\rho)\in \mathrm{Stab}_U(p^j)$, we have $\psi=\sigma_j\rho\sigma_j^{-1}$.

Let us consider then $f\in\mathfrak{F}$ defined, for every $p\in\mathcal{P}$, as follows. Given $p\in \mathcal{P}$, consider the unique $j\in\mathcal{P}^U$ such that $p\in j$ and the nonempty set $U_p=\{(\varphi,\psi,\rho)\in U: p=p^{j\,(\varphi,\psi,\rho)}\}$. Pick $(\varphi,\psi,\rho)\in U_p$ and let
\begin{equation}\label{fp}
f(p)=
\left\{
\begin{array}{ll}
\psi(y_j)&\mbox{ if }j\in\mathcal{P}_1^U \mbox{ and }\rho=id\\
\vspace{-2mm}\\
\psi\psi_*^{-1}(z_j)&\mbox{ if }j\in\mathcal{P}_1^U \mbox{ and } \rho=\rho_0\\
\vspace{-2mm}\\
\psi \sigma_{j} \rho \sigma_{j}^{-1}(x_j)&\mbox{ if }j\in\mathcal{P}_2^U
\end{array}
\right.
\end{equation}
We need to prove that the value of $f(p)$ does not depend on the particular element chosen in $U_p$.
Indeed, let $(\varphi_1,\psi_1,\rho_1),(\varphi_2,\psi_2,\rho_2)\in U_p$ and recall that %\eqref{action-e} implies
$(\varphi_2^{-1}\varphi_1,\psi_2^{-1}\psi_1,\rho_2^{-1}\rho_1)\in \mathrm{Stab}_U(p^j)$.
\begin{itemize}
\item[-] If $j\in\mathcal{P}_1^U$, then $(\varphi_2^{-1}\varphi_1,\psi_2^{-1}\psi_1,\rho_2^{-1}\rho_1)\in \mathrm{Stab}_U(p^j)$ implies
 $\rho_2=\rho_1$ and $\psi_1=\psi_2$. As a consequence, if $\rho_1=\rho_2=id$, then
$\psi_1(y_j)=\psi_2(y_j)$, while if $\rho_1=\rho_2=\rho_0$, then
$(\psi_1\psi_*^{-1})(z_j)=(\psi_2\psi_*^{-1})(z_j)$.
\item[-] If  $j\in\mathcal{P}_2^U$, then $(\varphi_2^{-1}\varphi_1,\psi_2^{-1}\psi_1,\rho_2^{-1}\rho_1)\in \mathrm{Stab}_U(p^j)$ implies
$\psi_2^{-1}\psi_1= \sigma_j \rho_2^{-1}\rho_1 \sigma_j^{-1}$, that is, $\psi_1 \sigma_j \rho_1 \sigma_j^{-1}=\psi_2 \sigma_j \rho_2 \sigma_j^{-1}$,
as $\rho=\rho^{-1}$ for all $\rho\in\Omega$. Then we get
$
\psi_1 \sigma_j \rho_1 \sigma_j^{-1}(x_j)=\psi_2 \sigma_j \rho_2 \sigma_j^{-1}(x_j).
$
\end{itemize}

We show that $f$ satisfies all the desired properties.
First of all, since $U\le G$, we have $(id,id,id)\in U$ and thus the definition of $f$ immediately implies
 $f(p^j)=y_j$ and $f(p^{j\,(\varphi_*,\psi_*,\rho_0)})=z_j$ for all $j\in\mathcal{P}_1^U$,
and $f(p^j)=x_j$ for all $j\in\mathcal{P}_2^U$.

Let us now prove that $f\in\mathfrak{F}^U$.
Consider then $p\in\mathcal{P}$ and $(\varphi,\psi,\rho)\in U$ and show that if $\rho=id$, then $f(p^{(\varphi,\psi,\rho)})=\psi f(p)$, while if $\rho=\rho_0$, then  $f(p^{(\varphi,\psi,\rho)})\neq\psi f(p)$. Let $p=p^{j\,(\varphi_1,\psi_1,\rho_1)}$ for suitable $j\in\mathcal{P}^U$ and $(\varphi_1,\psi_1,\rho_1)\in U$.
\begin{itemize}
\item[-] If $j\in\mathcal{P}_1^U$ and $\rho_1=id$, then
$
f(p)=\psi_1 (y_j).
$
By \eqref{action-e},
if $\rho=id$, then
$
f(p^{(\varphi,\psi,\rho)})=f(p^{j\,(\varphi\varphi_1,\psi\psi_1,id)})=\psi\psi_1 (y_j)=\psi f(p),
$
while if $\rho=\rho_0$, then
$
f(p^{(\varphi,\psi,\rho)})=f(p^{j\,(\varphi\varphi_1,\psi\psi_1,\rho_0)})=\psi\psi_1 \psi_*^{-1}(z_j)\neq \psi\psi_1 (y_j)=\psi f(p),
$
since $z_j\neq \psi_*( y_j)$ because $(y_j,z_j)\in A_C^1(p^j)$.
\item[-] If $j\in\mathcal{P}_1^U$ and $\rho_1=\rho_0$, then
$
f(p)=\psi_1 \psi^{-1}_*(z_j).
$
By \eqref{action-e}, if $\rho=id$, then
$
f(p^{(\varphi,\psi,\rho)})=f(p^{j\,(\varphi\varphi_1,\psi\psi_1,\rho_0)})=\psi\psi_1 \psi^{-1}_*(z_j)=\psi f(p),
$
while if $\rho=\rho_0$, then
$
f(p^{(\varphi,\psi,\rho)})=f(p^{j\,(\varphi\varphi_1,\psi\psi_1,id)})=\psi\psi_1 (y_j)\neq \psi\psi_1 \psi^{-1}_*(z_j)=\psi f(p),
$
since $z_j\neq \psi_*( y_j)$ because $(y_j,z_j)\in A_C^1(p^j)$.
\item[-] If $j\in\mathcal{P}_2^U$, then
$
f(p)=\psi_1 \sigma_{j} \rho_1 \sigma_{j}^{-1} (x_j)
$
and, by \eqref{action-e},
$
f(p^{(\varphi,\psi,\rho)})=f(p^{j\,(\varphi\varphi_1,\psi\psi_1,\rho\rho_1)})=\psi\psi_1 \sigma_{j} \rho \rho_1\sigma_{j}^{-1} (x_j).
$
As a consequence, if $\rho=id$, we get $f(p^{(\varphi,\psi,\rho)})=\psi f(p) $.
If instead $\rho=\rho_0$, we have that  $f(p^{(\varphi,\psi,\rho)})\neq\psi f(p)$ if and only if
$
\psi\psi_1 \sigma_{j} \rho_0 \rho_1 \sigma_{j}^{-1} (x_j)\neq
\psi\psi_1 \sigma_{j} \rho_1 \sigma_{j}^{-1} (x_j)
$
if and only if
$
\sigma_{j} \rho_0  \sigma_{j}^{-1} (x_j)\neq
x_j.
$
However, the last relation holds true since $\sigma_{j} \rho_0  \sigma_{j}^{-1} =\psi_j$ and $\psi_j(x_j)\neq x_j$ because $x_j\in A_C^2(p^j)$.
\end{itemize}

Let us next prove that $f\in\mathfrak{F}_C$.
Consider then $p\in\mathcal{P}$ and show that  $f(p)\in C(p)$. Let $p=p^{j\,(\varphi_1,\psi_1,\rho_1)}$ for suitable $j\in\mathcal{P}^U$ and $(\varphi_1,\psi_1,\rho_1)\in U$.
\begin{itemize}
	\item[-] If $j\in\mathcal{P}_1^U$ and $\rho_1=id$, then $f(p)=\psi_1 (y_j)$ and, by the $U$-consistency of $C$,
$\psi_1 (y_j)\in\psi_1C(p^j)=C(p^{j\,(\varphi_1,\psi_1,id)})=C(p)$.
\item[-] If $j\in\mathcal{P}_1^U$ and $\rho_1=\rho_0$, then $f(p)=\psi_1\psi_*^{-1}(z_j)$ and, by \eqref{action-e} and the $U$-consistency of $C$,
\[
\psi_1\psi_*^{-1}(z_j)\in \psi_1\psi_*^{-1}C(p^{j\,(\varphi_*,\psi_*,\rho_0)})
\]
\[
=C\left((p^{j\,(\varphi_*,\psi_*,\rho_0)})^{(\varphi_1\varphi_*^{-1},\psi_1\psi_*^{-1},id)}\right)=C(p^{j\,(\varphi_1,\psi_1,\rho_0)})=C(p).
\]
\item[-] If $j\in\mathcal{P}_2^U$ and $\rho_1=id$, then $f(p)=\psi_1(x_j)$ and, by the $U$-consistency of  $C$,
$\psi_1(x_j)\in \psi_1 C(p^j)
=C(p^{j\,(\varphi_1,\psi_1,id)})=C(p)$.
\item[-] If $j\in\mathcal{P}_2^U$ and $\rho_1=\rho_0$, then let
 $(\varphi_2,\psi_2,\rho_0)\in U$ be such that $p^{j\,(\varphi_2,\psi_2,\rho_0)}=p^j$. By \eqref{action-e}, we have
\begin{equation}\label{useful-1}
p=p^{j\,(\varphi_1,\psi_1,\rho_0)}=(p^{j\,(\varphi_2,\psi_2,\rho_0)})^{(\varphi_1\varphi_2^{-1},\psi_1\psi_2^{-1},id)}=p^{j\,(\varphi_1\varphi_2^{-1},\psi_1\psi_2^{-1},id)}.
\end{equation}
Thus, $f(p)=\psi_1\psi_2^{-1}(x_j)$ and, by the $U$-consistency of  $C$,
$\psi_1\psi_2^{-1}(x_j)\in \psi_1\psi_2^{-1} C(p^j)
=C(p^{j\,(\varphi_1\varphi_2^{-1},\psi_1\psi_2^{-1},id)})=C(p)$.
\end{itemize}

Finally, in order to prove uniqueness, let $f'\in\mathfrak{F}^U_C$ such that
 $f'(p^j)=y_j$ and $f'(p^{j\,(\varphi_*,\psi_*,\rho_0)})=z_j$ for all $j\in\mathcal{P}_1^U$,
and $f'(p^j)=x_j$ for all $j\in\mathcal{P}_2^U$. Then $f,f'\in \mathfrak{C}^U$ realize $f(p^j)=f'(p^j)$ for all $j\in \mathcal{P}^U$ and $f(p^{j\,(\varphi_*,\psi_*,\rho_0)})=f'(p^{j\,(\varphi_*,\psi_*,\rho_0)})$ for all $j\in \mathcal{P}_1^U$. Hence, the thesis follows from Proposition \ref{rappresentanti2}.
\end{proof}

Let $U\le G$ be regular such that $U\not\le S_h\times S_n\times \{id\}$, $(p^j)_{j\in\mathcal{P}^U}\in\mathfrak{S}(U)$, $(\varphi_*,\psi_*,\rho_0)\in U$ and $C\in\mathfrak{C}^U$.
\begin{itemize}
\item[-] If $\mathcal{P}_2^U=\varnothing$, then let $\Psi_1: A^1_C\to \mathfrak{F}^U_C$ be the function which associates with every  $(y_j,z_j)_{j\in\mathcal{P}_1^U}\in A^1_C$, the unique $f\in\mathfrak{F}^U_C$ defined in Proposition \ref{fu-min-ex}.
	\item[-] If $\mathcal{P}_1^U=\varnothing$, then let
$\Psi_2: A^2_C\to \mathfrak{F}^U_C$ be the function which associates with every  $(x_j)_{j\in \mathcal{P}_2^U}\in A^2_C$, the unique $f\in\mathfrak{F}^U_C$ defined in Proposition \ref{fu-min-ex}.
\item[-] If $\mathcal{P}_1^U\neq \varnothing$ and $\mathcal{P}_2^U\neq \varnothing$, then let
$\Psi_3:A^1_C\times A^2_C \to \mathfrak{F}^U_C$ be the function
which associates with every  $((y_j,z_j)_{j\in\mathcal{P}_1^U},(x_j)_{j\in \mathcal{P}_2^U})\in A^1_C\times A^2_C$, the unique $f\in\mathfrak{F}^U_C$ defined in Proposition \ref{fu-min-ex}.
\end{itemize}
Of course, $\Psi_1$, $\Psi_2$ and $\Psi_3$ depend on $U$, $(p^j)_{j\in\mathcal{P}^U}$, $(\varphi_*,\psi_*,\rho_0)$ and $C$ but we do not emphasize that dependence in the notation. Note also that $\Psi_1$, $\Psi_2$ and $\Psi_3$  are injective.

\begin{theorem}\label{fu-min-count2}
Let $U\le G$ be regular such that $U\not\le S_h\times S_n\times \{id\}$, $(p^j)_{j\in\mathcal{P}^U}\in\mathfrak{S}(U)$, $(\varphi_*,\psi_*,\rho_0)\in U$ and $C\in\mathfrak{C}^U$.
Then
\[
\mathfrak{F}^U_C=
\left\{
\begin{array}{ll}
\Psi_1(A^1_C) & \mbox{if }\mathcal{P}_2^U=\varnothing\\
\Psi_2(A^2_C) & \mbox{if }\mathcal{P}_1^U=\varnothing\\
\Psi_3(A^1_C\times A^2_C)& \mbox{if }\mathcal{P}_1^U\neq\varnothing\mbox{ and }\mathcal{P}_2^U\neq\varnothing\\
\end{array}
\right.
\]
Moreover, we have that
\[
|\mathfrak{F}^U_C|=
\left\{
\begin{array}{ll}
|A^1_C| & \mbox{if }\mathcal{P}_2^U=\varnothing\\
|A^2_C| & \mbox{if }\mathcal{P}_1^U=\varnothing\\
|A^1_C|\cdot |A^2_C|& \mbox{if }\mathcal{P}_1^U\neq\varnothing\mbox{ and }\mathcal{P}_2^U\neq\varnothing\\
\end{array}
\right.
\]
and  $\mathfrak{F}^U_C\neq\varnothing$.
\end{theorem}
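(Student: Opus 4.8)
The plan is to show that the relevant map among $\Psi_1,\Psi_2,\Psi_3$ is a bijection onto $\mathfrak{F}^U_C$; the description of $\mathfrak{F}^U_C$ as its image, the cardinality formula, and the non-emptiness will then all follow. Each $\Psi_i$ is injective, since evaluating it at a tuple recovers that tuple via the equalities $f(p^j)=y_j$, $f(p^{j\,(\varphi_*,\psi_*,\rho_0)})=z_j$, $f(p^j)=x_j$ of Proposition \ref{fu-min-ex}, and by that same proposition $\Psi_i$ takes values in $\mathfrak{F}^U_C$. Hence everything reduces to surjectivity: given $f\in\mathfrak{F}^U_C$, one must show that $f$ is $\Psi_i$ applied to the tuple of its own values at the representatives $p^j$ (and, for $j\in\mathcal{P}_1^U$, at the translates $p^{j\,(\varphi_*,\psi_*,\rho_0)}$).

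So fix $f\in\mathfrak{F}^U_C$. The key step is to verify that this tuple belongs to the domain of $\Psi_i$, that is, to $A^1_C$, to $A^2_C$, or to $A^1_C\times A^2_C$ according to the case. Since $f\in\mathfrak{F}_C$, the relevant values of $f$ automatically lie in the appropriate sets $C(p^j)$ and $C(p^{j\,(\varphi_*,\psi_*,\rho_0)})$, so only the ``$\neq$'' clauses in \eqref{A1C} and \eqref{A2C} need checking, and these come from $f\in\mathfrak{F}^U$ through \eqref{functionU}. For $j\in\mathcal{P}_1^U$, instantiating \eqref{functionU} at $p=p^j$ with $(\varphi_*,\psi_*,\rho_0)\in U$ yields $f(p^{j\,(\varphi_*,\psi_*,\rho_0)})\neq\psi_* f(p^j)$, which is exactly the condition defining $A^1_C(p^j)$. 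For $j\in\mathcal{P}_2^U$, Lemma \ref{collect}(i) and the regularity of $U$ give $\mathrm{Stab}_U(p^j)\not\le S_h\times\{id\}\times\{id\}$, so \eqref{psi-j} produces an element $(\varphi,\psi_j,\rho_0)\in\mathrm{Stab}_U(p^j)$; instantiating \eqref{functionU} at $p=p^j$ with this element and using $(p^j)^{(\varphi,\psi_j,\rho_0)}=p^j$ gives $f(p^j)\neq\psi_j f(p^j)$, the condition defining $A^2_C(p^j)$. Once the tuple is known to lie in the domain, the uniqueness clause of Proposition \ref{fu-min-ex} forces $f$ to coincide with $\Psi_i$ of that tuple. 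This proves surjectivity, hence bijectivity, whence $|\mathfrak{F}^U_C|$ equals the size of the domain of $\Psi_i$, i.e. $|A^1_C|$, $|A^2_C|$, or $|A^1_C|\cdot|A^2_C|$ as claimed.

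It remains to see that $\mathfrak{F}^U_C\neq\varnothing$, for which it suffices that the domain of $\Psi_i$ is non-empty; being a product of the sets $A^1_C(p^j)$ ($j\in\mathcal{P}_1^U$) and $A^2_C(p^j)$ ($j\in\mathcal{P}_2^U$), this holds as soon as each factor is non-empty. For $A^2_C(p^j)$: if $|C(p^j)|\ge 2$ then, since $\psi_j$ is conjugate to $\rho_0$ and hence has at most one fixed point, $C(p^j)$ contains some $x$ with $\psi_j(x)\neq x$; if $|C(p^j)|=1$, then applying the $U$-consistency condition \eqref{sccU2} for $C$ at $p=p^j$ with the stabilizer element $(\varphi,\psi_j,\rho_0)$ (recall $(p^j)^{(\varphi,\psi_j,\rho_0)}=p^j$) gives $C(p^j)\neq\psi_j C(p^j)$, so the unique element of $C(p^j)$ lies in $A^2_C(p^j)$. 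For $A^1_C(p^j)$: if at least one of $C(p^j)$, $C(p^{j\,(\varphi_*,\psi_*,\rho_0)})$ has more than one element, one chooses a pair $(y,z)$ avoiding the single forbidden value $z=\psi_*(y)$; if both are singletons, \eqref{sccU2} at $p=p^j$ with $(\varphi_*,\psi_*,\rho_0)$ again forces $z\neq\psi_*(y)$. In every case the witness is produced from $C$ alone, so this reasoning is not circular.

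I expect the only point demanding care to be the $\mathcal{P}_2^U$ case: one must exhibit a stabilizer element of $p^j$ whose third component is $\rho_0$ and recognise, through \eqref{psi-j} and the uniqueness in Lemma \ref{collect}(ii), that its second component is forced to be the distinguished permutation $\psi_j$ occurring in \eqref{A2C}. Everything else---the regularity of $U$, the clean description of $\mathcal{P}_1^U$ and $\mathcal{P}_2^U$ in Lemma \ref{collect}(i), and the construction of the $U$-consistent resolute refinements themselves---is already in place from the notion of regular subgroup and from Proposition \ref{fu-min-ex}, so beyond the bookkeeping above no new idea is required.
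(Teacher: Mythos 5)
Your proof is correct and follows essentially the same route as the paper's: both establish that the relevant $\Psi_i$ is a bijection onto $\mathfrak{F}^U_C$ by reading off the tuple of values of an arbitrary $f\in\mathfrak{F}^U_C$ at the representatives (and, for $j\in\mathcal{P}_1^U$, at the translates $p^{j\,(\varphi_*,\psi_*,\rho_0)}$), checking via \eqref{functionU} that this tuple lies in the domain, and invoking the uniqueness in Proposition \ref{fu-min-ex}; both then reduce non-emptiness to the non-emptiness of each factor $A^1_C(p^j)$, $A^2_C(p^j)$ using the fixed-point structure of conjugates of $\rho_0$ together with condition \eqref{sccU2}. The only cosmetic difference is that you prove $A^2_C(p^j)\neq\varnothing$ directly by cases on $|C(p^j)|$, applying \eqref{sccU2} to a stabilizer element of $p^j$, whereas the paper argues by contradiction with a case split on the parity of $n$; the ingredients are identical.
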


\begin{proof}Assume first that $\mathcal{P}_1^U$ and  $\mathcal{P}_2^U$ are both nonempty.
Consider $f\in \mathfrak{F}^{U}_{C}$ and note that
\[
\left((f(p^j),f(p^{j\,(\varphi_*,\psi_*,\rho_0)}))_{j\in\mathcal{P}_1^U},(f(p^j))_{j\in \mathcal{P}_2^U}\right)\in A^1_{C}\times A^2_{C},
\]
and
\[
\Psi_3\left((f(p^j),f(p^{j\,(\varphi_*,\psi_*,\rho_0)}))_{j\in\mathcal{P}_1^U},(f(p^j))_{j\in \mathcal{P}_2^U}\right)=f.
\]
Then $\Psi_3$ is bijective, so that $|\mathfrak{F}^U_C|=|A^1_C\times A^2_C|=|A^1_C|\cdot| A^2_C|$. We complete the proof showing that,
for every $j\in \mathcal{P}^U_1$, $A^1_C(p^j)\neq \varnothing$ and that, for every $j\in \mathcal{P}^U_2$, $A^2_C(p^j)\neq \varnothing$.
The fact that $A^1_C(p^j)$ has at least one element for all $j\in \mathcal{P}^U_1$ is an immediate consequence of the $U$-consistency of $C$. Assume now that  there exists $j\in \mathcal{P}^U_2$ such that $A^2_C(p^j)= \varnothing$ and consider $\psi_j$ as defined in \eqref{psi-j}.
Then, for every $x\in C(p^j)$, we have that $\psi_j(x)=x$. On the other hand, being  $\psi_j$ a conjugate of $\rho_0$, it has the same number of fixed points of $\rho_0$. Thus, if $n$ is even, then $\psi_j$ has no fixed point and  so $C(p^j)= \varnothing$, a contradiction. If instead $n$ is odd, we have that $\psi_j$ has a unique fixed point $x_0$ and so $C(p^j)=\{x_0\}$. Pick  $(\varphi_1,\psi_1,\rho_0)\in \mathrm{Stab}_U(p^j).$ By the regularity of $U$, we get $\psi_1=\psi_j$  and thus $\psi_1(x_0)=x_0.$ It follows that $\psi_1^{-1}C(p^j)=C(p^j).$ Now, by \eqref{action-e} and the $U$-consistency of $C$, we finally deduce that
\[
C(p^{j\,(\varphi_*,\psi_*,\rho_0)})
=C\left((p^{j\,(\varphi_1,\psi_1,\rho_0)})^{(\varphi_*\varphi_1^{-1},\psi_*\psi_1^{-1}, id)}\right)
\]
\[
=C(p^{j	\,(\varphi_*\varphi_1^{-1},\psi_*\psi_1^{-1}, id)})
=\psi_*\psi_1^{-1}C(p^j)=
\psi_*C(p^j),
\]
which contradicts \eqref{sccU2}.

The case $\mathcal{P}_1^U=\varnothing$ and the case $\mathcal{P}_2^U=\varnothing$ are similar and then omitted.
\end{proof}

\section{Some applications}\label{appl}

In this section we mainly apply the general theory about the concept of consistency to study the properties of anonymity and neutrality with respect to partitions as well as the immunity to the reversal bias. 
In particular, we   describe some concrete situations involving the classical {\sc scc}s considered in Section \ref{B-C-K-rb}.
In what follows we denote by $\mathfrak{C}^*$ the set $\{Par,Bor,Cop,Kem,Min\}$.

\subsection{Proof of Theorem \ref{super-super}}

We are going to prove Theorem \ref{super-super} by proving Theorem \ref{super-condensed} below. Indeed, on the basis of the notation introduced along the paper and Proposition \ref{U,V-corr}, Theorem  \ref{super-condensed} is nothing but a rephrase of Theorems \ref{super} and \ref{super-super}. More precisely, statement $(i)$ refers to Theorem \ref{super}, while statement $(ii)$ refers to 
Theorem \ref{super-super}.
Since statement $(i)$ has been already proved in Section \ref{main}, we are left with proving statement $(ii)$ only.

\begin{theorem}\label{super-condensed}
Let $Y=\{Y_j\}_{j=1}^s$ be a partition of $H$, $Z=\{Z_k\}_{k=1}^t$ be a partition of $N$ with $|Z_{k^*}|=\max\{ |Z_k| \}_{k=1}^t$, and $R\le \Omega$.
\begin{itemize}
	\item[(i)]If  $\mathfrak{F}^{V(Y)\times W(Z)\times R}_{Par}\neq \varnothing$, then \eqref{reg-eq}.
	\item[(ii)]If  $C\in\mathfrak{C}^{V(Y)\times W(Z)\times R}$ and \eqref{reg-eq}, then
	$\mathfrak{F}^{V(Y)\times W(Z)\times R}_{C}\neq\varnothing$.
\end{itemize}
\end{theorem}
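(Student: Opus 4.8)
The plan is to derive both statements from results already established, using the correspondence between the group-theoretic language of consistency and the classical properties of {\sc scc}s. The crucial preliminary remark is that $R\le\Omega$ admits only the two values $R=\{id\}$ and $R=\Omega$, because $\Omega$ has order two and its only subgroups are $\{id\}$ and $\Omega$. Consequently $|R|\in\{1,2\}$, so that $\mathrm{lcm}(|Z_{k^*}|!,|R|)$ equals $|Z_{k^*}|!$ when $R=\{id\}$ and equals $\mathrm{lcm}(|Z_{k^*}|!,2)$ when $R=\Omega$; hence \eqref{reg-eq} coincides with \eqref{f1} in the former case and with \eqref{f2} in the latter.

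For statement $(i)$, write $U=V(Y)\times W(Z)\times R$ and unwind what $\mathfrak{F}^U_{Par}\neq\varnothing$ means. Any element of $\mathfrak{F}^U_{Par}$ is a resolute refinement of $Par$ — hence resolute and efficient — that is moreover $U$-consistent. By Proposition \ref{U,V-corr} and the translation discussed immediately before it, $U$-consistency with $R=\{id\}$ is exactly $Y$-anonymity together with $Z$-neutrality, while $U$-consistency with $R=\Omega$ adds immunity to the reversal bias. First I would handle $R=\{id\}$: nonemptiness of $\mathfrak{F}^U_{Par}$ then supplies a resolute, efficient, $Y$-anonymous, $Z$-neutral {\sc scc}, and Theorem \ref{super}$(i)$ gives \eqref{f1}, which is \eqref{reg-eq} in this case. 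Then I would handle $R=\Omega$: here $\mathfrak{F}^U_{Par}$ supplies such a {\sc scc} that is in addition immune to the reversal bias, so Theorem \ref{super}$(ii)$ gives \eqref{f2}, which is \eqref{reg-eq}. Since these are the only two values of $R$, statement $(i)$ follows.

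For statement $(ii)$, I would first invoke Theorem \ref{regular}, which states that $U=V(Y)\times W(Z)\times R$ is regular precisely when \eqref{reg-eq} holds. Under the hypotheses $C\in\mathfrak{C}^U$ and \eqref{reg-eq}, the subgroup $U$ is therefore regular and $C$ is $U$-consistent, so Theorem \ref{general} applies directly: a regular $U$ ensures that every $U$-consistent {\sc scc} admits a resolute $U$-consistent refinement, which is exactly the conclusion $\mathfrak{F}^U_C\neq\varnothing$.

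Because each half reduces to an already-proved theorem, the only delicate points are bookkeeping: identifying $U$-consistency with the right bundle of classical properties, and matching the two subgroups $R\le\Omega$ with the two shapes of the $\mathrm{lcm}$ appearing in \eqref{reg-eq}. The genuinely substantive work lives inside Theorem \ref{general}, obtained from Theorems \ref{fu-min-2} and \ref{fu-min-count2}, where the real obstacle — explicitly producing a $U$-consistent resolute refinement — is resolved by the constructions of Propositions \ref{f-fu} and \ref{fu-min-ex}; their case split according to whether $U\le S_h\times S_n\times\{id\}$ mirrors the dichotomy $R=\{id\}$ versus $R=\Omega$ met in statement $(i)$.
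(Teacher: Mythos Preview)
Your proof is correct and follows essentially the same approach as the paper: statement $(i)$ is reduced to Theorem \ref{super} via the translation between $U$-consistency and the classical properties (the paper simply says $(i)$ ``has been already proved in Section \ref{main}''), and statement $(ii)$ is obtained from regularity of $U$ via Theorem \ref{regular} together with the existence results, the only cosmetic difference being that the paper cites Theorems \ref{fu-min-2} and \ref{fu-min-count2} directly rather than their corollary Theorem \ref{general}.
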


\begin{proof}
$(ii)$  Let $U=V(Y)\times W(Z)\times R$ and $C\in\mathfrak{C}^{U}$. By Theorem \ref{regular}, \eqref{reg-eq} implies that $U$ is regular. If 
$R=\{id\}$, then we apply Theorem \ref{fu-min-2} and we get $\mathfrak{F}^{U}_{C}\neq\varnothing$. If instead $R=\Omega$, then we apply Theorem  \ref{fu-min-count2} and we get again $\mathfrak{F}^{U}_{C}\neq\varnothing$.
\end{proof}

 \subsection{Five individuals and three alternatives}\label{example53}

Consider five individuals ($h=5$) and three alternatives ($n=3$) so that $G=S_5\times S_3\times \Omega$.  Since $\gcd(5,3!)=1$, Theorem \ref{regular} guarantees that $G$ is regular. Thus, by Theorem \ref{fu-min-count2},  for  every  $C\in \mathfrak{C}^{G} $, we have $\mathfrak{F}^G_{C}\neq \varnothing$ and the elements in $\mathfrak{F}^G_{C}$ can be explicitly build and  count. Here we determine $\mathfrak{F}^G_{C}$ for $C\in\mathfrak{C}^*$. Observe that,  being $n=3$, Proposition \ref{moulin2} guarantees that $\mathfrak{C}^*\subseteq \mathfrak{C}_{Par}^{G} $.

In order to start the concrete construction of the elements in $\mathfrak{F}^G_{C}$, we first need a system of representatives of the $G$-orbits. We choose the system $p^1,\ldots,p^{26}$ built in
Bubboloni and Gori (2015, Section 7.2) and, for every $j\in\{1,\ldots,26\}$, we denote the orbit of $p^j$ by $j$. Thus, $\mathcal{P}^G=\{1,\ldots,26\}$ and
a simple but tedious computation shows that
\[
\mathcal{P}^G_1=\{2,4,5,7,8,10,11,12,14,15,16,18,21,22,23,24\},
\]
\[
\mathcal{P}^G_2=\{1,3,6,9,13,17,19,20,25,26\}.
\]
Next, we choose $(\varphi_*,\psi_*,\rho_0)=(id,id,\rho_0)$ 
and, for every $C\in\mathfrak{C}^*$, we compute $C(p^j)$ for all $j\in \mathcal{P}^G$, and $C(p^{j\,(id,id,\rho_0)})$
for all $j\in \mathcal{P}^G_1$. Doing that, we find out that $Kem(p^j)=Min(p^j)$  for all $j\in \mathcal{P}^G$, as well as $Kem(p^{j\,(id,id,\rho_0)})=Min(p^{j\,(id,id,\rho_0)})$ for all $j\in \mathcal{P}^G_1$. Thus, Proposition \ref{rappresentanti2}  gives $Kem=Min$.
% while the others {\sc scc} are two by two different.
Next we compute, for every $j\in \mathcal{P}_1^G$, the set $A^1_{C}(p^j)$ defined in \eqref{A1C} and,  for every $j\in \mathcal{P}_2^G$, the set $A^2_{C}(p^j)$ defined in  \eqref{A2C}.
 Those computations are summarized in Tables 1 and 2, where
\begin{equation}\label{Delta}
\Delta=\{1,2,3\}_*^2=\{(1,2),(2,1),(1,3),(3,1),(2,3),(3,2)\}.
\end{equation}
 From those tables, by Theorem \ref{fu-min-count2}, we immediately get every element in $\mathfrak{F}^G_{C}$ for all $C\in\mathfrak{C}^*$. Indeed, once decided, for every $j\in\{1,\ldots,26\}$, one of the entries corresponding to $p^j$, we exactly identify an element in $\mathfrak{F}^G_{C}$ just using  the definition  given in \eqref{fp}.
 
 In particular, we deduce
\[
\mathfrak{F}^G_{Kem}=\mathfrak{F}^G_{Min},    \quad \mathfrak{F}^G_{Kem}\subsetneq  \mathfrak{F}^G_{Bor}\subsetneq \mathfrak{F}^G_{Par}, \quad \mathfrak{F}^G_{Kem}\subsetneq   \mathfrak{F}^G_{Cop}\subsetneq \mathfrak{F}^G_{Par},\quad \mathfrak{F}^G_{Bor}\not\subseteq   \mathfrak{F}^G_{Cop},\quad \mathfrak{F}^G_{Cop}\not\subseteq   \mathfrak{F}^G_{Bor}
\]
and
 \[|\mathfrak{F}^G_{Par}|=2^{20} 3^{14},\quad |\mathfrak{F}^G_{Bor}|=8,\quad |\mathfrak{F}^G_{Cop}|=4,\quad |\mathfrak{F}^G_{Kem}|=|\mathfrak{F}^G_{Min}|=2.\]
\begin{table}[ht]
\centering
\begin{tabular}{l|c|c|c|c|c|llllllllllllllllllll}
  & $A^1_{Par}(p^j)$ &  $A^1_{Bor}(p^j)$ & $A^1_{Cop}(p^j)$, $A^1_{Kem}(p^j)$, $A^1_{Min}(p^j)$ \\
	\hline
$p^2$&    (1,3), (2,3) &  (1,3) & (1,3) \\
$p^4$&    (1,3), (2,1), (2,3)	&  (1,3) & (1,3)  \\
$p^5$	&    (1,3),  (2,3) & (1,3) & (1,3)  \\
$p^7$	&    (1,3), (2,1), (2,3)& (2,3) & (1,3) \\
$p^8$	&    $\Delta$& (1,3) & (1,3)   \\
$p^{10}$	&     (1,3), (2,1), (2,3)	& (1,3) & (1,3)  \\
$p^{11}$	&   $\Delta$ 	& (1,3) & (1,3) \\
$p^{12}$	&   $\Delta$	& (1,3), (2,3) & (1,3)  \\
$p^{14}$ &    $\Delta$ & (2,3) & (2,3)  \\
$p^{15}$	&    (1,2), (1,3),  (2,3) & (1,3) & (1,3)  \\
$p^{16}$	&    $\Delta$& (2,3) & (2,3)  \\
$p^{18}$&    $\Delta$	& (2,1), (2,3) & (2,1) \\
$p^{21}$&    $\Delta$	& (2,3) & (2,3)   \\
$p^{22}$	&    $\Delta$ & (1,3) & (1,3) \\
$p^{23}$&    $\Delta$	& (1,3) & (1,3)  \\
$p^{24}$ &   $\Delta$& (1,3) & (1,3)  \\
\hline
\end{tabular}
\caption{Computation of $A^1_{C}(p^j)$ with $C\in\mathfrak{C}^*$ and $j\in\mathcal{P}^G_1$.}
\end{table}

\begin{table}[ht]
\centering
\begin{tabular}{l|c|c|c|c|c|lllllllllllllllllllll}
  & $A^2_{Par}(p^j)$ &  $A^2_{Bor}(p^j)$ & $A^2_{Cop}(p^j)$ & $A^2_{Kem}(p^j)$, $A^2_{Min}(p^j)$ \\
	\hline
$p^1$& 1 &  1 & 1&1\\
$p^3$& 1,\;3	&   1 & 1&1\\
$p^6$& 1,\;3	& 1 & 1&1 \\
$p^9$& 1	& 1 & 1&1 \\
$p^{13}$& 1,\;3	&1 & 1&1\\
$p^{17}$& 1,\;2	&2 & 2&2\\
$p^{19}$& 1,\;2&	1 & 1,\;2&1\\
$p^{20}$& 1,\;3	&1 & 1&1\\
$p^{25}$& 1,\;3  &1,\;3 & 1,\;3&1,\;3\\
$p^{26}$& 2,\;3	&2 & 2&2\\
\hline
\end{tabular}
\caption{Computation of $A^2_{C}(p^j)$ with $C\in \mathfrak{C}^*$ and $j\in\mathcal{P}^G_2$.}
\end{table}
Note that the two $G$-consistent refinements of $Kem$ depend only on which alternative between 1 and 3 is associated with the following preference profile
\[
p^{25}=
\left[
\begin{array}{ccccc}
 1 & 1 & 3 & 2 & 3\\
 2 & 2 & 2 & 3 & 1\\
 3 & 3 & 1 & 1 & 2\\
\end{array}
\right]
\]
The choice of $3$ might be considered more appropriate since the majority of individuals prefer 3 to 1.

\subsection{Committees with a distinguished individual}

We have observed in Section \ref{main} that, if individual $i$ has a special role in the committee, then $Y=\{\{i\},H\setminus\{i\}\}$ 
is a natural partition of individuals  to deal with. 
Let $C\in \mathfrak{C}$ and consider $C^{i}, C_i\in \mathfrak{C}$ defined, for every $p\in\mathcal{P}$, by
\[
\begin{array}{l}
C^{i}(p)=\{x\in C(p): \forall y\in C(p),\, x\succeq_{p_i} y \},\\
\vspace{-2mm}\\
C_i(p)=\{x\in C(p): \forall y\in C(p),\, y\succeq_{p_i} x \}.
\end{array}
\]
Of course, $C^i$ and $C_i$ are resolute refinements of $C$.
Note that $C^i$ is consistent with interpreting individual $i$ as the president of the committee who has the power to break ties according to his/her own preferences, while $C_i$ does not seem to have any natural interpretation.

If $C$ is $Y$-anonymous [neutral], it is easily checked that $C^i$ and $C_i$ are both $Y$-anonymous [neutral]. If instead $C$ is immune to the reversal bias, it is not generally guaranteed that $C^i$ and $C_i$ are immune to the reversal bias too. Indeed, consider, for instance, $(h,n)=(5,4)$ and the Minimax {\sc scc}. Recall that, by Proposition \ref{minimax}, $Min$ is immune to the reversal bias. Given now $p, \hat{p}\in \mathcal{P}$  defined by
 \[
p=
\left[
\begin{array}{ccccc}
4&4&4&2&1\\
2&3&1&3&2\\
3&1&2&1& 3\\
1&2&3&4&4
\end{array}
\right],\quad 
\hat{p}=
\left[
\begin{array}{ccccc}
1&4&4&2&2\\
2&3&1&3&2\\
3&1&2&1&3\\
4&2&3&4&1
\end{array}
\right]
\]
we have that $Min(p)=Min(\hat{p})=\{4\}$ and  $Min(p^{(id,id,\rho_0)})=Min(\hat{p}^{(id,id,\rho_0)})=\{1,3,4\}.$ Then we have
\[
Min^5(p)=Min^5(p^{(id,id,\rho_0)})=Min_5(\hat{p})=Min_5(\hat{p}^{(id,id,\rho_0)})=\{4\},
\] 
so that $Min^5$ and $Min_5$ suffers the reversal bias. However, note that, due to Theorem \ref{super-super}, we know that $Min$ surely admits 
a resolute refinement which is $\{\{5\},H\setminus\{5\}\}$-anonymous, neutral and immune to the reversal bias.

%Note also that the same example shows that, more generally, if $Y=\{\{i\},H\setminus \{i\}\}$ with $i\in H$ is a partition of $H$ and $Z$ is a partition of $N$, then $C\in \mathfrak{C}^{V(Y)\times W(Z)\times \Omega}$  implies $C^i,C_i\in \mathfrak{F}^{V(Y)\times W(Z)\times \{id\}}$ but does not necessarily imply $C^i,C_i\in \mathfrak{F}^{V(Y)\times W(Z)\times \Omega}$.

Remarkably, if $C$ satisfies a suitable stronger version of immunity to the reversal bias, $C^i$ and $C_i$ can be proved to be immune to the reversal bias. Accordingly to Bubboloni and Gori (2016) we say that $C\in \mathfrak{C}$ is immune to the reversal bias of type $3$ if, for every $p\in \mathcal{P}$, $C(p)\cap C(p^{(id,id,\rho_0)})\neq\varnothing$  implies $C(p)= N$. Note that if $C$ is resolute then the definitions of immunity to the reversal bias and immunity to the reversal bias of type $3$ coincide. 

\begin{proposition}\label{type3} Let $i\in H$ and $C\in \mathfrak{C}$ be immune to the reversal bias of type $3$. Then $C^i$ and $C_i$ are  immune to the reversal bias.
\end{proposition}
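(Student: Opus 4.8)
The plan is to argue by contrapositive, exploiting the fact that $C^i$ and $C_i$ are resolute, so for them ``immune to the reversal bias'' coincides with ``immune to the reversal bias of type $3$''. Fix $i\in H$ and suppose $C^i$ were not immune to the reversal bias. Then there is $p\in\mathcal{P}$ and $x\in N$ with $C^i(p)=C^i(p^{(id,id,\rho_0)})=\{x\}$; writing $p^{\rho_0}$ for $p^{(id,id,\rho_0)}$ as in Section~\ref{B-C-K-rb}, this says $x\in C(p)$ and $x\in C(p^{\rho_0})$, so $x\in C(p)\cap C(p^{\rho_0})$, whence this intersection is nonempty. By hypothesis $C$ is immune to the reversal bias of type $3$, so $C(p)=N$. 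Since $\rho_0$ reverses every individual preference, one checks directly that $C(p^{\rho_0})$ likewise equals $N$: indeed $p^{\rho_0}\in\mathcal{P}$ and the type~$3$ condition is invariant under replacing $p$ by $p^{\rho_0}$ because $(p^{\rho_0})^{\rho_0}=p$ (using $|\rho_0|=2$ and Proposition~\ref{action-l}(i)), so $N=C(p)\cap C(p^{\rho_0})\subseteq C(p^{\rho_0})$ forces $C(p^{\rho_0})=N$.

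The key step is then to pin down what $C^i$ does when $C(p)=N$. From the definition $C^i(p)=\{x\in C(p):\forall y\in C(p),\ x\succeq_{p_i}y\}$, when $C(p)=N$ this is exactly the singleton consisting of the $p_i$-top alternative, i.e. the unique $x$ with $\mathrm{rank}_{p_i}(x)=1$. Similarly $C^i(p^{\rho_0})$ is the singleton consisting of the $p^{\rho_0}_i$-top alternative. But $p^{\rho_0}_i=\rho_0^{-1}$-reverses $p_i$, so $\mathrm{rank}_{p^{\rho_0}_i}(y)=\rho_0(\mathrm{rank}_{p_i}(y))=n-\mathrm{rank}_{p_i}(y)+1$ for every $y$ (this is the identity $\mathrm{rank}_{q\rho_0}(x)=\rho_0(\mathrm{rank}_q(x))$ recorded in Section~\ref{preference-relations}). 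Hence the $p^{\rho_0}_i$-top alternative is the $p_i$-\emph{bottom} alternative, which is distinct from the $p_i$-top alternative since $n\ge 2$. Therefore $C^i(p)\ne C^i(p^{\rho_0})$, contradicting the choice of $p$. This proves $C^i$ is immune to the reversal bias.

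For $C_i$ the argument is identical after swapping ``top'' and ``bottom'': $C_i(p)=\{x\in C(p):\forall y\in C(p),\ y\succeq_{p_i}x\}$ picks out the $p_i$-bottom element of $C(p)$, so when $C(p)=C(p^{\rho_0})=N$ we get that $C_i(p)$ is the $p_i$-bottom alternative while $C_i(p^{\rho_0})$ is the $p^{\rho_0}_i$-bottom, i.e. the $p_i$-top, alternative, and these differ since $n\ge 2$. I do not anticipate a genuine obstacle here; the only point requiring a little care is the bookkeeping that the type~$3$ hypothesis applied at $p$ also yields $C(p^{\rho_0})=N$, which is where $|\rho_0|=2$ and Proposition~\ref{action-l}(i) (so that $(p^{\rho_0})^{\rho_0}=p^{(id,id,\rho_0^2)}=p$) are used.
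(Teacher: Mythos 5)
Your proof is correct and takes essentially the same route as the paper's: assume $C^i(p)=C^i(p^{(id,id,\rho_0)})=\{x\}$, use type-$3$ immunity to force $C(p)=C(p^{(id,id,\rho_0)})=N$, and then observe that $x$ would have to be both the $p_i$-top and the $p_i$-bottom alternative, impossible for $n\ge 2$ (the paper phrases this last step via antisymmetry of $p_i$ rather than via ranks, but it is the same argument). The only slip is the displayed chain $N=C(p)\cap C(p^{(id,id,\rho_0)})\subseteq C(p^{(id,id,\rho_0)})$, which as literally written presupposes the conclusion; your surrounding justification --- apply the type-$3$ hypothesis to the profile $p^{(id,id,\rho_0)}$, whose reversal is $p$ and whose outcome meets $C(p)$ in $x$ --- is the correct one and is what the paper uses implicitly.
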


\begin{proof} 
Assume by contradiction that there exists $p\in \mathcal{P}$ and $x^*\in N$ such that
\[
C^i(p)=C^i(p^{(id,id,\rho_0)})=\{x^*\}.
\]
Then, in particular, $x^*\in C(p)\cap C(p^{(id,id,\rho_0)})$ and, since $C$ is immune to the reversal bias of type $3$, we have $C(p)=C(p^{(id,id,\rho_0)})= N$. Being $x^*\in C^i(p)$ we then have $x^*\succeq_{p_i} y$ for all $y\in N.$ On the other hand, being  $x^*\in C^i(p^{(id,id,\rho_0)})$ we also have $x^*\succeq_{p_i\rho_0} y$, that is, $y\succeq_{p_i}x^*$ for all $y\in N$. Since $p_i$ is antisymmetric, we then get that $x^*$ is the only element in $N$, against $n\geq 2.$
An analogous argument works for $C_i$.
\end{proof}

\begin{corollary}\label{bor3} Let $i\in H$ and  $Y=\{\{i\},H\setminus\{i\}\}$. Then $Bor^i$, $Bor_i$, $Cop^i$ and $Cop_i$ are efficient, $Y$-anonymous, neutral and immune to the reversal bias.
\end{corollary}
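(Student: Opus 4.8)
The plan is to deduce Corollary \ref{bor3} directly from Proposition \ref{type3} together with facts already established in the paper. First I would recall that, by Proposition \ref{ean}, $Bor$ and $Cop$ are efficient, anonymous and neutral; in particular they are $Y$-anonymous and neutral for the given partition $Y=\{\{i\},H\setminus\{i\}\}$. Since every refinement of an efficient {\sc scc} is efficient, and since (as remarked in Section \ref{appl}, just before Proposition \ref{type3}) $C^i$ and $C_i$ are $Y$-anonymous whenever $C$ is $Y$-anonymous and neutral whenever $C$ is neutral, it follows immediately that $Bor^i$, $Bor_i$, $Cop^i$, $Cop_i$ are all efficient, $Y$-anonymous and neutral. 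So the only nontrivial point is immunity to the reversal bias.

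For that, the key step is to observe that $Bor$ and $Cop$ are not merely immune to the reversal bias (Proposition \ref{others}) but in fact immune to the reversal bias of type $3$. This should follow from the stronger statements in Bubboloni and Gori (2016) — indeed, the proof of Proposition \ref{others} cites their Proposition 3 for the Borda and Copeland correspondences, and the natural strengthening there gives that whenever $Bor(p)\cap Bor(p^{(id,id,\rho_0)})\neq\varnothing$ one must have $Bor(p)=N$, and similarly for $Cop$. I would state this as the substantive input (citing Bubboloni and Gori (2016)), or alternatively reprove it quickly: if $x\in Bor(p)\cap Bor(p^{(id,id,\rho_0)})$, then $x$ maximises the Borda score both in $p$ and in the top-down reversal of $p$; since the Borda score of $x$ in $p^{(id,id,\rho_0)}$ equals $h(n-1)$ minus its Borda score in $p$, the two maximisation conditions together force every alternative to have the same Borda score as $x$, i.e. $Bor(p)=N$. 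An analogous (somewhat more delicate) argument handles $Cop$ using the antisymmetry of the Copeland comparison under reversal. Once type-$3$ immunity is in hand, Proposition \ref{type3} applies verbatim and yields that $Bor^i$, $Bor_i$, $Cop^i$ and $Cop_i$ are immune to the reversal bias.

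Combining the two paragraphs, all four correspondences are efficient, $Y$-anonymous, neutral and immune to the reversal bias, which is the claim. The main obstacle I anticipate is establishing — or correctly citing — the type-$3$ immunity of $Cop$: the Borda case reduces to a clean linear-algebra observation about scores summing to a constant under reversal, but the Copeland case requires checking that a shared winner between a profile and its reversal forces a complete tie, which needs the parity/majority structure of $w_p$ to be handled carefully (and in particular uses that $w_{p^{(id,id,\rho_0)}}(x,y)=w_p(y,x)$, already noted in the proof of Proposition \ref{others}). Everything else is bookkeeping on top of results already in the excerpt.
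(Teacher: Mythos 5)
Your proposal is correct and follows essentially the same route as the paper: the paper likewise combines Proposition \ref{ean}, the remark that refinements of efficient {\sc scc}s are efficient and that $C^i$, $C_i$ inherit $Y$-anonymity and neutrality, the type-$3$ immunity of $Bor$ and $Cop$ cited from Proposition 3 of Bubboloni and Gori (2016), and Proposition \ref{type3}. Your optional direct argument for the Borda case (the score identity $B_{p^{(id,id,\rho_0)}}(x)=h(n-1)-B_p(x)$ forcing a complete tie) is a correct, self-contained substitute for the citation, though the paper itself does not reprove it.
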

\begin{proof} We know that $Bor$ and $Cop$ are efficient, anonymous and neutral. By Proposition 3 in Bubboloni and Gori (2016), we have that $Bor$ and $Cop$ are immune to the reversal bias of type $3$. Thus, the thesis follows applying Proposition \ref{type3} and recalling that any refinement of an efficient {\sc scc} is efficient.
\end{proof}

\subsubsection{Three individuals and three alternatives}\label{example33} 

Consider now three individuals ($h=3$) and three alternatives ($n=3$) so that $G=S_3\times S_3\times \Omega$.  Since $\gcd(3,3!)\neq 1$, $G$ is not regular and, by Theorem \ref{super}, there exists no resolute, efficient, anonymous and neutral {\sc scc}. Thus, $\mathfrak{F}_C^G=\varnothing$ for all $C\in\mathfrak{C}^*$. 

Consider then the partition $Y=\{\{1,2\},\{3\}\}$ of $H$ distinguishing individual $3$ and the partition $Z=\{N\}$ of $N$ and define 
$U=V(Y)\times W(Z)\times \Omega$. By Theorem \ref{regular}, $U$ is regular so that, by Theorem \ref{fu-min-count2},  for  every  $C\in \mathfrak{C}^{U} $, we have that $\mathfrak{F}^U_{C}\neq \varnothing$ and all the elements in $\mathfrak{F}^U_{C}$ can be explicitly built and counted.
Here we determine $\mathfrak{F}^U_{C}$ for all $C\in\mathfrak{C}^*$. Observe that,  being $n=3$, Proposition \ref{moulin2} guarantees that $\mathfrak{C}^*\subseteq \mathfrak{C}^{G} $ so that $\mathfrak{C}^*\subseteq \mathfrak{C}^{U}$.

As a system of representatives of the $U$-orbits, we consider the system $p^1,\ldots,p^{13}$ built in
Bubboloni and Gori (2015, Section 7.1) and, for every $j\in\{1,\ldots,13\}$, we denote the orbit of $p^j$ by $j$. Thus, $\mathcal{P}^U=\{1,\ldots,13\}$ and a simple computation shows that
\[
\mathcal{P}^U_1=\{3,4,8,9,10,11,12,13\}, \quad \mathcal{P}^U_2=\{1,2,5,6,7\}.
\]
Next, we choose $(\varphi_*,\psi_*,\rho_0)=(id,id,\rho_0)$ 
and, for every $C\in\mathfrak{C}^*$, we compute $C(p^j)$ for all $j\in \mathcal{P}^U$, and $C(p^{j\,(id,id,\rho_0)})$
for all $j\in \mathcal{P}^U_1$. Doing that, we find out that $Cop(p^j)=Kem(p^j)=Min(p^j)$  for all $j\in \mathcal{P}^U$, as well as $Cop(p^{j\,(id,id,\rho_0)})=Kem(p^{j\,(id,id,\rho_0)})=Min(p^{j\,(id,id,\rho_0)})$ for all $j\in \mathcal{P}^U_1$. Thus, by Proposition \ref{rappresentanti2}, we have  $Cop=Kem=Min$. In particular, those {\sc scc}s admit the same resolute refinements.

Next we compute, for every $j\in \mathcal{P}_1^G$, the set $A^1_{C}(p^j)$ defined in \eqref{A1C} and,  for every $j\in \mathcal{P}_2^U$, the set $A^2_{C}(p^j)$ defined in  \eqref{A2C}. Note that here $A^2_{C}(p^j)=C(p^j)\setminus \{2\}$  because $\psi_j=\rho_0=(13)$ for all $j\in  \mathcal{P}_2^U.$
Those computations are summarized in Tables 3 and 4, where $\Delta$ is defined by \eqref{Delta}.
From those tables, by Theorem \ref{fu-min-count2}, we immediately get each element in $\mathfrak{F}^U_{C}$ for all $C\in \mathfrak{C}^*$ as described in Section \ref{example53}. 

 In particular, we deduce
\[
\mathfrak{F}^U_{Cop}=\mathfrak{F}^U_{Kem}=\mathfrak{F}^U_{Min},    \quad \mathfrak{F}^G_{Cop}\subsetneq  \mathfrak{F}^G_{Bor}\subsetneq \mathfrak{F}^G_{Par}.
\]
and
 \[|\mathfrak{F}^U_{Par}|=2^{10} 3^{5},\quad |\mathfrak{F}^G_{Bor}|=8,\quad|\mathfrak{F}^U_{Cop}|= |\mathfrak{F}^U_{Kem}|=|\mathfrak{F}^U_{Min}|=2.\]

Note that the two $U$-consistent refinements of $Cop$ depend only on which alternative between 1 and 3 is associated with the preference profile
\[
p^6=
\left[
\begin{array}{ccc}
 2 & 3 & 1  \\
 3 & 1 &  2 \\
 1 & 2 &  3 \\
\end{array}
\right]
\]
Observe that, $p^6$ gives rise to the classical Condorcet cycle. 
%where the symbol $|$ is introduced to put in evidence the elements of the partition $Y$ and the preferences of the president as the third column of the matrix.
%The ambiguity of choice is then quite natural  and imposes to select one alternative in $C(p^6)\setminus \{2\}=\{1,3\}.$ 
By Corollary \ref{bor3}, $Cop^3$ and $Cop_3$ belong to $\mathfrak{F}^U_{Cop}$. It is also clear that $Cop^3\neq Cop_3$, because a direct computation shows that $Cop^3(p^6)=1$ and  $Cop_3(p^6)=3.$ Thus, we have that
$\mathfrak{F}^U_{Cop}=\{Cop^3, Cop_3\}$.

\begin{table}[ht]
\centering
\begin{tabular}{l|c|c|c|c|c|llllllllllllllllllll}
  & $A^1_{Par}(p^j)$ &  $A^1_{Bor}(p^j)$ & $A^1_{Cop}(p^j)$, $A^1_{Kem}(p^j)$, $A^1_{Min}(p^j)$ \\
	\hline
$p^3$&    (1,2), (1,3), (3,2) & (1,2),  (3,2) & (3,2) \\
$p^4$&    (1,2),  (1,3)	&  (1,2) & (1,2)  \\
$p^8$	&    (1,3),  (2,3) & (1,3) & (1,3)  \\
$p^9$	&    (1,3),(1,2), (2,1), (2,3)& (1,3),(2,3) & (1,3) \\
$p^{10}$	& $\Delta$    	& (3,2) & (3,2)  \\
$p^{11}$	&   $\Delta$ 	& (1,2) & (1,2) \\
$p^{12}$	&   $\Delta$	& (1,3) & (1,3)  \\
$p^{13}$ &  (1,3), (2,1), (2,3)   & (2,3) & (2,3)  \\
\hline
\end{tabular}
\caption{Computation of $A^1_{C}(p^j)$ with $C\in\mathfrak{C}^*$ and $j\in\mathcal{P}^U_1$.}
\end{table}

\begin{table}[ht]
\centering
\begin{tabular}{l|c|c|c|c|c|lllllllllllllllllll}
  & $A^2_{Par}(p^j)$ &  $A^2_{Bor}(p^j)$, $A^2_{Cop}(p^j)$,  $A^2_{Kem}(p^j)$, $A^2_{Min}(p^j)$ \\
	\hline
$p^1$& 1 &  1 \\
$p^2$& 1,\;3	&   3 \\
$p^5$& 1,\;3	& 1  \\
$p^6$& 1,\;3	& 1,\;3  \\
$p^{7}$& 1  &1 \\
\hline
\end{tabular}
\caption{Computation of $A^2_{C}(p^j)$ with $C\in\mathfrak{C}^*$ and $j\in\mathcal{P}^U_2$.}
\end{table}

\appendix

\section{Appendix}

Given $U$ be a regular subgroup of $G$, we characterize here when $\mathcal{P}_1^U\neq\varnothing$ or $\mathcal{P}_2^U\neq\varnothing$.
Note that, obviously, if $U$ is a regular subgroup of $G$ with  $U\leq S_h\times S_n\times \{id\}$, then $\mathcal{P}_2^U=\varnothing$ and $\mathcal{P}^U=\mathcal{P}_1^U\neq \varnothing.$ An example of such a kind of subgroup is $S_h\times\{id\}\times\{id\}.$ On the other hand, there surely exist regular subgroups of $G$ not contained in $S_h\times S_n\times \{id\}$ like, for instance, $\{id\}\times\{id\}\times \Omega$ and $\langle (\varphi,\rho_0,\rho_0)\rangle$, where $\varphi\in S_h$ is fixed. The following proposition is about subgroups of that type.

\begin{proposition}\label{p1p2}
Let $U\le G$ be regular and such that $U\not\le S_h\times S_n\times \{id\}$. %Then the following statements hold true:
\begin{itemize}
	\item[(i)] If $n= 2$, then $\mathcal{P}^U_1\neq \varnothing$ if and only if there exists a partition $Y=\{Y_1,Y_2\}$ of $H$ such that $\Gamma \cap V(Y)=\varnothing$, where
	$\Gamma=\{\varphi\in S_h: (\varphi,\rho_0,\rho_0)\in U\}$ and $V(Y)=\left\{\varphi\in S_h : \varphi(Y_1)=Y_1\right\}$.
	\item[(ii)] If $n\ge 3$, then $\mathcal{P}_1^U \neq\varnothing$.
	\item[(iii)] $\mathcal{P}_2^U \neq \varnothing$ if and only if there exists $(\varphi,\psi,\rho_0)\in U$ such that $\psi$ is a conjugate of $\rho_0$.
\end{itemize}
\end{proposition}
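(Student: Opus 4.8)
The plan is to recast all three claims in terms of stabilizers of preference profiles. By Lemma~\ref{collect}(i) (and the orbit–invariance of the condition), for a regular $U$ one has $\mathcal{P}_1^U\neq\varnothing$ exactly when some $p\in\mathcal{P}$ satisfies $\mathrm{Stab}_U(p)\leq S_h\times\{id\}\times\{id\}$, and $\mathcal{P}_2^U\neq\varnothing$ exactly when some $p\in\mathcal{P}$ admits a stabilizer element with third component $\rho_0$; regularity moreover forces the middle component of any such element to be a conjugate of $\rho_0$. This already yields the forward implication of (iii): if $j\in\mathcal{P}_2^U$ and $p\in j$, then $\mathrm{Stab}_U(p)$, not lying in $S_h\times\{id\}\times\{id\}$, meets $S_h\times\{\psi_*\}\times\{\rho_0\}$ with $\psi_*$ conjugate to $\rho_0$, and the corresponding element belongs to $U$. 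For the converse of (iii), given $(\varphi,\psi,\rho_0)\in U$ with $\psi=\sigma\rho_0\sigma^{-1}$, I would take the constant profile $p=(q^*,\dots,q^*)$ with $q^*=f_2(\sigma)$: the identities $f_2^{-1}(\psi q)=\psi f_2^{-1}(q)$ and $f_2^{-1}(q\rho_0)=f_2^{-1}(q)\rho_0$ give $\psi q^*\rho_0=q^*$, hence $p^{(\varphi,\psi,\rho_0)}=p$, so $\mathrm{Stab}_U(p)\not\leq S_h\times\{id\}\times\{id\}$ and $\mathcal{P}_2^U\neq\varnothing$.

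For (i), with $n=2$ the permutation $\rho_0$ is the unique transposition of $S_2$, so regularity leaves only stabilizer elements of the form $(\varphi,\rho_0,\rho_0)$ with $\varphi\in\Gamma$; and since $n=2$ forces $\rho_0 q\rho_0=q$ for every $q\in\mathcal{L}(N)$, such an element fixes $p$ iff $p_{\varphi^{-1}(i)}=p_i$ for all $i$, that is, iff $\varphi$ preserves the partition $Y$ of $H$ determined by the two preference classes $\{i:p_i=q_1\}$ and $\{i:p_i=q_2\}$. Hence $\mathcal{P}_1^U\neq\varnothing$ iff there is a profile whose associated partition $Y$ (possibly the trivial one $\{H\}$, which is covered precisely when $\Gamma=\varnothing$) satisfies $\Gamma\cap V(Y)=\varnothing$, exactly the stated condition; conversely, a partition $Y$ with $\Gamma\cap V(Y)=\varnothing$ produces such a profile by assigning $q_1$ or $q_2$ to each individual according to its block.

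The real work is (ii): proving $\mathcal{P}_1^U\neq\varnothing$ whenever $n\geq 3$. Here I would exhibit an explicit profile $p$ no stabilizer element of which has the form $(\varphi,\psi,\rho_0)$ with $\psi$ conjugate to $\rho_0$ — which, by regularity, suffices. The point is the extra rigidity available for $n\geq 3$, encoded in the fact that $C_{S_n}(\rho_0)$ is a proper subgroup of $S_n$: the transposition $\tau=(12)$ then neither centralizes $\rho_0$ nor makes $\tau\rho_0$ an involution. Setting $q=f_2(id)$ and $q'=f_2(\tau)$, I would take $p=(q,q',q',\dots,q')$ when $h\geq 3$ and $p=(q,q')$ when $h=2$. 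If $(\varphi,\psi,\rho_0)$ fixes $p$, then the involution $\theta_\psi\colon r\mapsto\psi r\rho_0$ of $\mathcal{L}(N)$ must permute the multiset $\{p_i:i\in H\}$. For $h\geq 3$ the unequal multiplicities of $q$ and $q'$ force $\theta_\psi(q)=q$, hence $\psi=\rho_0$; but then $\theta_\psi(q')=f_2(\rho_0\tau\rho_0)\notin\{q,q'\}$ because $\tau\notin C_{S_n}(\rho_0)$, contradicting multiplicity preservation. For $h=2$ the map $\theta_\psi$ would have to fix both $q$ and $q'$ — impossible, as that would force $\rho_0=\tau\rho_0\tau^{-1}$ — or to swap them, which forces $\psi=\tau\rho_0$, not conjugate to $\rho_0$, against regularity.

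I expect the stabilizer reformulation and the observation that $C_{S_n}(\rho_0)\subsetneq S_n$ for $n\geq 3$ to carry the conceptual weight; the fiddly parts will be the separate treatment of $h=2$ in (ii), the explicit verification of the small-permutation facts about $(12)$ and $\rho_0$, and the degenerate constant-profile case in (i).
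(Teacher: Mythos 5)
Your proof is correct. Parts (i) and (iii) follow essentially the same route as the paper: the same reduction to stabilizers via Lemma \ref{collect}, the same witness profiles (the $\{id,\rho_0\}$-valued profile built from the blocks of $Y$ in (i), the constant profile $(\sigma,\dots,\sigma)$ in (iii)), and the same bookkeeping for the constant-profile case of (i), which you correctly identify with $\Gamma=\varnothing$. The only genuine divergence is the verification in (ii). You pick the same witness $p=(id,(12),\dots,(12))$ as the paper but argue through the bijection $r\mapsto\psi r\rho_0$ acting on the multiset of components, which forces a case split: for $h\ge 3$ a multiplicity count plus the fact that $(12)$ does not commute with $\rho_0$, and for $h=2$ the additional exclusion of the swap via the observation that $(12)\rho_0$ is not conjugate to $\rho_0$ --- a step that leans on the regularity of $U$. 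The paper instead notes that every component of $p^{(\varphi,\psi,\rho_0)}$ ranks the same alternative $\psi(n)$ first (because $(12)$ fixes $n$ when $n\ge 3$), whereas the components of $p$ disagree on the top alternative; this is uniform in $h$, needs no appeal to regularity at that point, and avoids the conjugacy computations. Your version is sound but does more work; the one small imprecision is calling $r\mapsto\psi r\rho_0$ an involution before knowing $\psi^2=id$, which is harmless since you only use that it permutes the components preserving multiplicities.
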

\begin{proof} $(i)$ First of all, note that $n=2$ gives $U\leq S_h\times S_2\times \Omega$, so that, for every $p\in\mathcal{P}$ and  $i\in H$, $p_i\in S_2=\{id,\rho_0\}.$
Since the only conjugate of $\rho_0$ in $S_2$ is $\rho_0$,
$U$ is regular if and only if, for every $p\in\mathcal{P}$, $\mathrm{Stab}_U(p)\subseteq \left(S_h\times \{id\}\times \{id\}\right)\cup \left( S_h\times \{\rho_0\}\times \{\rho_0\}\right).$

Assume that $Y=\{Y_1,Y_2\}$ is a partition of $H$ such that $\Gamma \cap V(Y)=\varnothing$ and prove that $\mathcal{P}^U_1\neq \varnothing$.
Consider $p\in\mathcal{P}$  defined by $p_i=id$ for all $i\in Y_1$, and  $p_i=\rho_0$ for all $i\in Y_2$. We prove that
$\mathrm{Stab}_U(p)\leq S_h\times \{id\}\times \{id\}$. Indeed, assume by contradiction that there exists $\varphi\in S_h$ such that $(\varphi,\rho_0,\rho_0)\in \mathrm{Stab}_U(p)$.
Then $\varphi\in \Gamma$, so that  $\Gamma \cap V(Y)=\varnothing$ guarantees the existence
of $i_1\in Y_1$ such that $\varphi(i_1)\in Y_2.$ Moreover, by $(\varphi,\rho_0,\rho_0)\in \mathrm{Stab}_U(p),$ we have $p_{\varphi(i)}=\rho_0 p_i\rho_0$ for all $i\in H.$ But, since $S_2$ is abelian and $\rho_0^2=id$, that gives $p_{\varphi(i)}= p_i$ for all $i\in H.$ In particular, $\rho_0 =p_{\varphi(i_1)}= p_{i_1}=id,$ a contradiction.

Assume now that, for every partition $Y=\{Y_1,Y_2\}$ of $H$, we have that $\Gamma \cap V(Y)\neq \varnothing$ and prove that $\mathcal{P}^U_1=\varnothing$. We need to show that, for every $p\in\mathcal{P}$, there exists $\varphi\in S_h$ such that $(\varphi,\rho_0,\rho_0)\in \mathrm{Stab}_U(p)$. First of all, note that $\Gamma \cap V(Y)\neq \varnothing$ implies $\Gamma\neq \varnothing$.
If $p\in\mathcal{P}$ is constant, that is $p_i=p_j$ for all $i,j\in H$, choose any $\varphi\in \Gamma$ and note that, being $S_2$ abelian and $\rho_0^2=id$, we have $p^{(\varphi,\rho_0,\rho_0)}_i=\rho_0p_{\varphi^{-1}(i)}\rho_0=p_{\varphi^{-1}(i)}=p_i.$ If instead $p\in\mathcal{P}$ is not constant, consider the partition $Y=\{Y_1,Y_2\}$ of $H$, where $Y_1=\{i\in H: p_i=id\}$ and $Y_2=\{i\in H: p_i=\rho_0\}$. By assumption, there exists $\varphi\in S_h$ such that $(\varphi,\rho_0,\rho_0)\in U$ and $\varphi(Y_1)=Y_1$, so that $\varphi(Y_2)=Y_2$ too. It follows that, for every $i\in H,$ we have $p^{(\varphi,\rho_0,\rho_0)}_{\varphi(i)}=\rho_0p_{i}\rho_0=p_{i}=p_{\varphi(i)},$
because $i$ and $\varphi(i)$  both belong either to $Y_1$ or to $Y_2.$ Thus, $(\varphi,\rho_0,\rho_0)\in \mathrm{Stab}_U(p)$.

$(ii)$ We have to exhibit $p\in\mathcal{P}$ such that $\mathrm{Stab}_U(p)\leq S_h\times\{id\}\times\{id\}$. Consider then $p\in\mathcal{P}$ defined by $p_1=id$ and $p_i=(12),$ for all $i\in H\setminus\{1\}$. In particular, we have $p_1(1)=1$ and $p_i(1)=2$ for all  $i\in H\setminus\{1\}$, so that in $p$ the individuals rank first different alternatives.
Given now $(\varphi,\psi,\rho_0)\in U$, we have that the preference profile $p^{(\varphi,\psi,\rho_0)}$ admits one component equal to $\psi\rho_0$ and $h-1$ components equal to $\psi(12)\rho_0$. Observe that $\psi\rho_0(1)=\psi(n)$ as well as, being $n\geq3$, $\psi(12)\rho_0(1)=\psi(12)(n)=\psi(n).$ Thus, in $p^{(\varphi,\psi,\rho_0)}$ each individual ranks first the same alternative $\psi(n)$, which implies $p^{(\varphi,\psi,\rho_0)}\neq p.$

$(iii)$ If there exists $(\varphi,\psi,\rho_0)\in U$ such that $\psi$ is a conjugate of $\rho_0$, then there exist $\sigma\in S_n$ such that $\psi=\sigma \rho_0\sigma^{-1}$. Consider  $p\in\mathcal{P}$ defined by $p_i=\sigma$ for all $i\in H$. Then, for every $i\in H$, we have that $p^{(\varphi,\psi,\rho_0)}_i=\sigma \rho_0 \sigma^{-1}\sigma \rho_0=\sigma=p_i$ so that $(\varphi,\psi,\rho_0)\in \mathrm{Stab}_U(p)$ and $\mathcal{P}_2^U \neq \varnothing$.
On the other hand, if $\mathcal{P}_2^U \neq \varnothing,$ then there exists $p\in\mathcal{P}$ and $(\varphi,\psi,\rho_0)\in \mathrm{Stab}_U(p)$. Thus, by the regularity of $U$, $\psi$ is a conjugate of $\rho_0$.
\end{proof}

We emphasize that all the situations described in Proposition \ref{p1p2} can really occur.
Indeed, if $U=\{id\}\times\{id\}\times \Omega$, then $\mathcal{P}^U_1\neq \varnothing$ and $\mathcal{P}^U_2= \varnothing$;
if $U=\langle ((12),\rho_0,\rho_0)\rangle$, then $\mathcal{P}^U_1\neq \varnothing$ and $\mathcal{P}^U_2\neq \varnothing$;
if $U=\langle (id,\rho_0,\rho_0)\rangle$ and $n=2$, then $\mathcal{P}^U_1= \varnothing$ and $\mathcal{P}^U_2\neq \varnothing$.

\vspace{8mm}

\noindent {\Large{\bf References}}

\vspace{2mm}

\noindent Bubboloni, D., Gori, M., 2014. Anonymous and neutral majority rules.
Social Choice and Welfare 43, 377-401.
\vspace{2mm}

\noindent Bubboloni, D., Gori, M., 2015. Symmetric majority rules. Mathematical Social Sciences 76, 73-86.
\vspace{2mm}

\noindent Bubboloni, D., Gori, M., 2016. On the reversal bias of the Minimax social choice correspondence.
Mathematical Social Sciences 81, 53-61.
\vspace{2mm}

\noindent Campbell, D.E., Kelly, J.S., 2011. Majority selection of one alternative from a binary agenda.
Economics Letters 110, 272-273.
\vspace{2mm}

\noindent Campbell, D.E., Kelly, J.S., 2013. Anonymity, monotonicity, and limited neutrality: selecting a single alternative
from a binary agenda. Economics Letters 118, 10-12.
\vspace{2mm}

\noindent Campbell, D.E., Kelly, J.S., 2015. The finer structure of resolute, neutral, and anonymous social choice correspondences. Economics Letters 132, 109-111.
\vspace{2mm}

\noindent Do\u gan, O., Giritligil, A.E., 2015. Anonymous and neutral social choice: existence results on resoluteness. Murat Sertel Center for Advanced Economic Studies, Working Paper Series 2015-01.
\vspace{2mm}

\noindent Fishburn, P.C., 1977. Condorcet social choice functions. SIAM Journal on Applied Mathematics 33, 469-489.
\vspace{2mm}

\noindent Jacobson, N., 1974. {\it Basic Algebra I }. W.H. Freeman and Company, New York.
\vspace{2mm}

\noindent Moulin, H., 1983. \textit{The strategy of social choice}. Advanced Textbooks in Economics, North Holland Publishing Company.
\vspace{2mm}

\noindent Moulin, H., 1988. Condorcet's principle implies the no-show paradox. Journal of Economic Theory 45, 53-64.
\vspace{2mm}

\noindent Saari, D.G., 1994.  {\it Geometry of voting}.  Studies in Economic Theory, Volume 3. Springer.
\vspace{2mm}

\noindent Saari, D.G., Barney, S., 2003. Consequences of reversing preferences. The Mathematical Intelligencer 25, 17-31.
\vspace{2mm}

\end{document}